\documentclass[11pt]{article}

\pdfoutput = 1

\usepackage{amsmath,amsfonts,amssymb}

\usepackage{geometry}

\usepackage{amsxtra}

\usepackage{array,dcolumn}
\usepackage{graphicx} \usepackage{color}

\usepackage[hyperfootnotes=true,
        pdffitwindow=true,
        plainpages=false,
        pdfpagelabels=true,
        pdfpagemode=UseOutlines,
        pdfpagelayout=SinglePage,
        hyperindex,]{hyperref}

\numberwithin{equation}{section} 


\usepackage{amsthm}

\theoremstyle{plain}
\newtheorem{theorem}{Theorem}[section]
\newtheorem{lemma}[theorem]{Lemma}
\newtheorem{corollary}[theorem]{Corollary}
\newtheorem{proposition}[theorem]{Proposition}

\theoremstyle{definition}

\newtheorem{remark}[theorem]{Remark}

\newcommand{\R}{\mathbb R}
\newcommand{\C}{\mathbb C}

\newcommand{\even}{\textup{even}}
\newcommand{\odd}{\textup{odd}}

\DeclareMathOperator{\Tr}{Tr} 

\DeclareMathOperator{\gO}{O}
\DeclareMathOperator{\gU}{U}
\DeclareMathOperator{\gUSp}{USp}

\newcommand{\MeijerG}[8][\bigg]{G^{{ #2 },{ #3 }}_{{ #4 },{ #5 }} #1( \begin{matrix} #6 \\ #7 \end{matrix}\, #1\vert\, #8 #1)}
\newcommand{\hypergeometric}[6][\bigg]{\,{}_{#2} F_{#3} #1( \begin{matrix} #4 \\ #5 \end{matrix}\, #1\vert\, #6 #1)}


\begin{document}

\title{\bfseries\Large Orthogonal and symplectic Harish-Chandra integrals \\ and matrix product ensembles}

\author{
Peter J.~Forrester\footnotemark[1], ~Jesper R. Ipsen\footnotemark[1], ~Dang-Zheng Liu\footnotemark[2] ~and Lun Zhang\footnotemark[3]
}

\renewcommand{\thefootnote}{\fnsymbol{footnote}}
\footnotetext[1]{Department of Mathematics and Statistics, ARC Centre of Excellence for Mathematical and Statistical Frontiers,
The University of Melbourne, Victoria 3010, Australia. E-mail: \{pjforr, jesper.ipsen\}@unimelb.edu.au}
\footnotetext[2]{Key Laboratory of Wu Wen-Tsun Mathematics, CAS, School of Mathematical Sciences, University of Science and Technology of China, Hefei 230026, P.R.~China. E-mail: dzliu@ustc.edu.cn}
\footnotetext[3] {School of Mathematical Sciences and Shanghai Key Laboratory for Contemporary Applied Mathematics, Fudan University, Shanghai 200433, P. R. China. E-mail: lunzhang@fudan.edu.cn}
\maketitle

\begin{abstract}
In this paper, we highlight the role played by orthogonal and symplectic Harish-Chandra integrals in the study of real-valued matrix product ensembles. By making use of these integrals and the matrix-valued Fourier-Laplace transform, we find the explicit eigenvalue distributions for  particular Hermitian anti-symmetric matrices and particular Hermitian anti-self dual matrices, involving both sums and products.
 As a consequence of these results, the eigenvalue probability density function of the random product structure $X_M \cdots  X_1( iA) X_1^T  \cdots X_M^T$, where each
$X_i$ is a standard real   Gaussian matrix, and $A$ is a real anti-symmetric matrix can be determined. For $M=1$ and $A$ the bidiagonal anti-symmetric matrix with 1's above the diagonal, this reclaims results of
Defosseux. For general $M$, and this choice of $A$, or $A$ itself a standard Gaussian anti-symmetric
matrix, the eigenvalue distribution is shown to coincide with that of the
  squared singular values
 for the product of certain complex Gaussian matrices first studied by Akemann et al. As
 a point of independent interest, we also include a self-contained diffusion equation derivation of the orthogonal and symplectic Harish-Chandra integrals.
\end{abstract}


\section{Introduction}
Let $A$ and $B$ be $N\times N$ Hermitian matrices with eigenvalues $\{a_j\}_{j=1}^N$ and $\{b_j\}_{j=1}^N$, respectively. For $U\in \gU(N)$, the matrix group of $N\times N$ complex unitary matrices, let $(U^\dagger dU)$ denote the invariant measure \cite{Hu97,DF17} normalised so that
$$\int (U^\dagger dU)=1,$$
where the superscript $^\dagger$ stands for the conjugate transpose. According to a result due to Harish-Chandra \cite{HC57} and Itzykson and Zuber \cite{IZ80}, we have
\begin{equation}\label{1}
\int_{\gU(N)} e^{\Tr(UAU^\dagger B)}\,(U^\dagger dU)=
\prod_{j=1}^{N}\Gamma(j)\,\frac{\det[e^{a_j b_k}]_{j,k=1}^N}{\Delta_N(a)\Delta_N(b)},
\end{equation}
where
\begin{equation}
\Delta_N(x)=\det[x_j^{k-1}]_{j,k=1}^N=\prod_{1\leq j<k\leq N} (x_k-x_j)
\end{equation}
is the Vandermonde determinant. This matrix integral formula is fundamental to the study of random matrices of the form $$H_0+tH,$$
where $H_0$, $H$ are Hermitian with $H_0$ fixed and $H$ a member of the GUE (Gaussian Unitary Ensemble); see e.g. \cite{DF06b}. It also underlies exact calculations relating to the class of
Wishart matrices $X^\dagger \Sigma X$, where $X$ is an $m \times n$ standard complex
Gaussian matrix and $\Sigma$ a fixed positive definite matrix \cite{BBP05}.

The result \eqref{1} is well known in random matrix theory. It turns out that from the general group integral evaluated in Harish-Chandra's paper other lesser known integration formulas of the type \eqref{1} can be inferred; see e.g. \cite{Ey07}. To state these, let $\gO(N)$ denote the matrix group of $N\times N$ real orthogonal matrices and let $\gUSp(2N)$ denote the matrix group of $2N\times 2N$ unitary symplectic matrices. For $R\in \gO(N)$ and $S\in \gUSp(2N)$, let $(R^T dR)$ and $(S^\dagger dS)$ denote the corresponding invariant measures with the superscript $^T$ being transpose, normalised to integrate to unit. Elements of the latter are elements of $\gU(2N)$, with the additional symmetry
$$UZ_{2N}U^T=Z_{2N},$$
where $Z_{2N}=\mathbb{I}_N\otimes\big[\begin{smallmatrix}0 &  1 \\ -1 & 0 \end{smallmatrix}\big]$.

Let $X$ and $Y$ be $N\times N$ anti-symmetric real matrices with non-zero eigenvalues $\{\pm ix_j\}_{j=1}^{[N/2]}$ and $\{\pm iy_j\}_{j=1}^{[N/2]}$ ($x_j,y_j>0$), respectively, where $[x]$ denotes the integer part of $x$. With $N=2m$ even, one has
\begin{equation}\label{2}
\int_{\gO(2m)} e^{\frac{1}{2}\Tr(XRYR^T)}\,(R^T dR)=c_m^\even \frac{\det[2\cosh (x_iy_j)]_{i,j=1}^m}{\Delta^\even _m(x)\Delta^\even _m(y)},
\end{equation}
while for $N=2m+1$ odd,
\begin{equation}\label{3}
\int_{\gO(2m+1)} e^{\frac{1}{2}\Tr(XRYR^T)}\,(R^T dR)=c_m^\odd \frac{\det[2\sinh (x_iy_j)]_{i,j=1}^m}{\Delta^\odd _m(x)\Delta^\odd _m(y)},
\end{equation}
with proportionality constants
\begin{equation}
c_m^\even =\prod_{j=1}^{m}\frac{\Gamma(2j-1)}{2}
\qquad\text{and}\qquad
c_m^\odd =\prod_{j=1}^{m}\frac{\Gamma(2j)}{2}.
\end{equation}
For notational simplicity, we have introduced even/odd modifications of the Vandermonde determinant defined by
\begin{equation}\label{4}
\Delta^\even _m(u)=\prod_{1\leq j<k\leq m}(u_k^2-u_j^2)
\qquad \text{and}\qquad
\Delta^\odd _m(u)=\prod_{j=1}^m u_j\prod_{1\leq j<k\leq m}(u_k^2-u_j^2),
\end{equation}
respectively.

Also, in the case that $X$ and $Y$ are $2N\times 2N$ anti-Hermitian matrices (i.e., $X^\dagger=-X$, $Y^\dagger=-Y$) with $2\times 2$ blocks of the form
\begin{equation}\label{q}
\begin{bmatrix}
z & w \\
-\overline{w} & \overline{z}
\end{bmatrix}, \quad z,w\in \C,
\end{equation}
--- this being the standard $2\times2$ matrix representation of a quaternion --- one has
\begin{equation}\label{5}
\int_{\gUSp(2N)} e^{\frac{1}{2}\Tr(XSY^\dagger S^\dagger)}\,(S^\dagger dS)=c_N^\odd \frac{\det[2\sinh (x_iy_j)]_{i,j=1}^N}{\Delta^\odd _N(x)\Delta^\odd _N(y)},
\end{equation}
which is identical to \eqref{3} with $m=N$.

In addition to the fundamental role played by the Harish-Chandra/ Itzkykson--Zuber integral
(\ref{1}) in the study of sums involving random matrices, recent findings have uncovered that this
same matrix integral also underpins the explicit calculation of the joint probability density function (PDF) for singular values of products of complex Gaussian random matrices \cite{AKW13,AIK13}. While it is well known that the
orthogonal and symplectic Harish-Chandra integrals relate to sums involving random matrices (see e.g.~\cite[Cor.~12]{KT04}), the role they play in relation to products is not part of the existing literature. However, there is a hint from the work of Defosseux \cite{De10}. There the random product matrices
\begin{equation}\label{5a}
X \Big ( \mathbb I_m \otimes \begin{bmatrix} 0 & i \\ -i & 0 \end{bmatrix} \Big ) X^T, \qquad
X \Big ( \Big ( \mathbb I_m \otimes \begin{bmatrix} 0 & i \\ -i & 0 \end{bmatrix} \Big )
\oplus [0] \Big ) X^T,
\end{equation}
with $X$ a real Gaussian matrix of sizes $2n \times 2m$ and $(2n+1) \times (2m+1)$
respectively, as well the random product
matrix
\begin{equation}\label{5b}
X \Big ( \mathbb I_m \otimes  \begin{bmatrix} 1 & 0 \\ 0 & -1 \end{bmatrix} \Big ) X^\dagger
\end{equation}
with $X$ a $2n \times 2m$ complex Gaussian matrix with each $2 \times 2$ block of the form
(\ref{q}) and thus representing a quaternion, were introduced and analysed. They were shown to be
closely related to the representation theory associated with the classical groups
${\rm O}(2m)$, ${\rm O}(2m+1)$ and ${\rm USp}(m)$ respectively. In the text below the proof of
Th.~5.3 in  \cite{De10}, it is said that `One easy way to compute the law of the eigenvalues of $X X^\dagger$ for $X$ an $n \times m$ complex Gaussian matrix is to use the Harish-Chandra/ Itzkykson--Zuber integral. But this method does not work for the other fields'. We will show that the Harish-Chandra matrix integrals for the orthogonal and symplectic groups do in fact allow the law of the eigenvalues of the random matrices \eqref{5a} and \eqref{5b} to be computed.

The even and odd cases of the orthogonal group, and the unitary symplectic group require separate treatment, although the working is very similar. For the even dimensional case of the orthogonal group, the eigenvalue PDF of the corresponding product matrix in (\ref{5a}) follows as a result of the following more general result, to be derived using (\ref{2}) in \S \ref{sec:3.2} below.

\begin{theorem}\label{prop4}
Let $\Omega$ be a $2n\times 2n$ Haar distributed real orthogonal matrix, $A$ and $B$ be Hermitian anti-symmetric matrices of size $2n\times 2n$ and $l \times l$, let $X$ be a $2n \times l$ real standard Gaussian matrix (i.e., each entry is a standard Gaussian),  and let $Y$ be a $2n\times 2n$ standard Gaussian Hermitian anti-symmetric matrix (i.e., each entry above the diagonal is $i$ times a standard real Gaussian, joint PDF is proportional to
$\exp(-\frac{1}{4} {\rm Tr} \, Y^2))$.
Define the  $2n\times 2n$ Hermitian anti-symmetric matrix
\begin{equation}\label{3.4}
M =\Omega A\Omega^T+XBX^T+\sqrt{t}Y
\end{equation}
with $t>0$ be a positive parameter.

Let the positive eigenvalues of $A$ and $M$ be denoted $\{a_j\}_{j=1}^n$ and $\{\lambda_j\}_{j=1}^n$; assume that they are distinct. The eigenvalue PDF of $M $ is equal to
\begin{equation}\label{3.5}
\frac{{\Delta}^\even_n (\lambda)}{{\Delta}^\even_n (a)}\det[g_k(\lambda_j)]_{j,k=1}^n, \qquad \lambda_1<\lambda_2<\cdots <\lambda_n,
\end{equation}
where ${\Delta}^\even_n (u)$ is given in \eqref{4} and
\begin{equation}\label{3.6}
g_k(\lambda)=\frac{2}{\pi}\int_0^\infty e^{-\frac{1}{2}tc^2}\frac{\cos (a_k c)\,\cos(\lambda c)}{\det(\mathbb{I}+icB)}\,dc.
\end{equation}
\end{theorem}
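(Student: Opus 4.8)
The plan is to characterise the law of $M$ through a matrix-valued Fourier transform and to evaluate that transform in two different ways: one exposes the unknown eigenvalue density, while the other is computed explicitly using the orthogonal Harish-Chandra integral \eqref{2}. First I would strip off the factor of $i$, writing $M=iH$ with $H=\Omega A_0\Omega^T+XB_0X^T+\sqrt t\,Y_0$ a real anti-symmetric matrix ($A=iA_0$, $B=iB_0$, $Y=iY_0$), whose positive frequencies are the sought eigenvalues $\lambda_j$. The key structural observation is that each of the three summands has a distribution invariant under conjugation $H\mapsto RHR^T$ by a fixed $R\in\gO(2n)$: this is clear for $\Omega A_0\Omega^T$ and for the Gaussian term $Y_0$, and for $XB_0X^T$ it follows since $RX$ is again standard Gaussian. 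Hence $M$ itself is $\gO(2n)$-invariant in law, its eigenvalues and (Haar) eigenvectors decouple, and the angular average needed below is exactly an instance of \eqref{2}.

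Next I would introduce, for a real anti-symmetric test matrix $K$ with positive frequencies $c_1,\dots,c_n$, the transform $\Phi(K)=\E\big[e^{\frac i2\Tr(HK)}\big]$. Because the three terms are independent, $\Phi$ factorises into three averages. The $\Omega$-average is the orthogonal Harish-Chandra integral \eqref{2} continued to an imaginary argument (so that $\cosh\to\cos$), producing $c_n^\even\,\det[2\cos(a_jc_l)]/\big(\Delta^\even_n(a)\Delta^\even_n(c)\big)$ up to a common factor. The $Y_0$-average is an elementary Gaussian integral that yields $\prod_l e^{-\frac12 t c_l^2}$, matching the Gaussian weight in \eqref{3.6}. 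The $X$-average is a Gaussian integral of $e^{\frac i2\Tr(KXB_0X^T)}$; carrying it out and diagonalising $K$ should give $\prod_l \det(\mathbb I+ic_lB)^{-1}$, which is precisely the denominator appearing in $g_k$. Assembling the three factors gives a closed form for $\Phi(K)$ as a function of $(c_1,\dots,c_n)$.

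On the other hand, conditioning on the eigenvalues of $M$ and performing the Haar average over eigenvectors via the same integral \eqref{2} expresses $\Phi(K)$ as a cosine transform of the unknown density, namely $\int p_M(\lambda)\,\det[2\cos(\lambda_jc_l)]/\Delta^\even_n(\lambda)\,d\lambda$ times the same common factor. Equating the two evaluations and cancelling the shared $c_n^\even/\big(i^{n(n-1)}\Delta^\even_n(c)\big)$ leaves an identity, valid for all $(c_1,\dots,c_n)\in(0,\infty)^n$, between a cosine transform of $p_M$ and $\det[2\cos(a_jc_l)]/\Delta^\even_n(a)\cdot\prod_l w(c_l)$ with $w(c)=e^{-tc^2/2}/\det(\mathbb I+icB)$. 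I would then invert using the cosine-transform orthogonality $\tfrac2\pi\int_0^\infty\cos(\lambda c)\cos(\rho c)\,dc=\delta(\lambda-\rho)$ for $\lambda,\rho>0$: applying $\prod_l\tfrac2\pi\int_0^\infty\cos(\rho_lc_l)\,dc_l$ to both sides, expanding both determinants by the Leibniz formula and using the ordering $\lambda_1<\cdots<\lambda_n$ to keep only the identity permutation on the left, collapses the left-hand side to $p_M(\rho)/\Delta^\even_n(\rho)$ and the right-hand side to $\det[g_k(\rho_j)]/\Delta^\even_n(a)$, which is \eqref{3.5}.

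The main obstacle I anticipate is careful bookkeeping rather than any single deep step: keeping track of the $i$-rotations that turn $\cosh$ into $\cos$ and guarantee convergence of both the defining Gaussian integrals and the final inverse transform, and verifying that the $X$-average produces exactly $\det(\mathbb I+ic_lB)^{-1}$ with the correct power (this is the one genuinely non-trivial linear-algebra computation, handled by diagonalising and pairing the $\pm$ frequencies of $K$). One must also justify the interchange of the $\lambda$- and $c$-integrations and the delta-function reduction, and check that all overall constants (the factors $c_n^\even$, the powers of $2$, and $i^{n(n-1)}$) cancel so that no spurious prefactor survives in \eqref{3.5}.
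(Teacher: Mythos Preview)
Your proposal is correct and follows essentially the same route as the paper: compute the matrix Fourier--Laplace transform of $M$ by factorising over the three independent summands (using \eqref{2} for the $\Omega$-average and Gaussian integration for the $X$- and $Y$-averages), then compute it a second time by averaging over the Haar eigenvectors of $M$ via \eqref{2}, and finally invert the resulting cosine transform. The only cosmetic difference is that the paper phrases the second evaluation as an ansatz \eqref{3.10} combined with Andr\'eief's identity, whereas you condition on the eigenvalues directly and invert term-by-term; these are equivalent, and your anticipated obstacles (the $i$-rotations, the verification that the $X$-integral gives $\prod_l\det(\mathbb I+ic_lB)^{-1}$, and the cancellation of constants) are exactly the points the paper works through.
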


It will be shown in \S \ref{sec:ele} that, as a consequence of this result, the eigenvalue PDF of
a more general random product structure can be determined.

\begin{corollary}\label{cor1even}
For $j=1,\dots,M$, let $X_j$ be a real standard Gaussian matrix of size
$2 (m + \nu_j) \times 2 (m + \nu_{j-1})$ with $\nu_j \ge \nu_{j-1}$ and $\nu_0 = 0$.
Define the random product matrix
\begin{equation}\label{eq:newproduct}
X_M \cdots X_1  \Big( \mathbb I_m \otimes \begin{bmatrix} 0 & i \\ -i & 0 \end{bmatrix} \Big )
X_1^T \cdots X_M^T.
\end{equation}
With $\{ \lambda_j \}_{j=1}^m$ denoting the positive eigenvalues, the variables
$x_j = \lambda_j^2/2^{2M}$ are distributed as for the eigenvalues of the product ensemble
\begin{equation}\label{product-even}
G_{2M}^\dagger\cdots G_1^\dagger G_1\cdots G_{2M},
\end{equation}
where $\{G_i\}$ are $m\times m$ complex random matrices with PDF proportional to
\begin{equation}\label{4.33}
\det (G_{2i}^\dagger G_{2i})^{\nu_{2i}-1/2} e^{-\Tr G_{2i}^\dagger G_{2i}}
\qquad \text{and} \qquad
\det (G_{2i-1}^\dagger G_{2i-1})^{\nu_{2i-1}} e^{-\Tr G_{2i-1}^\dagger G_{2i-1}}.
\end{equation}
\end{corollary}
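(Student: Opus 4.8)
The plan is to establish the corollary by iterating Theorem~\ref{prop4}, peeling off one Gaussian rectangular factor at a time and tracking how the eigenvalue PDF transforms through each step. The key observation is that the product structure in \eqref{eq:newproduct} can be built recursively: writing $M_j = X_j (\cdots) X_j^T$ for the partial product, each step multiplies the current Hermitian anti-symmetric matrix on the left and right by a single Gaussian rectangular factor. I would set up the induction so that after incorporating $X_k$, the positive eigenvalues of the current matrix have a PDF of the determinantal form appearing in \eqref{3.5}, with a sequence of kernel functions obtained by repeated application of the map $A \mapsto XAX^T$.

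\textbf{Main steps.} First I would identify the single-step transformation. Theorem~\ref{prop4}, specialized to the case $A=0$ and $t=0$ (so that only the $XBX^T$ term survives), gives the eigenvalue PDF of $XBX^T$ for $X$ a $2(m+\nu_j)\times 2(m+\nu_{j-1})$ real Gaussian matrix in terms of the functions $g_k$ of \eqref{3.6}; here the dependence on $B$ enters only through the factor $\det(\mathbb{I}+icB)^{-1}$. I would compute this determinant explicitly for the anti-symmetric matrix $B$ having the given block eigenvalue structure, reducing $\det(\mathbb{I}+icB)$ to a product $\prod_j(1+c^2 b_j^2)$ over the positive eigenvalues $b_j$ of $B$. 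Second, I would carry out the integral in \eqref{3.6} for the base case where the initial matrix is $\mathbb{I}_m\otimes\left[\begin{smallmatrix}0 & i \\ -i & 0\end{smallmatrix}\right]$ (all positive eigenvalues equal to $1$), evaluating the Fourier-type integral $\frac{2}{\pi}\int_0^\infty \cos(\lambda c)\,(1+c^2)^{-m}\,dc$ and its analogues, which should produce a modified-Bessel or Meijer $G$-function kernel. Third, I would iterate: each new factor $X_{k}$ contributes another factor to the integrand in the spirit of \eqref{3.6}, and the shifting sizes $\nu_j$ account for the exponents $\nu_{2i}-1/2$ and $\nu_{2i-1}$ appearing in \eqref{4.33}. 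The parity split (half-integer versus integer exponents) is exactly what one expects from alternately applying the even-orthogonal Harish-Chandra structure, which is where the $\pm 1/2$ shifts originate.

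\textbf{Matching to the complex product ensemble.} The final and decisive step is to recognize the resulting kernel as the correlation kernel for the squared singular values of the complex product ensemble \eqref{product-even}. For products of complex Gaussian (or more generally Wishart-type) matrices, the joint eigenvalue PDF of $G_{2M}^\dagger\cdots G_1^\dagger G_1\cdots G_{2M}$ is known in the literature (from the work of Akemann et al.\ referenced in the introduction) to be of polynomial-ensemble form with weight given by an $M$-fold Mellin--Barnes / Meijer $G$ convolution. I would show that, after the rescaling $x_j=\lambda_j^2/2^{2M}$, the functions $g_k$ produced by the iteration coincide with the biorthogonal functions of that ensemble, term by term. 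The change of variables $\lambda\mapsto\lambda^2/2^{2M}$ is precisely what converts a cosine/Bessel kernel in the $\lambda$ variable into the Meijer $G$ weight in the $x$ variable, and the factor $2^{2M}$ bookkeeps the $M$ doubling steps.

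\textbf{Main obstacle.} The hard part will be the bookkeeping in the iteration: verifying that each application of the transformation multiplies the kernel's integrand by exactly the right factor of $(1+c^2 b^2)^{-1}$ (with the correct multiplicities and index shifts) so that after $M$ steps one reproduces the specific exponents $\nu_{2i}-1/2$ and $\nu_{2i-1}$ in \eqref{4.33}. In particular, I expect the alternation between integer and half-integer exponents to require care, since it reflects whether a given doubling step acts on a matrix of even size via the even Harish-Chandra formula \eqref{2}; keeping the Jacobian factors and the $\Delta^\even$ normalizations consistent across all $M$ iterations, and then matching the final transformed variable against the known complex-ensemble weight, is the technically delicate heart of the argument.
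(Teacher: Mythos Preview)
Your overall plan---iterate the single-step law for $X B X^T$ and then match the resulting kernel against the Akemann--Ipsen--Kieburg complex product ensemble---is exactly the strategy the paper follows in \S\ref{sec:ele}. The paper, however, is more concrete about the mechanism: rather than iterating the Fourier integral \eqref{3.6} directly, it first evaluates that integral in closed form (Proposition~\ref{prop5} and Corollary~\ref{cor4}, giving the explicit density \eqref{3.18} in terms of $\det[e^{-\lambda_j/b_k}]$), and then observes that averaging over the previous step's eigenvalues via Andr\'eief turns the iteration into the Mellin-type convolution \eqref{f5}. That recursion is solved exactly as a Meijer $G$-function (Lemma~\ref{lemma-g-ele}), after which the biorthogonal polynomials $p_n$ and functions $\phi_n$ are computed explicitly and shown to coincide, under $x=\lambda^2/2^{2M}$, with the known Akemann et al.\ biorthogonal system \eqref{PnMeijerG}--\eqref{QkMeijerG}. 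Your plan to ``carry out the integral in \eqref{3.6} and iterate in the spirit of \eqref{3.6}'' would eventually have to pass through the same Mellin convolution; making that step explicit is what turns a plausible outline into an actual proof.

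One point in your proposal is a genuine misconception and would lead you astray in the bookkeeping you flag as the main obstacle. You write that the alternation between integer and half-integer exponents in \eqref{4.33} reflects ``alternately applying the even-orthogonal Harish-Chandra structure.'' It does not: every one of the $M$ iterations uses the even formula \eqref{2}, since all intermediate matrices are of even size $2(m+\nu_j)$. The half-integers arise purely algebraically at the very end, when the $M$-fold Meijer $G$-function with bottom parameters $2\nu_M,\dots,2\nu_2,2\nu_1+j$ is rewritten, via the duplication formula $\Gamma(2z)=2^{2z-1}\pi^{-1/2}\Gamma(z)\Gamma(z+\tfrac12)$, as a $2M$-fold Meijer $G$ with parameters $\nu_i$ and $\nu_i-\tfrac12$ in pairs (see the hypergeometric representation in Corollary~\ref{cor:pn}). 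So there is no parity alternation to track during the iteration; the ``split'' into $2M$ complex Gaussian factors with the exponents in \eqref{4.33} is an artefact of the final identification, not of the inductive construction.
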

The significance of this result is that the exact functional form of the eigenvalue PDF of (\ref{product-even}) is known from \cite{AIK13}. We note that the ordering of this matrix product is irrelevant to the singular value distribution due to the weak commutation relation established in~\cite{IK14}. As a consequence of this exact link to a known model, the correlation kernel can be obtained directly from the existing literature. Likewise, the scaling behaviour of the kernel near the origin~\cite{KZ14} as well as in the bulk and at the soft edge~\cite{LWZ16}. It is also worthwhile to mention that our new product~\eqref{eq:newproduct} gives the possibility to construct complex product matrices with half-integer indices $\{\nu_i\}$
(albeit interlaced with integer indices), while the rectangular matrices studied by Akemann et al.~\cite{AIK13} only give rise to integer indices. 

As aforementioned, the odd case of the orthogonal group and the unitary symplectic group, require
separate treatments. It turns out that, as for the Harish-Chandra group integrals, the result is the same
in both cases.

The rest of this paper is organised as follows. In \S \ref{sec2} we give a self-contained derivation of (\ref{1}), making use of a characterisation as the solution of a certain diffusion equation, which might be of independent interest. In \S \ref{sec3} we derive Theorem \ref{prop4}, and its analogue in the odd dimensional case, as well as for the matrix structure relating to (\ref{3.4}) in the case of the unitary symplectic group. In the final subsection we give a technically different derivation of Theorem \ref{prop4}, which in keeping with the recent works \cite{KK16a,KK16b,KR16} highlights the role of matrix transforms, although again the matrix integral (\ref{2}) plays a key role. Special cases, including the result Corollary \ref{cor1even} are given in \S \ref{sec4} for the even dimensional case, and in \S \ref{sec5} for the odd dimensional case.

\section{Fokker--Planck equations relating to the matrix integrals}\label{sec2}

Beyond the abstract working in Harish-Chandra's original paper, or an appeal to the theory of Duistermatt--Heckman localisation \cite{DH82} as outlined in \cite{Ey07}, Guhr and Kohler \cite{GK04} have pointed out that the matrix integrals \eqref{2}--\eqref{4} can be derived using the diffusion equation approach analogous to that used by Itzykson and Zuber \cite{IZ80} to deduce \eqref{1}. However, their working and final result is in the supersymmetric setting. In reading their result, one is thus faced with the added complexity of at least an implicit assumption  of familiarity with the theory of supergroups, and use of theorems within that theory. It thus remains to give a self-contained diffusion equation derivation of the matrix integrals \eqref{2}--\eqref{4}. Here we take up this problem, giving the full details in the case of \eqref{2}.

\begin{proposition}\label{prop1}
Let $X$ and $Y$ be $2m\times 2m$ real anti-symmetric matrices as in \eqref{2}. For $t>0$ define
\begin{equation}\label{2.1}
P_t^\even (X|Y)=\left(\frac{1}{2\pi t}\right)^{m(2m-1)/2}e^{-\Tr(X-Y)^2/4t}
\end{equation}
and
\begin{equation}\label{2.2}
p_t^\even (x|y)
=\Delta^\even _m(x)^2\left(\frac{1}{2\pi}\right)^m {\rm vol} \, {\rm O}(2m) \,
\int_{\gO(2m)} P_t^\even (R\Lambda_X R^T|Y)\,(R^T dR),
\end{equation}
with $\{\pm ix_j\}_{j=1}^{m}$ denoting the eigenvalues of $X$ and
\begin{equation}\label{2.3}
\Lambda_X=\operatorname{diag}\Big(
\big[\begin{smallmatrix} 0 & x_1 \\ -x_1 & 0 \end{smallmatrix}\big]
,\ldots,
\big[\begin{smallmatrix} 0 & x_m \\ -x_m & 0 \end{smallmatrix}\big]
\Big).
\end{equation}
Then the quantity $p_t^\even $, which corresponds to the eigenvalue PDF of $X$, satisfies the Fokker--Planck equation
\begin{equation}\label{2.4}
\mathcal{L}\,p_t^\even =\frac{\partial}{\partial t}p_t^\even ,
\qquad
\mathcal{L}:=\sum_{j=1}^m \frac{\partial}{\partial x_j}\left(\frac{\partial W}{\partial x_j}
+\frac{1}{2}\frac{\partial}{\partial x_j}\right)
\end{equation}
with
\begin{equation}\label{2.5}
W=-\sum_{1\leq j<k\leq m}\log\lvert x_j^2-x_k^2\rvert.
\end{equation}
\end{proposition}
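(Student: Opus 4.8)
The plan is to establish that the heat-kernel density $P_t^\even(X|Y)$ as a function of $X$ satisfies the matrix diffusion equation $\partial_t P = \frac{1}{2}\Delta_X P$, where $\Delta_X$ is the Laplacian on the space of $2m\times 2m$ real anti-symmetric matrices equipped with the flat metric induced by the inner product $\langle X,X\rangle = -\half\Tr X^2 = \sum_j x_j^2$ (the factor reconciling the $\frac{1}{2}$ appearing in the exponent $\frac{1}{2}\Tr(XRYR^T)$ elsewhere in the paper). This is immediate because $P_t^\even$ is precisely a Gaussian heat kernel centered at $Y$ with the correct normalisation $(2\pi t)^{-m(2m-1)/2}$, the exponent $m(2m-1)$ being the real dimension of the space of such matrices. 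First I would write the Laplacian in Cartesian coordinates on the matrix entries and verify the heat equation holds entrywise, which is standard.

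Next, the heart of the matter is the radial reduction. I would parametrise a generic anti-symmetric $X$ by its ``angular'' data $R\in\gO(2m)$ together with the radial eigenvalue parameters $x_1,\dots,x_m$ via $X=R\Lambda_X R^T$ with $\Lambda_X$ as in \eqref{2.3}, and compute the flat Laplacian $\Delta_X$ in these coordinates. The key input is the Jacobian of this change of variables: the volume element factorises as $(\text{const})\,|\Delta^\even_m(x)|^2\,dx\,(R^T dR)$, where the squared even-Vandermonde $\Delta^\even_m(x)^2 = \prod_{j<k}(x_j^2-x_k^2)^2$ arises from the antisymmetric eigenvalue repulsion characteristic of the orthogonal (type $D_m$) symmetry. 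Using this Jacobian, the radial part of $\Delta_X$ acting on functions that depend only on the $x_j$ becomes, after conjugating by the half-Jacobian $\Delta^\even_m(x)$, a Calogero--Sutherland-type operator. Concretely, the standard Sturm--Liouville reduction gives
\begin{equation}\label{radial-reduction}
\Delta^\even_m(x)\,\Delta_X^{\mathrm{rad}}\,\Delta^\even_m(x)^{-1}
= \sum_{j=1}^m \frac{\partial^2}{\partial x_j^2} - \sum_{j=1}^m\frac{(\Delta^\even_m)''/\Delta^\even_m}{\ }\Big|_{\text{diagonal part}},
\end{equation}
and one checks that the potential term generated by $\Delta^\even_m(x)$ is exactly what conjugates the free Laplacian into the Fokker--Planck operator $\mathcal L$ of \eqref{2.4} with $W=-\sum_{j<k}\log|x_j^2-x_k^2|$.

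The translation into $\mathcal L$ proceeds as follows. Since $P_t^\even$ depends on $X$ only through the eigenvalues once $Y$ is fixed and integrated against $(R^T dR)$, the integral in \eqref{2.2} is a radial function of $x$, and the matrix heat equation descends to $\partial_t (\,\cdot\,) = \frac{1}{2}\Delta_X^{\mathrm{rad}}(\,\cdot\,)$ acting on that radial function. The definition \eqref{2.2} multiplies this radial integral by $\Delta^\even_m(x)^2$ and constants, which is exactly the conjugation that turns the Laplace--Beltrami radial operator into $\mathcal L$: writing $\mathcal L$ in the form $\sum_j \partial_{x_j}(W_{x_j} + \half\partial_{x_j})$, one verifies directly that $e^{-W}=\prod_{j<k}|x_j^2-x_k^2|$ up to a constant, so $\mathcal L\,p = \frac12 e^{-W}\sum_j\partial_{x_j}(e^{W}\cdot e^{-W}\partial_{x_j}(e^{W}p))$ matches the conjugated radial Laplacian with the $\Delta^\even_m(x)^2 = e^{-2W}$ weight. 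I expect the main obstacle to be the explicit computation of the radial part of $\Delta_X$ in the eigenvalue coordinates and confirming that the cross terms and the angular (off-diagonal) derivatives either vanish or reproduce precisely the drift $\partial W/\partial x_j$; this requires carefully handling the degeneracy structure of the $2\times2$ blocks and the fact that the nonzero eigenvalues come in pairs $\pm ix_j$. Once the Jacobian and the Sturm--Liouville conjugation identity $\mathcal L = \Delta^\even_m(x)^2\cdot\big(\frac12\Delta_X^{\mathrm{rad}}\big)\cdot\Delta^\even_m(x)^{-2}$ are in hand, the Fokker--Planck equation \eqref{2.4} follows immediately from the matrix heat equation for $P_t^\even$.
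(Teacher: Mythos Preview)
Your approach is essentially the same as the paper's: establish the entrywise heat equation for $P_t^\even$, change variables via $X=R\Lambda_X R^T$ with Jacobian $\Delta^\even_m(x)^2$, write the radial part of the flat Laplacian as $\frac{1}{\Delta^\even_m(x)^2}\sum_j\partial_{x_j}\big(\Delta^\even_m(x)^2\,\partial_{x_j}\big)$, and then identify $\mathcal L = \Delta^\even_m(x)^2\cdot\big(\tfrac12\Delta_X^{\mathrm{rad}}\big)\cdot\Delta^\even_m(x)^{-2}$ by expanding the product rule. One small caveat: your intermediate remark about conjugating by the \emph{half}-Jacobian $\Delta^\even_m(x)$ to reach a Calogero--Sutherland/Schr\"odinger operator is a detour not needed for this proposition (it is the content of the paper's subsequent corollary, where the potential term in fact vanishes); for \eqref{2.4} you only need the full-Jacobian conjugation by $\Delta^\even_m(x)^2$, which you correctly state at the end.
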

\begin{proof}
We view \eqref{2.1} as the eigenvalue PDF for an ensemble of random anti-symmetric matrices $ {X }$ given a fixed anti-symmetric matrix $Y$. Noting that
\begin{equation*}
P_t^\even (X|Y)=\prod_{1\leq j<k\leq 2m}\left(\frac{1}{2\pi t}\right)^{1/2} e^{-(X_{jk}-Y_{jk})^2/2t},
\end{equation*}
where $X=[X_{jk}]_{j,k=1}^{2m}$, $Y=[Y_{jk}]_{j,k=1}^{2m}$, it follows from the fact that the functional form in the product satisfies a one-dimensional diffusion (heat) equation, that $P_t^\even$ itself satisfies the multi-dimensional diffusion equation
\begin{equation}\label{2.5a}
\frac{1}{2}\sum_{1\leq j<k\leq 2m}\frac{\partial^2}{\partial X_{jk}^2}P_t^\even (X|Y)=\frac{\partial}{\partial t}P_t^\even (X|Y).
\end{equation}

The anti-symmetric matrices $X$ can be put into block diagonal form \eqref{2.3} by conjugating with a real orthogonal matrix
\begin{equation}\label{2.6}
X=R\Lambda_X R^T.
\end{equation}
In terms of the variables implied by \eqref{2.6} --- the eigenvalues and eigenvectors of $X$ --- the measure $(dX)$ decomposes (see e.g. \cite[Exercises 1.3 q.5(iii)]{Fo10})
\begin{equation}\label{2.6a}
(dX)=\prod_{1\leq j<k\leq 2m}dX_{jk}=\Delta_m^\even (x)^2 {\rm vol} \, {\rm O}(2m) \prod_{j=1}^mdx_j\,(R^TdR).
\end{equation}
For this to be a bijection, we require that the $x_j$'s be ordered and that $R\in \gO(2m)/\gO(2)^m$. On the other hand, since the block-diagonal subgroup $\gO(2)^m$ commute with $\Lambda_X$, we may extend the integration to $R\in \gO(2m)$ provided we divide \eqref{2.6a} by $\operatorname{vol} \gO(2)^m=(2\pi)^m$. Now multiplying by \eqref{2.1}, and integrating over $R\in \gO(2m)$, it follows that $p_t^\even $ as defined by \eqref{2.2} is the eigenvalue PDF of $X$. Next change variables $$R\mapsto R_0R$$ for $R_0$ such that $$R_0YR_0^T=\Lambda_Y,$$ where $\Lambda_Y$ is the block diagonal matrix corresponding to the eigenvalues of $Y$ as in \eqref{2.3}. Since $(R^TdR)$ is invariant under this mapping, it follows that \eqref{2.2} only depends on the eigenvalues of $Y$, as indicated in the notation for $p_t$. Finally, change variables $$R\mapsto RR_1$$ and redefine $X$ as $X=R_1\Lambda_X R_1^T$ to conclude
\begin{equation}\label{2.7}
p_t^\even (x|y)
=\Delta^\even _m(x)^2
\left(\frac{1}{2\pi}\right)^m {\rm vol} \, {\rm O}(2m)\int_{\gO(2m)} P_t^\even (RXR^T|\Lambda_Y)\,(R^TdR).
\end{equation}
In view of \eqref{2.7}, the fact that $P_t^\even (X|Y)$ satisfies the diffusion equation \eqref{2.5a} tells us that so does
\begin{equation}\label{2.7a}
\frac{p_t^\even (x|y)}{\Delta^\even _m(x)^2}.
\end{equation}

There is a well established theory to change variables from the Laplacian in terms of the co-ordinates of the independent entries of the matrix, and the co-ordinates of the eigenvalues and eigenvectors as given by \eqref{2.6}; see e.g. \cite[working leading to (11.13)]{Fo10}. Application of this theory gives
\begin{equation*}
\sum_{1\leq j<k\leq 2m}\frac{\partial^2}{\partial X_{jk}^2}
=\frac{1}{\Delta^\even _m(x)^2}\sum_{j=1}^m \frac{\partial}{\partial x_j}
\left(\Delta^\even _m(x)^2\frac{\partial}{\partial x_j}\right)+\mathcal{O}_R,
\end{equation*}
where $\mathcal{O}_R$ denotes an operator with respect to variables relating to $R$. Thus
\begin{equation*}
\frac{1}{2}\sum_{j=1}^m \frac{\partial}{\partial x_j}\left(\Delta^\even _m(x)^2\frac{\partial}{\partial x_j}\right)
\frac{p_t^\even }{\Delta^\even _m(x)^2}=\frac{\partial}{\partial t}p_t^\even .
\end{equation*}
Applying the product rule on the left-hand side gives \eqref{2.4}.
\end{proof}

Generally the Fokker--Planck operator $\mathcal{L}$ in \eqref{2.4} has the property that
\begin{equation}\label{2.8}
e^W\mathcal{L}e^{-W}=\frac{1}{2}\sum_{j=1}^m\left(\frac{\partial^2}{\partial x_j^2}-\left(\frac{\partial W}{\partial x_j}\right)^2+\frac{\partial^2 W}{\partial x_j^2}\right),
\end{equation}
and is thus a symmetric operator. In the particular case that $W$ is given by \eqref{2.5}, the right-hand side of \eqref{2.8} simplifies significantly.

\begin{lemma}
Let $W$ be given by \eqref{2.5}. We have
\begin{equation}\label{2.9}
\sum_{j=1}^m\left(\frac{\partial^2 W}{\partial x_j^2}-\left(\frac{\partial W}{\partial x_j}\right)^2\right)=0.
\end{equation}
\end{lemma}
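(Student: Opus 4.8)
The plan is to compute the first and second derivatives of $W$ explicitly, assemble the summand, and then organise the resulting double sum so that it collapses by antisymmetry together with a standard partial-fraction identity. Since only the pairs containing the index $j$ survive differentiation, I would first record
\[
\frac{\partial W}{\partial x_j} = -2x_j \sum_{k \ne j} \frac{1}{x_j^2 - x_k^2},
\qquad
\frac{\partial^2 W}{\partial x_j^2} = 2 \sum_{k \ne j} \frac{x_j^2 + x_k^2}{(x_j^2 - x_k^2)^2},
\]
both of which are immediate once one writes $\partial_{x_j}\log|x_j^2-x_k^2| = 2x_j/(x_j^2-x_k^2)$.

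Next I would square the first derivative and split the double sum $\sum_{k,l\ne j}$ into its diagonal part ($k=l$) and its off-diagonal part ($k\ne l$). The diagonal part is $4x_j^2\sum_{k\ne j}(x_j^2-x_k^2)^{-2}$, which pairs off against $\partial^2 W/\partial x_j^2$; using $2(x_j^2+x_k^2)-4x_j^2 = -2(x_j^2-x_k^2)$, the second derivative and the diagonal contribution together reduce to $\sum_{k\ne j}(-2)/(x_j^2-x_k^2)$. Summing over $j$ then yields a sum over ordered pairs $(j,k)$ which is manifestly antisymmetric under $j\leftrightarrow k$, and hence vanishes.

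It therefore remains to show that the off-diagonal contribution vanishes after summation over $j$. Writing $y_j:=x_j^2$, this amounts to the identity
\[
\sum_{j} \sum_{\substack{k,l\ne j\\ k\ne l}} \frac{y_j}{(y_j-y_k)(y_j-y_l)} = 0 .
\]
Here I would group the ordered triples $(j,k,l)$ according to the underlying unordered triple $\{a,b,c\}$ of distinct indices; summing over the three choices of the distinguished index $j$ and using the symmetry in $k,l$ produces $2\big[\tfrac{y_a}{(y_a-y_b)(y_a-y_c)} + \tfrac{y_b}{(y_b-y_a)(y_b-y_c)} + \tfrac{y_c}{(y_c-y_a)(y_c-y_b)}\big]$. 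This bracket is the second divided difference of the polynomial $t\mapsto t$ at the nodes $y_a,y_b,y_c$; equivalently it is the $k=1$ instance of $\sum_i y_i^k/\prod_{j\ne i}(y_i-y_j)=0$, valid for $0\le k\le n-2$ with $n=3$ distinct nodes, and so it vanishes for every triple. Combining this with the diagonal cancellation establishes \eqref{2.9}.

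The genuine obstacle is the off-diagonal term: the diagonal cancellation is essentially bookkeeping, whereas the triple sum requires recognising the Lagrange/divided-difference identity, and it is precisely here that the hypothesis $x_j\ne x_k$ (distinct eigenvalues) is used to keep all denominators nonzero. Everything else is routine differentiation.
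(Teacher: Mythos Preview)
Your proof is correct and follows essentially the same route as the paper. The only cosmetic difference is in the diagonal bookkeeping: the paper first symmetrises $\sum_j \partial^2 W/\partial x_j^2$ to $\sum_{j\ne k} 4x_j^2/(x_j^2-x_k^2)^2$, which then cancels the diagonal of the square outright, whereas you keep the unsymmetrised numerator $2(x_j^2+x_k^2)$ and obtain the residual $-2/(x_j^2-x_k^2)$ that vanishes by antisymmetry after the $j$-sum; in both cases the off-diagonal part is disposed of by the same three-term partial-fraction identity $\frac{a}{(a-b)(a-c)}+\frac{b}{(b-a)(b-c)}+\frac{c}{(c-a)(c-b)}=0$.
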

\begin{proof}
It follows from the definition of $W$ that
\begin{equation*}
\sum_{j=1}^m\left(\frac{\partial W}{\partial x_j}\right)^2
=\sum_{\substack{k_1,k_2=1 \\k_1,k_2\neq j}}^m \frac{4x_j^2}{(x_j^2-x_{k_1}^2)(x_j^2-x_{k_2}^2)}
\end{equation*}
and
\begin{equation*}
\sum_{j=1}^m\frac{\partial^2 W}{\partial x_j^2}=\sum_{\substack{j,k=1\\k\neq j}}^m\frac{4x_j^2}{(x_{j}^2-x_{k}^2)^2}.
\end{equation*}
Separating off the term $k_1=k_2$ in the formula, it follows that
\begin{equation*}
\sum_{j=1}^m\left(\frac{\partial^2 W}{\partial x_j^2}-\left(\frac{\partial W}{\partial x_j}\right)^2\right)
=4\sum_{\substack{j,k_1,k_2=1\\k_1\neq k_2\neq j}}^m\frac{x_j^2}{(x_j^2-x_{k_1}^2)(x_j^2-x_{k_2}^2)}.
\end{equation*}
Moreover, for $a,b,c$ pairwise distinct, we have the algebraic identity
\begin{equation*}
\frac{a}{(a-b)(a-c)}+\frac{b}{(b-a)(b-c)}+\frac{c}{(c-a)(c-b)}=0,
\end{equation*}
and \eqref{2.9} follows.
\end{proof}

\begin{corollary}\label{cor1}
Let $0<x_1<...<x_m$ and $0<y_1<...<y_m$.
We have
\begin{equation}\label{2.10}
\frac{1}{2}\left(\sum_{j=1}^m\frac{\partial^2}{\partial x_j^2}\right)\frac{p_t^\even }{|\Delta^\even _m(x)|}
=\frac{\partial}{\partial t}\frac{p_t^\even }{|\Delta^\even _m(x)|}
\end{equation}
with initial condition
\begin{equation}\label{2.12}
\left\lvert\frac{\Delta^\even _m(y)}{\Delta^\even _m(x)}\right\rvert p_t^\even (x|y)\xrightarrow{t\to 0^+}
 \prod_{i=1}^m\delta(x_i-y_i)
\end{equation}
and the additional requirement that $p_t^\even(x,y)/\Delta^\even _m(x)$ is anti-symmetric in $\{x_i\}_{i=1}^m$ and even in each $x_i$.
\end{corollary}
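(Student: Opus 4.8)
The plan is to establish the three ingredients separately: the diffusion equation \eqref{2.10}, the initial condition \eqref{2.12}, and the stated symmetry and parity of $p_t^\even/\Delta^\even_m(x)$. The first is essentially bookkeeping. Since $W=-\sum_{1\le j<k\le m}\log\lvert x_j^2-x_k^2\rvert$, one has $e^{-W}=\prod_{1\le j<k\le m}\lvert x_j^2-x_k^2\rvert=\lvert\Delta^\even_m(x)\rvert$. Combining the general similarity relation \eqref{2.8} with the vanishing \eqref{2.9} of the Lemma, the zeroth-order terms cancel and I obtain the clean operator identity $e^W\mathcal{L}e^{-W}=\tfrac12\sum_{j=1}^m\partial^2/\partial x_j^2$. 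Setting $q_t:=e^W p_t^\even=p_t^\even/\lvert\Delta^\even_m(x)\rvert$, so that $p_t^\even=e^{-W}q_t$, and substituting into the Fokker--Planck equation \eqref{2.4} of Proposition \ref{prop1} (using that $W$ is $t$-independent) gives $e^W\mathcal{L}e^{-W}q_t=\partial_t q_t$, which by the operator identity is precisely the free heat equation \eqref{2.10}.

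For the initial condition I would argue probabilistically. As $t\to0^+$ the Gaussian $P_t^\even(X|Y)$, written as a product of one-dimensional heat kernels over the independent entries $X_{jk}$ ($j<k$), concentrates as a matrix delta mass at $X=Y$; equivalently, the random anti-symmetric matrix with law \eqref{2.1} converges to $Y$. Since $Y$ is assumed to have distinct positive eigenvalues $\{y_j\}$, the map sending a matrix to its ordered positive eigenvalues is continuous at $Y$, so the ordered positive eigenvalues of $X$ converge in distribution to $\{y_j\}$. Because $p_t^\even(x|y)$ is by construction in \eqref{2.2} the probability density of these eigenvalues (in particular it integrates to one over $0<x_1<\cdots<x_m$), this weak convergence reads $p_t^\even(x|y)\to\prod_{j=1}^m\delta(x_j-y_j)$. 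Multiplying by the factor $\lvert\Delta^\even_m(y)/\Delta^\even_m(x)\rvert$, which is continuous near $x=y$ with value $1$ there (as $\Delta^\even_m(y)\ne0$), and noting that the limiting deltas force $x=y$, yields \eqref{2.12}.

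It remains to verify the symmetry and parity. The prefactor $\Delta^\even_m(x)^2$ in \eqref{2.2} is symmetric in $\{x_j\}$ and even in each $x_j$, while the integral over $\gO(2m)$ is invariant under two operations on $\Lambda_X$: permuting the $2\times2$ blocks, realised by conjugation with a permutation matrix in $\gO(2m)$, which permutes $\{x_j\}$; and interchanging the two coordinates within a block, realised by conjugation with an element of $\gO(2)$ of determinant $-1$, which sends $x_j\mapsto-x_j$. Hence $p_t^\even(x|y)$ is symmetric in $\{x_j\}$ and even in each $x_j$. Since $\Delta^\even_m(x)=\prod_{j<k}(x_k^2-x_j^2)$ is the Vandermonde determinant in the variables $x_j^2$, and is therefore anti-symmetric under transpositions $x_j\leftrightarrow x_k$ while depending on each variable only through $x_j^2$, the quotient $p_t^\even/\Delta^\even_m(x)$ is anti-symmetric in $\{x_j\}$ and even in each $x_j$, as required.

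The hard part will be the rigorous justification of the $t\to0^+$ limit in the second step: one must control the interchange of this limit with the orthogonal group integral in \eqref{2.2}, or equivalently make precise the passage from the matrix delta mass at $X=Y$ to the product of eigenvalue deltas including the correct Jacobian normalisation coming from \eqref{2.6a}. The symmetry and parity constraints, by contrast, are the structural ingredients that will single out the unique solution of \eqref{2.10}--\eqref{2.12}, ensuring that this heat-equation characterisation determines $p_t^\even$ and hence reproduces the determinantal formula \eqref{2}.
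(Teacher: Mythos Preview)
Your proposal is correct and follows essentially the same approach as the paper: conjugate the Fokker--Planck operator by $e^W$ and invoke \eqref{2.8}--\eqref{2.9} to obtain the free heat equation, use the concentration of $P_t^\even(X|Y)$ at $X=Y$ to deduce the delta initial condition on the ordered eigenvalues, and read off the anti-symmetry and parity from the structure $p_t^\even=\Delta_m^\even(x)^2\times(\text{symmetric, even integral})$. Your treatment is somewhat more explicit than the paper's --- you spell out the block-permutation and in-block reflection realising the $x_j\leftrightarrow x_k$ and $x_j\mapsto -x_j$ invariances, and you flag the analytic subtlety in the $t\to0^+$ limit --- but the underlying argument is the same.
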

\begin{proof}
It follows from \eqref{2.4} that
\begin{equation*}
(e^W\mathcal{L}e^{-W})(e^Wp_t^\even )=\frac{\partial}{\partial t}(e^Wp_t^\even ).
\end{equation*}
Making use of \eqref{2.8} with the simplification \eqref{2.9} gives \eqref{2.10}.

It follows from \eqref{2.1} that as $t\to 0^+$ we must have $X\to Y$ and thus with the ordering of eigenvalues
\begin{equation*}
p_t^\even (x|y)\to   \prod_{i=1}^m\delta(x_i-y_i),
\end{equation*}
which is equivalent to \eqref{2.12}. So with the ordering $0<x_1<...<x_m$, we see from \eqref{2.2} that $p_t^\even/\Delta^\even _m(x)$ is equal to $\Delta_m^\even (x)$ times a symmetric function in $\{x_i\}$, and is thus anti-symmetric. We see too that this quantity is even in each $x_i$.
\end{proof}

One viewpoint on \eqref{2.10} is as an imaginary time Schr\"odinger equation with anti-symmetric wave function (in both $\{x_i\}$, $\{y_i\}$)
\begin{equation}\label{2.13}
\psi_t^\even (x|y)=\left\lvert\frac{\Delta_m^\even (y)}{\Delta_m^\even (x)}\right\rvert p_t^\even (x|y),
\end{equation}
corresponding to spinless free fermions. The even Green function solution, i.e. solution satisfying the delta function initial condition on the right-hand side of \eqref{2.12}, is obtained simply by anti-symmetrising the one-dimensional solution.

\begin{proposition}\label{prop2}
Define $\psi_t^\even $ by \eqref{2.13}. We have
\begin{equation}\label{2.14}
\psi_t^\even (x|y)
=\det\bigg[\frac{e^{-(x_j-y_k)^2/(2t)}+e^{-(x_j+y_k)^2/(2t)}}{\sqrt{2\pi t}}\bigg]_{j,k=1}^m.
\end{equation}
\end{proposition}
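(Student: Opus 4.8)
The plan is to identify the right-hand side of \eqref{2.14} as the unique solution of the free heat equation \eqref{2.10} carrying the correct initial data and symmetry. From Corollary \ref{cor1} the function $\psi_t^\even$ of \eqref{2.13} is characterised by four properties: it solves \eqref{2.10} (the prefactor $|\Delta_m^\even(y)|$ is a constant for the $(x,t)$-dynamics), it obeys the initial condition $\psi_t^\even\to\prod_i\delta(x_i-y_i)$ of \eqref{2.12}, it is anti-symmetric in the $\{x_i\}$ and in the $\{y_i\}$, and it is even in each $x_i$. I would therefore verify that the determinant shares all four properties and then invoke uniqueness.

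Write the claimed formula as $\det[K_t(x_j,y_k)]_{j,k=1}^m$ with the one-dimensional even kernel
$$
K_t(x,y)=\frac{e^{-(x-y)^2/(2t)}+e^{-(x+y)^2/(2t)}}{\sqrt{2\pi t}}.
$$
Each Gaussian summand satisfies $\tfrac12\partial_x^2u=\partial_t u$, so $K_t$ does too, and $x\mapsto-x$ merely interchanges the two summands, so $K_t$ is even in $x$. For the determinant, the key observation is that its $(j,k)$ entry depends on the row variable $x_j$ alone; hence $\partial_{x_j}$ differentiates only the $j$-th row and a row-by-row expansion gives
$$
\tfrac12\sum_{j=1}^m\partial_{x_j}^2\det[K_t(x_j,y_k)]
=\sum_{j=1}^m\det\big[\cdots\,\tfrac12\partial_{x_j}^2K_t(x_j,y_k)\,\cdots\big]
=\partial_t\det[K_t(x_j,y_k)],
$$
using $\tfrac12\partial_x^2K_t=\partial_t K_t$ entrywise. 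This is the standard Karlin--McGregor/free-fermion computation. Anti-symmetry in $\{x_i\}$ and in $\{y_i\}$ follows from interchanging rows, respectively columns, and evenness in each $x_i$ follows from evenness of $K_t$ in its first slot; thus all the symmetry requirements of Corollary \ref{cor1} are met.

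It then remains to match the initial data. For $x,y>0$ one has $\tfrac{1}{\sqrt{2\pi t}}e^{-(x-y)^2/(2t)}\to\delta(x-y)$ as $t\to0^+$, while the second summand tends to $\delta(x+y)=0$ on the positive axis; hence $K_t(x_j,y_k)\to\delta(x_j-y_k)$. With the orderings $0<x_1<\cdots<x_m$ and $0<y_1<\cdots<y_m$, a product $\prod_j\delta(x_j-y_{\sigma(j)})$ can be supported only when $\sigma$ is order-preserving, i.e.\ $\sigma=\mathrm{id}$, so the determinant tends to $\prod_i\delta(x_i-y_i)$, matching \eqref{2.12}.

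The step I expect to require the most care is uniqueness. Matching the heat equation and the initial profile pins down the solution only within a class of controlled growth, and here it is exactly the even/anti-symmetric structure that does the pinning: these symmetries reduce the problem to the fundamental domain $0<x_1<\cdots<x_m$, on which the delta initial condition determines the solution unambiguously, with the Gaussian decay supplying the growth bound. Equivalently --- and this is the viewpoint flagged in the text --- the determinant is precisely the anti-symmetrisation $\sum_{\sigma\in S_m}\operatorname{sgn}(\sigma)\prod_jK_t(x_j,y_{\sigma(j)})$ of the one-dimensional even solution, a linear operation that automatically produces a solution of \eqref{2.10} in the correct symmetry class with the correct initial data. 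Uniqueness then forces this anti-symmetrisation to coincide with $\psi_t^\even$, establishing \eqref{2.14}.
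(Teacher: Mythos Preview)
Your proposal is correct and follows essentially the same approach as the paper: identify the one-particle even heat kernel $K_t(x,y)$, then form the $m$-particle solution as its anti-symmetrised product (the Slater determinant), which carries the required symmetries and initial data. The paper's own proof is considerably terser---it simply states the one-particle solution and asserts that the anti-symmetrised product gives \eqref{2.14}---whereas you spell out explicitly the verification that the determinant solves \eqref{2.10}, the symmetry checks, the initial-condition matching, and the uniqueness argument; these are precisely the details the paper leaves implicit in its ``spinless free fermions'' remark preceding the proposition.
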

\begin{proof}
The even solution of the one-particle imaginary time Schr\"odinger equation
\begin{equation*}
\frac{1}{2}\frac{\partial^2}{\partial x^2}\psi_t^\even (x|y)=\frac{\partial}{\partial t}\psi_t^\even (x|y),
\end{equation*}
subject to the initial condition $ \psi_t(x|y)\to\delta(x-y)$  as $t\to 0^+$ is
\begin{equation*}
\psi_t^\even (x|y)=\frac{e^{-(x-y)^2/(2t)}+e^{-(x+y)^2/(2t)}}{\sqrt{2\pi t}}.
\end{equation*}
Thus, in keeping with the remark immediately above the statement of the proposition, the $m$-particle wave function is the anti-symmetrised product of the one-particle solutions,
which is equivalent to \eqref{2.14}.
\end{proof}

The final task in proving~\eqref{2} is to determine the proportionality constant.

\begin{corollary}\label{cor2}
The matrix integral formula \eqref{2} holds true with
\begin{equation}\label{2.15}
c_m^\even =\prod_{k=1}^m\frac{\Gamma(2k-1)}{2}.
\end{equation}
\end{corollary}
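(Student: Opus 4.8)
The plan is to pin down the constant by equating the two descriptions of the eigenvalue PDF $p_t^\even$ that the preceding results make available: on one side the explicit heat-kernel determinant of Proposition~\ref{prop2}, and on the other the group-integral definition \eqref{2.2}, into which the still-undetermined constant of \eqref{2} enters. Matching the two isolates $c_m^\even$.

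First I would combine \eqref{2.13} with \eqref{2.14} to write $p_t^\even(x|y)=|\Delta^\even_m(x)/\Delta^\even_m(y)|\det[\cdots]$, and expand each determinant entry as $\frac{1}{\sqrt{2\pi t}}e^{-(x_j^2+y_k^2)/2t}\cdot 2\cosh(x_jy_k/t)$. Pulling the $m$ row and column Gaussian factors and the $m$ factors $(2\pi t)^{-1/2}$ out of the determinant yields
\[
p_t^\even(x|y)=\left|\frac{\Delta^\even_m(x)}{\Delta^\even_m(y)}\right|(2\pi t)^{-m/2}\,e^{-(\sum_{j}x_j^2+\sum_{k}y_k^2)/2t}\det[2\cosh(x_jy_k/t)]_{j,k=1}^m .
\]
Next I would compute the same quantity from \eqref{2.2} and \eqref{2.1}. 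Since $\Lambda_X^2$ in \eqref{2.3} is block diagonal with $-x_j^2$ on the diagonal, $\Tr(R\Lambda_X R^T)^2=\Tr\Lambda_X^2=-2\sum_j x_j^2$ and likewise $\Tr Y^2=-2\sum_k y_k^2$, so the Gaussian factorises as
\[
P_t^\even(R\Lambda_X R^T|Y)=(2\pi t)^{-m(2m-1)/2}\,e^{-(\sum_{j}x_j^2+\sum_{k}y_k^2)/2t}\,e^{-\Tr(R\Lambda_X R^T Y)/2t}.
\]

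Substituting this into \eqref{2.2} and solving for the remaining group integral, I would then recognise it as the orthogonal Harish-Chandra integral \eqref{2} evaluated at the rescaled eigenvalues $x_j\mapsto x_j/t$: the replacement $R\mapsto R^T$ leaves the Haar measure invariant and turns $\Tr(R\Lambda_X R^T Y)$ into $\Tr(\Lambda_X RYR^T)$, while the right-hand side of \eqref{2} is even in each $x_j$ (both $\cosh$ and $\Delta^\even_m$ depend on $x_j^2$), so the sign of the coupling is immaterial. Comparing the $(x,y)$-dependence with the two displays above, all Vandermonde, Gaussian and $\cosh$-determinant factors cancel and the $t$-dependence drops out — a welcome consistency check, since $\Delta^\even_m(x/t)=t^{-m(m-1)}\Delta^\even_m(x)$ must conspire with the several competing powers of $2\pi t$ — leaving
\[
c_m^\even=\frac{(2\pi)^{m^2}}{{\rm vol}\,\gO(2m)} .
\]

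Finally I would substitute the value ${\rm vol}\,\gO(2m)=\prod_{j=1}^{2m}2\pi^{j/2}/\Gamma(j/2)$ consistent with the measure decomposition \eqref{2.6a} of \cite{Fo10}, split the product over even $j=2k$ and odd $j=2k-1$, and apply the Legendre duplication formula to the resulting $\Gamma(k-\tfrac12)$ factors; the powers of $\pi$ cancel against the $\sqrt{\pi}$ produced by duplication and the powers of $2$ collapse, delivering $c_m^\even=\prod_{k=1}^m\Gamma(2k-1)/2$ (one checks $c_1^\even=(2\pi)/{\rm vol}\,\gO(2)=2\pi/4\pi=\tfrac12$ against $\Gamma(1)/2$). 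The main difficulty here is purely in the bookkeeping: keeping the sign conventions for $\Tr(\,\cdot\,)^2$ on anti-symmetric matrices straight, tracking the competing powers of $2\pi t$ and of $t$ so that they indeed cancel, and performing the $\Gamma$-function simplification with the correctly normalised group volume. I would also want to confirm the argument is not circular — the functional form of \eqref{2} is itself an output of the diffusion analysis of Propositions~\ref{prop1}--\ref{prop2}, so the matching genuinely extracts the constant rather than presupposing it.
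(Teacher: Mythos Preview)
Your proposal is correct and follows essentially the same approach as the paper's proof: both equate the group-integral definition \eqref{2.2} of $p_t^\even$ with the explicit heat-kernel determinant of Proposition~\ref{prop2}, then extract the constant via the formula \eqref{2.16} for $\operatorname{vol}\gO(2m)$ and the duplication formula. The only cosmetic difference is that the paper sets $t=1$ at the outset, whereas you keep $t$ general and verify that all powers of $t$ cancel as a consistency check; your intermediate identity $c_m^\even=(2\pi)^{m^2}/\operatorname{vol}\gO(2m)$ is exactly what the paper's comparison \eqref{cor2-proof-step1} produces once the Gaussian and determinantal factors are stripped away.
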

\begin{proof}
The absolute value signs can be removed in~\eqref{2.13} due to the ordering of eigenvalues. Thus, combining \eqref{2.7} and \eqref{2.13} yields
\begin{equation}\label{cor2-proof-step1}
{1 \over (2\pi)^m }\operatorname{vol} \gO(2m) \int_{\gO(2m)}P_t^\even(RXR^T\vert \Lambda_Y)\,(R^TdR)
= \frac{\psi_t(x\vert y)}{\Delta_m^\even (x)\Delta_m^\even (y)}.
\end{equation}
Now set $t=1$. On the left-hand side, we know from~\eqref{2.1} that
\begin{equation*}
P_1^\even(RXR^T|\Lambda_Y)=\left(\frac{1}{2\pi}\right)^{m(2m-1)/2}\prod_{i=1}^m e^{-(x_i^2+y_i^2)/2}\,e^{\frac12\Tr RXR^T\Lambda_Y},
\end{equation*}
while on the right-hand side it follows from~\eqref{2.14} that
\begin{equation*}
\psi_1(x|y)=\left(\frac{1}{2\pi}\right)^{m/2}\prod_{i=1}^m e^{-(x_i^2+y_i^2)/2}\det[2\cosh (x_j y_k)]_{j,k=1}^m.
\end{equation*}
The corollary follows using these formulae in~\eqref{cor2-proof-step1} together with \cite{Mu82}
\begin{equation}\label{2.16}
\operatorname{vol} \gO(2m)=\prod_{k=1}^{2m}\frac{2\pi^{k/2}}{\Gamma(k/2)}
\end{equation}
and the duplication formula for gamma functions.
\end{proof}

\begin{remark}\label{rem1}
With $\int_{\gO(2m)}(R^TdR)=1$, taking the limit $X\to 0$ on the left-hand side of \eqref{2} gives unity. On the right-hand side, with $c_m^\even $ equal to \eqref{2.15}, we must also get unity by taking the limit $x_1,...,x_m\to 0$. The latter can be done by successive use of L'H\^opital's rule. The explicit form of the normalisation is also given in \cite[Eq.~(2)]{Zu16}.
\end{remark}

The derivation of \eqref{3} and \eqref{5} can be carried out along similar lines. Consider for definiteness \eqref{3}. The analogue of Proposition \ref{prop1} and Corollary \ref{cor1} is the following.

\begin{proposition}
Let $X$ and $Y$ be $(2m+1)\times (2m+1)$ real anti-symmetric matrices as in \eqref{3}. For $t>0$ define
\begin{equation}\label{2.17}
P_t^\odd (X|Y)=\left(\frac{1}{2\pi t}\right)^{m(2m+1)/2}e^{-\Tr(X-Y)^2/4t}
\end{equation}
and
\begin{multline}\label{2.18}
p_t^\odd (x|y)=\Delta_m^\even (x)\Delta_m^\odd (x)\operatorname{vol} \gO(2m+1)
\\
\times \left(\frac{1}{2\pi}\right)^m\int_{\gO(2m+1)} P_t^\odd (R\Lambda_X R^T|Y)\,(R^TdR),
\end{multline}
where $\{0\}\cup\{\pm ix_j\}_{j=1}^m$ denote the eigenvalues of $X$, ${\Delta}^\odd (x)$ is given in \eqref{4} and
\begin{equation}\label{2.19}
\Lambda_X=
\operatorname{diag}\left(0,
\Big[\begin{smallmatrix}0 & x_1 \\ -x_1 & 0\end{smallmatrix}\Big],
\ldots,
\Big[\begin{smallmatrix}0 & x_m \\ -x_m & 0\end{smallmatrix}\Big]
\right).
\end{equation}
The quantity $p_t^\odd $, which corresponds to eigenvalue PDF of $X$, satisfies the Fokker--Planck equation \eqref{2.4} with $p_t^\even $ replaced by $p_t^\odd $ and
\begin{equation}\label{2.20}
W=-\sum_{j=1}^m\log\lvert x_j\rvert-\sum_{1\leq j<k\leq m}\log\lvert x_j^2-x_k^2\rvert.
\end{equation}
Thus, we have
\begin{equation}\label{2.21}
\frac{1}{2}\left(\sum_{j=1}^m\frac{\partial^2}{\partial x_j^2}\right)\frac{p_t^\odd}{|\Delta_m^\odd (x)|}
=\frac{\partial}{\partial t}\frac{p_t^\odd}{|\Delta_m^\odd (x)|} ,
\end{equation}
with initial condition
\begin{equation*}
\left\lvert\frac{\Delta_m^\even (y)}{\Delta_m^\odd (x)}\right\rvert p_t^\odd (x|y)
\xrightarrow{t\to 0^+}\prod_{i=1}^m\delta(x_i-y_i)
\end{equation*}
and subject to the requirement that $p_t^\odd /\Delta_m^\odd (x)$ is anti-symmetric in $\{x_i\}_{i=1}^m$ and even in each $x_i$.
\end{proposition}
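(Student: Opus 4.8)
The plan is to follow the template set by Proposition \ref{prop1}, the Lemma, and Corollary \ref{cor1}, tracking the one structural change caused by the extra zero eigenvalue in odd dimensions. First I would observe that, exactly as in \eqref{2.5a}, the Gaussian weight \eqref{2.17} factorises over the independent entries $X_{jk}$ ($1\le j<k\le 2m+1$) into one-dimensional heat kernels, so that $P_t^\odd(X|Y)$ solves the matrix heat equation $\tfrac12\sum_{j<k}\partial^2_{X_{jk}}P_t^\odd=\partial_t P_t^\odd$. Diagonalising $X=R\Lambda_X R^T$ with $\Lambda_X$ the odd canonical form \eqref{2.19} and invoking the measure decomposition of \cite{Fo10}, which now carries an extra factor coming from the zero eigenvalue, identifies $p_t^\odd$ in \eqref{2.18} as the eigenvalue PDF. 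The two Haar changes of variables $R\mapsto R_0R$ and $R\mapsto RR_1$ used in Proposition \ref{prop1} go through verbatim, reducing the integral to the form $\int P_t^\odd(RXR^T|\Lambda_Y)(R^TdR)$ and showing that $p_t^\odd$ divided by its Jacobian prefactor inherits the heat equation.

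Next I would apply the same change-of-variables theory for the Laplacian (as in \cite[(11.13)]{Fo10}) to split $\sum_{j<k}\partial^2_{X_{jk}}$ into a radial operator $\tfrac1J\sum_j\partial_{x_j}(J\partial_{x_j})$ plus an operator $\mathcal O_R$ in the angular variables whose contribution drops out after integrating against $(R^TdR)$. The decisive point is that the relevant radial weight is $J=e^{-2W}=(\Delta_m^\odd(x))^2=\prod_j x_j^2\prod_{j<k}(x_j^2-x_k^2)^2$, where the factor $\prod_j x_j^2$, absent in the even case, is precisely the contribution of the zero eigenvalue (visible already in the reduction of the $m=1$, $3\times3$ case to a radial $\chi_3$ measure). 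This factor is responsible for the new single-particle term $-\sum_j\log|x_j|$ in \eqref{2.20}. The product rule then converts the radial equation into the Fokker--Planck form \eqref{2.4} with this $W$, exactly as in the closing lines of Proposition \ref{prop1}.

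The genuinely new ingredient is the odd analogue of \eqref{2.9}, needed to pass from \eqref{2.4} to the free Schr\"odinger form \eqref{2.21} via the symmetrising identity \eqref{2.8}, which holds for any $W$. Writing $W=W_0+W_1$ with $W_0=-\sum_j\log|x_j|$ and $W_1=-\sum_{j<k}\log|x_j^2-x_k^2|$, I would verify $\sum_j(\partial_{x_j}^2W-(\partial_{x_j}W)^2)=0$ in three pieces: the $W_0$ self-contribution vanishes because $\partial_{x_j}^2W_0=x_j^{-2}=(\partial_{x_j}W_0)^2$; the $W_1$ self-contribution is exactly the content of the Lemma already proved; and the cross term is
\begin{equation*}
-2\sum_{j=1}^m(\partial_{x_j}W_0)(\partial_{x_j}W_1)=-\sum_{\substack{j,k=1\\k\neq j}}^m\frac{4}{x_j^2-x_k^2},
\end{equation*}
which vanishes by antisymmetry under $j\leftrightarrow k$. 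With this analogue of \eqref{2.9} in hand, \eqref{2.8} collapses to $\tfrac12\sum_j\partial_{x_j}^2$ acting on $e^Wp_t^\odd=p_t^\odd/|\Delta_m^\odd(x)|$, giving \eqref{2.21}; the delta-function initial condition and the stated parity and antisymmetry of $p_t^\odd/\Delta_m^\odd(x)$ then follow from the $t\to0^+$ limit $X\to Y$ of \eqref{2.17} together with the structure of \eqref{2.18}, just as in Corollary \ref{cor1}.

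The main obstacle I anticipate is the Jacobian bookkeeping in the presence of the zero eigenvalue: one must confirm that the radial weight is exactly $(\Delta_m^\odd)^2$, so that $W$ acquires the single-particle term $-\sum_j\log|x_j|$ and nothing more, and then check that the Calogero-type cancellation survives its introduction. Everything hinges on the cross term $(\partial_{x_j}W_0)(\partial_{x_j}W_1)$ collapsing to a sum of simple poles $1/(x_j^2-x_k^2)$ that cancel pairwise; once this is observed the remainder of the argument is mechanical and identical in spirit to the even case.
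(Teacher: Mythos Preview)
Your proposal is correct and follows exactly the route the paper indicates: the paper does not give a separate proof but simply states that the derivation proceeds ``along similar lines'' to Proposition~\ref{prop1} and Corollary~\ref{cor1}, which is precisely the template you carry out. In fact you supply the one detail the paper leaves entirely implicit, namely the odd analogue of \eqref{2.9}; your decomposition $W=W_0+W_1$ with the $W_0$ self-term, the $W_1$ self-term (the existing Lemma), and the cross term $-\sum_{j\ne k}4/(x_j^2-x_k^2)$ vanishing by antisymmetry is the clean way to see this.

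One caution on the bookkeeping you flag at the end: your identification of the radial Jacobian as $(\Delta_m^\odd)^2=\prod_j x_j^2\prod_{j<k}(x_j^2-x_k^2)^2$ is correct (as your $m=1$ check confirms) and is exactly what is needed for the Fokker--Planck operator with the $W$ of \eqref{2.20}, but note that the prefactor written in \eqref{2.18} is $\Delta_m^\even\Delta_m^\odd$, which carries only $\prod_j x_j$ to the first power. Likewise, the parity claim in the statement should read \emph{odd} in each $x_i$ (consistent with the $\sinh$-type determinant in the proposition that follows), not even. These are slips in the statement rather than in your argument; your reasoning produces the internally consistent version.
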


Solving \eqref{2.21} then gives us the analogue of Proposition \ref{prop2} and Corollary \ref{cor2}.

\begin{proposition}
With
\begin{equation}
\psi_t^\odd (x|y):=\left\lvert\frac{\Delta_m^\even (y)}{\Delta_m^\odd (x)}\right\rvert
p_t^\odd (x|y),
\end{equation}
we have
\begin{equation*}
\psi_t^\odd (x|y)
=\det\bigg[\frac{e^{-(x_j-y_k)^2/(2t)}{\color{red}{-}}e^{-(x_j+y_k)^2/(2t)}}{\sqrt{2\pi t}}\bigg]_{j,k=1}^m,
\end{equation*}
and consequently \eqref{3} holds true with
\begin{equation}\label{2.22}
c_m^\odd =\prod_{k=1}^m\frac{\Gamma(2k)}{2}.
\end{equation}
\end{proposition}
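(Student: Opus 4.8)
The plan is to follow verbatim the template of Proposition~\ref{prop2} and Corollary~\ref{cor2}. Since \eqref{2.21} has already reduced the odd-dimensional problem to the free $m$-particle heat equation for $\psi_t^\odd$, carrying product delta initial data $\prod_i\delta(x_i-y_i)$, the task splits into three pieces: solve the one-particle equation, anti-symmetrise to obtain the $m$-particle wave function, and then specialise to $t=1$ to pin down $c_m^\odd$.

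The decisive difference from the $2m$-dimensional case lies in the boundary behaviour at the origin. Whereas the even problem carries the Neumann-type (even) one-particle heat solution, the extra zero eigenvalue of an odd anti-symmetric matrix---the source of the $-\sum_j\log|x_j|$ term in \eqref{2.20}---forces the Dirichlet-type (odd) solution
\[
g_t(x|y)=\frac{e^{-(x-y)^2/(2t)}-e^{-(x+y)^2/(2t)}}{\sqrt{2\pi t}},
\]
which solves $\tfrac12\partial_x^2 g_t=\partial_t g_t$, is odd in $x$, vanishes at $x=0$, and, since $x,y>0$, still tends to $\delta(x-y)$ on $(0,\infty)$ as $t\to 0^+$ (the image term $e^{-(x+y)^2/(2t)}/\sqrt{2\pi t}$ would contribute a delta spike only at $x=-y<0$, outside the domain). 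This is precisely the kernel selected by the anti-symmetry together with the parity condition at the origin imposed on $p_t^\odd/\Delta_m^\odd(x)$ in the preceding proposition.

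Because $\psi_t^\odd$ is anti-symmetric in $\{x_j\}$ and in $\{y_k\}$ and solves the free $m$-particle equation \eqref{2.21} with the product initial data, it must equal the Slater determinant $\det[g_t(x_j|y_k)]_{j,k=1}^m$ of the one-particle solutions; this is the asserted formula with the minus sign, and at $t=1$ each entry collapses to $e^{-(x_j^2+y_k^2)/2}\,2\sinh(x_jy_k)/\sqrt{2\pi}$, producing the $\sinh$-determinant appearing in \eqref{3}. To extract the constant I would then set $t=1$ in the identity relating $\psi_t^\odd$ to the group integral $\int_{\gO(2m+1)}P_t^\odd(RXR^T|\Lambda_Y)\,(R^TdR)$ that follows upon combining the definition of $\psi_t^\odd$ with \eqref{2.18}, exactly as \eqref{cor2-proof-step1} was used in the even case. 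After cancelling the common Gaussian factor $\prod_i e^{-(x_i^2+y_i^2)/2}$ and the matrix $[2\sinh(x_jy_k)]_{j,k=1}^m$, the surviving proportionality constant is a ratio of powers of $2\pi$ and $\operatorname{vol}\gO(2m+1)$, the latter supplied by the $\gO(2m+1)$ analogue of \eqref{2.16}; a final application of the gamma duplication formula reorganises this into $c_m^\odd=\prod_{k=1}^m\Gamma(2k)/2$, as recorded in \eqref{2.22}.

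The main obstacle I anticipate is not the anti-symmetrisation---routine once the correct one-particle kernel is in hand---but rather the clean justification that it is the odd (Dirichlet) and not the even (Neumann) solution that is selected, i.e.\ attaching the minus sign unambiguously to the presence of the zero eigenvalue and the $\log|x_j|$ singularity in \eqref{2.20}. The remaining delicate point is the bookkeeping of the $2\pi$-powers and the $\operatorname{vol}\gO(2m+1)$ factor through the duplication formula, where a slip in any exponent would spoil the target value $\prod_{k=1}^m\Gamma(2k)/2$.
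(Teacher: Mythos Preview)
Your proposal is correct and follows precisely the route the paper intends: the paper offers no separate proof for this proposition, merely stating that solving \eqref{2.21} ``gives us the analogue of Proposition \ref{prop2} and Corollary \ref{cor2}'', and you have accurately filled in those analogous steps, correctly identifying that the extra factor $\prod_j x_j$ in $\Delta_m^\odd$ forces the odd (Dirichlet) one-particle heat kernel in place of the even one. Note incidentally that the preceding proposition's phrase ``even in each $x_i$'' appears to be a slip for ``odd in each $x_i$''; your reasoning implicitly corrects this.
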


\section{Eigenvalues of  particular anti-symmetric  and   anti-self dual random matrices}\label{sec3}

\subsection{A particular Hermitian random matrix}

In \cite[Th.~6]{FR05}, the integration formula \eqref{1} was used to determine the eigenvalue PDF of the random matrix
\begin{equation}\label{3.1}
M'=UAU^\dagger +XBX^\dagger +\sqrt{t}Y.
\end{equation}
Here $U\in \gU(N)$ is chosen with invariant measure, $A$ is a fixed $N\times N$ complex Hermitian matrix, $B$ is a fixed $M\times M$ complex Hermitian matrix, $X$ is an $N\times M$ standard complex Gaussian matrix, $Y$ is a member of the $N\times N$ GUE defined with the diagonal elements
standard real Gaussians, while $t>0$ is a real parameter.
Specifically, with
\begin{equation}\label{3.2}
\mathcal{L}^{-1}[f(s)](x)=\lim_{\epsilon\to 0^+}\int_{i\R}e^{sx+\epsilon s^2/2}f(s)\,\frac{ds}{2\pi i},
\qquad x\in\R
\end{equation}
denoting the two-sided Laplace transform, it was shown that the eigenvalue PDF of \eqref{3.1}, with the eigenvalues denoted $\{\lambda_i\}_{i=1}^N$, is equal to
\begin{equation}\label{3.3}
\frac{1}{N!}\frac{\Delta_N(\lambda)}{\Delta_N(a)}
\det\bigg[\mathcal{L}^{-1}\Big[\frac{e^{ts^2/2}}{\det(\mathbb{I}+Bs)}\Big](\lambda_i-a_j)\bigg]_{i,j=1}^N.
\end{equation}
This, being of the form of a Vandermonde determinant $\Delta_N(\lambda)$ times the Slater--type determinant $\det[g_j(\lambda_i)]_{i,j=1}^N$, is an example of a so-called polynomial ensemble. The latter underlies many integrable structures in random matrix theory \cite{AKW13,AIK13,KS14,KZ14,AI15,FW15,FL15a,FIL17}. Below the matrix integrals \eqref{2}, \eqref{3} and \eqref{5} will be used to show that the natural anti-symmetric and anti-self dual analogues of \eqref{3.1} have for their eigenvalue PDF a polynomial ensemble form analogous to \eqref{3.3}.

\subsection{Even dimensional case: proof of Theorem \ref{prop4}}\label{sec:3.2}

The natural anti-symmetric analogue of \eqref{3.1} is (\ref{3.4}). Here we will make essential use of the Harish-Chandra matrix integral (\ref{2}) to give derivation of the claim of Theorem \ref{prop4} that its eigenvalue PDF is given by (\ref{3.5}).

Define the matrix-valued Fourier-Laplace transform of $M $ by
\begin{equation}\label{3.6a}
\hat f_M( C)=\mathbb{E}_M\big[e^{-\frac{1}{2}\Tr M  C}\big],
\end{equation}
where $C$ is a $2n\times 2n$ Hermitian anti-symmetric matrix. We note that in our case, the expectation with respect to $M$ is in fact an expectation respect to the triple $(\Omega,X,Y)$, which factorises. Thus
\begin{equation}\label{3.7}
\hat f_M (C)=
\mathbb{E}_\Omega\big[e^{-\frac{1}{2}\Tr \Omega A\Omega^T C}\big]
\mathbb{E}_X\big[e^{-\frac{1}{2}\Tr XBX^T C}\big]
\mathbb{E}_Y\big[e^{-\frac{\sqrt{t}}{2} \Tr YC}\big].
\end{equation}

Consider the first expectation in \eqref{3.7}. According to the orthogonal integral \eqref{2}, one has
\begin{equation}\label{3.8}
\mathbb{E}_\Omega\left(\exp(-\frac{1}{2}\Tr \Omega A\Omega^T C)\right)=\prod_{l=0}^{n-1}(2l)!\frac{\det[\cosh (a_ic_j)]_{i,j=1}^n}{{\Delta}^\even_n (a){\Delta}^\even_n (c)},
\end{equation}
where $\{c_j\}$ denotes the positive eigenvalues of the matrix $C$. In relation to the second expectation in~\eqref{3.7},
for convenience set $l = 2m$ (this does not effect the final expression). Then
$X=[x_{j,k}]_{j=1,...,2n}^{k=1,...,2m}$, and we write $\{b_k\}$ and $\{c_j\}$ for the positive eigenvalues of $B$ and $C$. Using the invariance of $X$ under orthogonal similarity transformation, we have
\begin{equation*}
\frac{1}{2}\Tr XBX^T C=\sum_{k=1}^m\sum_{j=1}^n b_kc_j(x_{2k-1,2j}x_{2k,2j-1}-x_{2k-1,2j-1}x_{2k,2j}).
\end{equation*}
Hence, by Gaussian integration
\begin{equation}\label{2nd-expect}
\mathbb{E}_X\left(\exp(-\frac{1}{2}\Tr XBX^T C)\right)
=\prod_{k=1}^m\prod_{j=1}^n(1-b_kc_j)^{-1}(1+b_kc_j)^{-1}
= \prod_{j=1}^n \det(\mathbb{I}+Bc_j)^{-1},
\end{equation}
where in last equality we recall that the eigenvalues of $B$ comes in pairs $\{\pm b_k\}$. Here, without loss of generality, we have assumed that $\{c_j\}$'s  satisfy $|b_k c_j|<1$ for all $j,k$; otherwise, we have to resort to the matrix-valued  Fourier transform as in  \eqref{3.6a}.
Finally, the third expectation in~\eqref{3.7} yields
\begin{equation}\label{3rd-expect}
\mathbb{E}_Y\left(\exp(-\frac{\sqrt{t}}{2} \Tr YC)\right)
=\mathbb{E}_Y\left(\exp(-\sqrt{t}\sum_{i=1}^n y_{2i-1,2i}c_i)\right)=\prod_{j=1}^n e^{\frac{1}{2}tc_j^2},
\end{equation}
with $Y=i[y_{j,k}]_{j,k=1,...,2n}$.

Substituting~\eqref{3.8}, \eqref{2nd-expect} and~\eqref{3rd-expect} into \eqref{3.7}, we thus have the evaluation
\begin{equation}\label{3.9}
\hat f_M(C)=\frac{\prod_{l=0}^{n-1}(2l)!}{{\Delta}_n^\even (a){\Delta}_n^\even (c)}
\det\bigg[\frac{e^{\frac{1}{2}tc_j^2}\cosh (a_i c_j)}{\det(\mathbb{I}+Bc_j)}\bigg]_{i,j=1}^n.
\end{equation}
On the other hand, noting the form \eqref{3.9} and Proposition \ref{propW1} in \S \ref{sect3.5} below, suppose $M$ is invariant under $2n\times 2n$ orthogonal change of basis, and has eigenvalue PDF
\begin{equation}\label{3.10}
\frac{1}{Z}{\Delta}_n^\even (\lambda)\det[g_i(\lambda_j)]_{i,j=1}^n, \qquad 0<\lambda_1<\cdots<\lambda_n.
\end{equation}
Then,
\begin{equation*}
\mathbb{E}_M\left(e^{-\frac12\Tr M C}\right)=\mathbb{E}_M\left(\int_{\gO(2n)} e^{-\frac12\Tr R M R^T C} (R^TdR)\right),
\end{equation*}
where $(R^TdR)$ is the normalised Haar measure on $\gO(2n)$. The expectation over $R$ is evaluated by the matrix integral \eqref{2}, and so by the ansatz~\eqref{3.10}
\begin{multline}
\mathbb{E}_M\left(e^{-\frac12\Tr M C}\right)
\\
= \frac{\prod_{l=0}^{n-1}(2l)!}{Z{\Delta}_n^\even (c)}
\idotsint\limits_{\ 0 < \lambda_1 < \cdots < \lambda_n} d\lambda_1 \cdots d\lambda_n\, \det[\cosh (\lambda_ic_j)]_{i,j=1}^n\det[g_j(\lambda_i)]_{i,j=1}^n.
\end{multline}
It follows by Andreief's integration formula (see e.g. \cite[Eq. (5.17c)]{Fo10}) that
\begin{equation}\label{3.11}
\mathbb{E}_M\left(e^{-\frac{1}{2}\Tr M C}\right)
=\frac{\prod_{l=0}^{n-1}(2l)!}{Z{\Delta}_n^\even (c)} \det\Big[\int_0^\infty g_k(x) \cosh (x c_j)\,dx\Big]_{j,k=1}^n.
\end{equation}
Comparison of \eqref{3.11} with \eqref{3.9} shows that the ansatz \eqref{3.10} is indeed correct, with the remaining task being to evaluate $g_k(\lambda)$. For this purpose, put $c_j\mapsto ic_j$ and suppose
\begin{equation*}
\int_0^\infty g_k(x) \cos (x c_j)\,dx=G_k(ic_j).
\end{equation*}
Taking the inverse cosine transform shows
\begin{equation}\label{3.12}
g_k(x)=\frac{2}{\pi}\int_0^\infty G_k(ic) \cos (x c)\,dc.
\end{equation}
But according to \eqref{3.9},
\begin{equation*}
G_k(ic_j)=\frac{e^{-\frac{1}{2}tc_j^2}\cos (a_k c_j)}{\det(\mathbb{I}+iBc_j)}.
\end{equation*}
Substituting in \eqref{3.12} and appropriately determining $Z$ as required by a comparison of \eqref{3.9} with \eqref{3.11} establishes \eqref{3.5}.

\subsection{Odd dimensional case}\label{sec:3.3}
In this case, let $\Omega$ be a $(2n+1)\times (2n+1)$ Haar distributed real orthogonal matrix, $A$ and $B$ be Hermitian anti-symmetric matrices of size $(2n+1)\times (2n+1)$ and $l\times l$ respectively. Also let $X$ be a $(2n+1) \times l$ real standard Gaussian matrix,  and let $Y$ be a $(2n+1)\times (2n+1)$ standard Gaussian Hermitian anti-symmetric matrix (PDF proportional to $e^{-\frac{1}{4}\Tr Y^2}$). We  define the $(2n+1)\times (2n+1)$  Hermitian anti-symmetric matrix
\begin{equation}\label{3.4odd}
M =\Omega A\Omega^T+XBX^T+\sqrt{t}Y,
\end{equation}
and proceed as in the even dimensional case. As in the latter, the following theorem does not depend on the parity of $l$; for convenience it will be assumed to be odd.

\begin{theorem}\label{prop4odd}
Let the positive eigenvalues of $A$ and $M$ be denoted $\{a_j\}_{j=1}^n$ and $\{\lambda_j\}_{j=1}^n$. Then the eigenvalue PDF of $M$ in \eqref{3.4odd} is equal to
\begin{equation}\label{3.5odd}
\frac{{\Delta}^\odd_n (\lambda)}{{\Delta}^\odd_n (a)}\det[g_k(\lambda_j)]_{j,k=1}^n, \qquad 0< \lambda_1<\lambda_2<\cdots <\lambda_n,
\end{equation}
where
\begin{equation}\label{3.6odd}
g_k(\lambda)=\frac{2}{\pi}\int_0^\infty e^{-\frac{1}{2}tc^2}\frac{\sin(a_k c)\,\sin(\lambda c)}{\det(\mathbb{I}+icB)}\,dc.
\end{equation}
\end{theorem}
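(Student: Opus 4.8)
The plan is to mirror the derivation of Theorem~\ref{prop4} given in \S\ref{sec:3.2}, replacing the even orthogonal Harish-Chandra integral~\eqref{2} by its odd counterpart~\eqref{3} throughout. As in the even case, I would begin from the matrix-valued Fourier--Laplace transform $\hat f_M(C)=\mathbb{E}_M[e^{-\frac{1}{2}\Tr M C}]$, now with $C$ a $(2n+1)\times(2n+1)$ Hermitian anti-symmetric matrix whose nonzero eigenvalues are $\{\pm i c_j\}_{j=1}^n$ together with a forced zero eigenvalue. Since the expectation over $M$ factorises over the independent triple $(\Omega,X,Y)$ exactly as in~\eqref{3.7}, the computation reduces to evaluating three separate integrals.

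For the $\Omega$-expectation I would invoke~\eqref{3} in place of~\eqref{2}; this replaces the $\det[\cosh(a_i c_j)]$ structure of~\eqref{3.8} by $\det[\sinh(a_i c_j)]_{i,j=1}^n$ and the even Vandermondes by the odd ones $\Delta^\odd_n$. The Gaussian $X$-integration proceeds as in~\eqref{2nd-expect}; the only new feature is that the zero eigenvalues of $B$ and $C$ (present because $l$ and $2n+1$ are odd) contribute trivially to the quadratic form $\frac{1}{2}\Tr X B X^T C$, so the rows and columns they index integrate out to a constant and the result $\prod_{j=1}^n\det(\mathbb{I}+Bc_j)^{-1}$ is unchanged. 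Because $\det(\mathbb{I}+Bc_j)$ already encodes any zero eigenvalue of $B$, this is precisely why the parity of $l$ is immaterial. The $Y$-expectation again yields $\prod_{j=1}^n e^{\frac{1}{2}t c_j^2}$, the zero mode contributing nothing. Assembling these gives the odd analogue of~\eqref{3.9}, with $\hat f_M(C)$ proportional to $\Delta^\odd_n(a)^{-1}\Delta^\odd_n(c)^{-1}\det[e^{\frac{1}{2}tc_j^2}\sinh(a_i c_j)/\det(\mathbb{I}+Bc_j)]_{i,j=1}^n$.

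I would then impose the polynomial-ensemble ansatz~\eqref{3.5odd}, compute its transform by applying~\eqref{3} to the inner $\int_{\gO(2n+1)}$ integral and then Andreief's formula, obtaining the odd analogue of~\eqref{3.11} with a $\det[\int_0^\infty g_k(x)\sinh(xc_j)\,dx]$ structure. Matching this against the explicit $\hat f_M(C)$ fixes $\int_0^\infty g_k(x)\sinh(xc_j)\,dx$, and after the substitution $c_j\mapsto ic_j$ one reads off the sine transform $\int_0^\infty g_k(x)\sin(xc)\,dx=e^{-\frac{1}{2}tc^2}\sin(a_k c)/\det(\mathbb{I}+iBc)$; inverting by $g_k(x)=\frac{2}{\pi}\int_0^\infty G_k(ic)\sin(xc)\,dc$ then yields~\eqref{3.6odd}. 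The one genuine point of difference from the even case, requiring care rather than verbatim substitution, is the passage from $\cosh/\cos$ to $\sinh/\sin$: because $M$ and $A$ are odd-dimensional anti-symmetric, the relevant one-particle eigenfunctions are the \emph{odd} heat-equation solutions (reflected in the sign change relative to Proposition~\ref{prop2}), so the correct inversion is the Fourier sine transform. Verifying that the normalisation $Z$ and the odd Vandermonde factors combine consistently, and that the forced zero eigenvalue threads correctly through the Harish-Chandra integral, the Gaussian integration and Andreief's formula, is the main technical task.
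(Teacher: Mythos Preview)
Your proposal is correct and follows essentially the same approach as the paper's proof: factorise the matrix Fourier--Laplace transform over $(\Omega,X,Y)$, evaluate the three factors using~\eqref{3}, Gaussian integration, and the $Y$-expectation respectively, then match against the transform of the polynomial-ensemble ansatz via Andr\'eief's formula and invert the resulting sine transform. The paper likewise notes that the parity of $l$ is immaterial and, for convenience, carries out the computation with $l=2m+1$; your remark that the zero modes thread harmlessly through each step is exactly the point.
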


\begin{proof}
Without loss of generality, assume that the eigenvalues of $A$ and $M$ are distinct and $l=2m+1$.
Define the matrix-valued Fourier-Laplace transform of $M $ by
\begin{equation}\label{3.6aodd}
\hat f_M( C)=\mathbb{E}_M\big[e^{-\frac{1}{2}\Tr M  C}\big],
\end{equation}
where $C$ is a  $(2n+1)\times (2n+1)$ Hermitian anti-symmetric matrix. By the independence of   $\Omega,X$ and $Y$, we get
\begin{equation}\label{3.7odd}
\hat f_M (C)=
\mathbb{E}_\Omega\big[e^{-\frac{1}{2}\Tr \Omega A\Omega^T C}\big]
\mathbb{E}_X\big[e^{-\frac{1}{2}\Tr XBX^T C}\big]
\mathbb{E}_Y\big[e^{-\frac{\sqrt{t}}{2} \Tr YC}\big].
\end{equation}

Consider the first expectation in \eqref{3.7odd}. According to the orthogonal integral \eqref{3}, one has
\begin{equation}\label{3.8odd}
\mathbb{E}_\Omega\left(\exp(-\frac{1}{2}\Tr \Omega A\Omega^T C)\right)=\prod_{l=0}^{n-1}(2l+1)!\frac{\det[\sinh(a_ic_j)]_{i,j=1}^n}{{\Delta}^\odd_n (a){\Delta}^\odd_n (c)},
\end{equation}
where $\{c_j\}$ denotes the positive eigenvalues of the matrix $C$. In relation to the second expectation in~\eqref{3.7}, using the orthogonal invariance of $X$, we may assume that
\begin{equation}
B=
\operatorname{diag}\left(
\Big[\begin{smallmatrix}0 & i b_1 \\ -ib_1 & 0\end{smallmatrix}\Big],
\ldots,
\Big[\begin{smallmatrix}0 & ib_m \\ -ib_m & 0\end{smallmatrix}\Big],0
\right), \quad C=
\operatorname{diag}\left(
\Big[\begin{smallmatrix}0 & i c_1 \\ -ic_1 & 0\end{smallmatrix}\Big],
\ldots,
\Big[\begin{smallmatrix}0 & ic_n \\ -ic_n & 0\end{smallmatrix}\Big],0
\right),
\end{equation}
for some positive $\{b_k\}_{k=1}^m$ and $\{c_j\}_{j=1}^{n}$.
It then follows that
\begin{equation*}
\frac{1}{2}\Tr XBX^T C=\sum_{k=1}^m\sum_{j=1}^n b_kc_j( x_{2j-1,2k-1}x_{2j,2k}-x_{2j-1,2k}x_{2j,2k-1}),
\end{equation*}
which implies
\begin{equation}\label{2nd-expectodd}
\mathbb{E}_X\left(\exp(-\frac{1}{2}\Tr XBX^T C)\right)
=\prod_{k=1}^m\prod_{j=1}^n(1-b_kc_j)^{-1}(1+b_kc_j)^{-1}
= \prod_{j=1}^n \det(\mathbb{I}+Bc_j)^{-1}.
\end{equation}
Finally, the third expectation in~\eqref{3.7odd} yields
\begin{equation}\label{3rd-expectodd}
\mathbb{E}_Y\left(\exp(-\frac{\sqrt{t}}{2} \Tr YC)\right)
=\mathbb{E}_Y\left(\exp(-\sqrt{t}\sum_{k=1}^n c_k y_{2k-1,2k})\right)=\prod_{k=1}^n e^{\frac{1}{2}tc_k^2},
\end{equation}
with $Y=i[y_{j,k}]_{j,k=1,...,2n+1}$.

Inserting~\eqref{3.8odd}, \eqref{2nd-expectodd} and~\eqref{3rd-expectodd} into \eqref{3.7odd}, we thus arrive at
\begin{equation}\label{3.9odd}
\hat f_M(C)=\frac{\prod_{l=0}^{n-1}(2l+1)!}{{\Delta}_n^\odd (a){\Delta}_n^\odd (c)}
\det\bigg[\frac{e^{\frac{1}{2}tc_j^2}\sinh (a_k c_j)}{\det(\mathbb{I}+Bc_j)}\bigg]_{j,k=1}^{n}.
\end{equation}
On the other hand, noting the equation \eqref{3.9odd} and  \S \ref{sect3.5} below, we may suppose that $M$ is invariant under $(2n+1)\times (2n+1)$ orthogonal change of basis, and has eigenvalue PDF
\begin{equation}\label{3.10odd}
\frac{1}{Z}{\Delta}_n^\odd (\lambda)\det[g_k(\lambda_j)]_{k,j=1}^n, \qquad 0<\lambda_1<\cdots<\lambda_n.
\end{equation}
Then, by the matrix integral \eqref{3}, we can evaluate
\begin{equation*}
\mathbb{E}_M\left(e^{-\frac12\Tr M C}\right)=\mathbb{E}_M\left(\int_{\gO(2n+1)} e^{-\frac12\Tr R M R^T C} (R^TdR)\right),
\end{equation*}
and get under the ansatz~\eqref{3.10odd}
\begin{equation}\label{3.11odd}
\mathbb{E}_M\left(e^{-\frac12\Tr M C}\right)
=\frac{\prod_{l=0}^{n-1}(2l+1)!}{Z{\Delta}_n^\odd (c)} \det\Big[\int_0^\infty g_k(x) \sinh(x c_j)\,dx\Big]_{j,k=1}^n.
\end{equation}
Comparing \eqref{3.11odd} with \eqref{3.9odd}, we choose $Z = \Delta_n^{\rm odd}(a)$ so that the factors outside the determinant match. To determine the function $g_k$ so that the entries of the determinant match, we put $c_j\mapsto ic_j$ to obtain
\begin{equation*}
\int_0^\infty g_k(x) \sin (x c_j)\,dx=\frac{e^{-\frac{1}{2}tc_j^2}\sin(a_k c_j)}{\det(\mathbb{I}+ic_j B)}.
\end{equation*}
Taking the inverse sine transform shows
\begin{equation}\label{3.12odd}
g_k(x)=\frac{2}{\pi}\int_0^\infty   \sin(x c)\, \frac{e^{-\frac{1}{2}tc^2}\sin(a_k c)}{\det(\mathbb{I}+ic B)}dc,
\end{equation}
and the statement of the theorem follows.
\end{proof}

\subsection{Anti-self dual case}\label{sec:3.4}
We recall that an even dimensional matrix $X$ is said to be symplectic if
$$
X^T J_{2N} X = J_{2N}, \qquad J_{2N} = \begin{bmatrix} 0_{N } & \mathbb I_{N } \\
-  \mathbb I_{N }  &
0_{N } \end{bmatrix}.
$$
Let $\Omega$ be a $2n \times 2n$ unitary symplectic matrix. It must then have the block structure
\begin{equation}\label{XZW}
\Omega = \begin{bmatrix} Z & W \\ - \overline{W} & \overline{Z} \end{bmatrix},
\end{equation}
where $W$ and $Z$ are of size $n \times n$. With $X_1, X_2$     $n \times m$ complex standard Gaussian matrices, let
$X$ be a  $2\times 2$ block matrix
\begin{equation}\label{def:X}
X=\begin{bmatrix}
X_1 & X_2 \\
-\overline{X_2} & \overline{X_1}
\end{bmatrix};
\end{equation}
cf.~(\ref{XZW}).

For $A_j$ ($j=1,\dots,4$) $n \times n$ complex matrix, define the dual operation {\small $\rm D$} by
$$
\begin{bmatrix} A_1 & A_2 \\
A_3 & A_4
\end{bmatrix}^{\rm D}
=
\begin{bmatrix} A_4^T & -A_2^T \\
-A_3^T & A_1^T
\end{bmatrix}.
$$
A block matrix $\tilde{H}$ is said to be anti-self dual if $$\tilde{H} = - \tilde{H}^{\rm D},$$
and thus
$$
\tilde{H} = \begin{bmatrix} H_1 & H_2 \\
- \overline{H_2} & \overline{H_1} \end{bmatrix}, \quad H_{1}^{\dagger}=-H_{1}, \quad H_{2}^{T}=H_{2}.
$$
The matrix $$H := i \tilde{H}$$ is then termed Hermitian anti-self dual.
Define $A$ and $B$  to be Hermitian anti-self dual   matrices of size $ 2n\times 2n$ and $2m\times 2m$ respectively. Also, let $Y$ be a $2n \times 2n$ Hermitian anti-self dual matrix with PDF
proportional to $\exp(-\frac{1}{4} {\rm Tr} \, Y^2)$.

In terms of the above defined matrices, define the $2n\times 2n$ Hermitian anti-self dual matrix
\begin{equation}\label{3.4selfdual}
M =\Omega A\Omega^\dagger+XBX^\dagger+\sqrt{t}Y.
\end{equation}
As for the analogous construction in the case of Hermitian anti-symmetric matrices, our
interest is in the eigenvalue PDF of $M$.

\begin{theorem}\label{prop4selfdual}
Let the positive eigenvalues of $A$ and $M$ be denoted $\{a_j\}_{j=1}^n$ and $\{\lambda_j\}_{j=1}^n$. Then the eigenvalue PDF of $M$ in \eqref{3.4selfdual} is equal to
\begin{equation}\label{3.5dual}
\frac{{\Delta}^\odd_n (\lambda)}{{\Delta}^\odd_n (a)}\det[g_k(\lambda_j)]_{j,k=1}^n, \qquad 0 < \lambda_1<\lambda_2<\cdots <\lambda_n,
\end{equation}
where
\begin{equation}\label{3.6dual}
g_k(\lambda)=\frac{2}{\pi}\int_0^\infty e^{-\frac{1}{2}tc^2}\frac{\sin (a_k c)\,\sin(\lambda c)}{\det(\mathbb{I}+ icB)}\,dc.
\end{equation}
\end{theorem}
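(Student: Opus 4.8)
The plan is to follow the template of the odd dimensional orthogonal case (Theorem \ref{prop4odd}) essentially verbatim, the crucial structural input being that the symplectic Harish-Chandra integral \eqref{5} coincides with its odd orthogonal counterpart \eqref{3} (with $m=N$). First I would introduce the matrix-valued Fourier--Laplace transform $\hat f_M(C)=\mathbb{E}_M[e^{-\frac{1}{2}\Tr MC}]$, now with $C$ a $2n\times 2n$ Hermitian anti-self dual matrix, and exploit the independence of the three ingredients $\Omega,X,Y$ in \eqref{3.4selfdual} to factorise
\begin{equation*}
\hat f_M(C)=\mathbb{E}_\Omega[e^{-\frac{1}{2}\Tr \Omega A\Omega^\dagger C}]\,\mathbb{E}_X[e^{-\frac{1}{2}\Tr XBX^\dagger C}]\,\mathbb{E}_Y[e^{-\frac{\sqrt t}{2}\Tr YC}].
\end{equation*}

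The three expectations would then be evaluated exactly as in \S\ref{sec:3.3}. For the first, since $M$ is constructed to be invariant under unitary symplectic conjugation, I would apply the symplectic integral \eqref{5} to the term $\Omega A\Omega^\dagger$; because the right-hand side of \eqref{5} is identical to that of \eqref{3}, this produces $\prod_{l=0}^{n-1}(2l+1)!\,\det[\sinh(a_ic_j)]_{i,j=1}^n/(\Delta^\odd_n(a)\Delta^\odd_n(c))$, with $\{a_j\}$ and $\{c_j\}$ the positive eigenvalues of $A$ and $C$. For the third, the Gaussian integral over the quaternionic matrix $Y$ reduces, after bringing $C$ to its canonical $2\times 2$ block form, to a product of decoupled one-dimensional Gaussian integrals and yields $\prod_{j=1}^n e^{\frac{1}{2}tc_j^2}$. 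Assembling these brings $\hat f_M(C)$ into precisely the form \eqref{3.9odd}.

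The one genuinely new computation is the second expectation, the Gaussian integral over the quaternionic matrix $X$ of block form \eqref{def:X}. Using the unitary symplectic invariance of $X$ I would diagonalise $B$ and $C$ into their canonical quaternionic blocks, expand $\frac{1}{2}\Tr XBX^\dagger C$ as a quadratic form in the independent complex entries of $X_1,X_2$, and perform the resulting decoupled complex Gaussian integrals. I expect this to be the main obstacle: one must track the pairing of the quaternionic blocks carefully to confirm convergence (requiring $|b_kc_j|<1$, or else the full matrix-valued transform of \eqref{3.6a}) and to verify that the integral collapses to the same product $\prod_{j=1}^n\det(\mathbb{I}+Bc_j)^{-1}$ as in the anti-symmetric cases. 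Granting this, the remainder is routine.

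Finally I would close the argument exactly as in the proof of Theorem \ref{prop4odd}. Invoking Proposition \ref{propW1} of \S\ref{sect3.5}, I would posit the polynomial-ensemble ansatz \eqref{3.5dual}, compute $\mathbb{E}_M[e^{-\frac{1}{2}\Tr MC}]$ by inserting a unitary symplectic average and applying \eqref{5} together with Andreief's identity to obtain $\prod_{l=0}^{n-1}(2l+1)!\,\det[\int_0^\infty g_k(x)\sinh(xc_j)\,dx]_{j,k=1}^n/(Z\Delta^\odd_n(c))$, and then match this against \eqref{3.9odd}. Setting $Z=\Delta^\odd_n(a)$ makes the prefactors agree, and substituting $c_j\mapsto ic_j$ turns the matching of the determinantal entries into an inverse sine transform, which inverts to give exactly \eqref{3.6dual}. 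Since every step after the quaternionic Gaussian integral is identical to the odd orthogonal derivation, the result \eqref{3.5dual} follows.
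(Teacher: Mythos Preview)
Your proposal is correct and follows essentially the same approach as the paper: factorise the matrix Fourier--Laplace transform by independence, evaluate the three expectations using \eqref{5}, a quaternionic Gaussian computation, and a scalar Gaussian integral respectively to reach the analogue of \eqref{3.9odd}, and then invert via the ansatz-and-sine-transform argument from Theorem \ref{prop4odd}. The paper carries out the quaternionic Gaussian step by diagonalising $B$ and $C$ in the form $\operatorname{diag}(b_1,\dots,b_m,-b_1,\dots,-b_m)$ rather than in $2\times2$ blocks, which makes the trace collapse to $\sum_{j,k} b_k c_j(|x^{(2)}_{jk}|^2-|x^{(1)}_{jk}|^2)$ and the product formula immediate; but this is only a cosmetic difference from your plan.
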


\begin{proof}
For $C$ being a  $2n\times 2n$ Hermitian anti-self dual matrix, let
\begin{equation}\label{3.6dual}
\hat f_M( C)=\mathbb{E}_M\big[e^{-\frac{1}{2}\Tr M  C}\big],
\end{equation}
then the independence of $\Omega,X$ and $Y$ implies that
\begin{equation}\label{3.7dual}
\hat f_M (C)=
\mathbb{E}_\Omega\big[e^{-\frac{1}{2}\Tr \Omega A\Omega^\dagger C}\big]
\mathbb{E}_X\big[e^{-\frac{1}{2}\Tr XBX^\dagger C}\big]
\mathbb{E}_Y\big[e^{-\frac{\sqrt{t}}{2} \Tr YC}\big].
\end{equation}

Consider the first expectation in \eqref{3.7dual}, according to the   integral \eqref{5} one has
\begin{equation}\label{3.8dual}
\mathbb{E}_\Omega\left(\exp(-\frac{1}{2}\Tr \Omega A\Omega^\dagger C)\right)=\prod_{l=0}^{n-1}(2l+1)!\frac{\det[\sinh(a_ic_j)]_{i,j=1}^n}{{\Delta}^\odd_n (a){\Delta}^\odd_n (c)},
\end{equation}
where $\{c_j\}$ denotes the positive eigenvalues of the matrix $C$.
For the second expectation, using the invariance of $X$,
we can assume that
\begin{equation}
B=
\operatorname{diag}\left(
  b_1, \ldots,  b_m,  - b_1, \ldots,  -b_m\right), \qquad C= \operatorname{diag}\left(
  c_1, \ldots,  c_n,  -c_1, \ldots,  -c_n\right),
\end{equation}
for some positive $\{b_k\}_{k=1}^m$ and $\{c_j\}_{j=1}^{n}$.
Recall the definition of $X$ given in \eqref{def:X}, with $X_{l}=[x^{(l)}_{jk}], l=1,2$,  we have
\begin{equation*}
\frac{1}{2}\Tr XBX^\dagger C=\sum_{k=1}^m\sum_{j=1}^n b_kc_j\big( |x^{(2)}_{jk}|^2-|x^{(1)}_{jk}|^2\big),
\end{equation*}
which implies
\begin{equation}\label{2nd-expectdual}
\mathbb{E}_X\left(\exp(-\frac{1}{2}\Tr XBX^\dagger C)\right)
= \prod_{j=1}^n \det(\mathbb{I}+Bc_j)^{-1}.
\end{equation}
Finally, the third expectation in~\eqref{3.7dual} yields
\begin{equation}\label{3rd-expectdual}
\mathbb{E}_Y\left(\exp(-\frac{\sqrt{t}}{2} \Tr YC)\right)
=\mathbb{E}_Y\left(\exp(\sqrt{t}\sum_{k=1}^n c_k y^{(1)}_{k,k})\right)=\prod_{k=1}^n e^{\frac{1}{2}tc_k^2},
\end{equation}
with $Y=i\begin{bmatrix} Y_1 & Y_2 \\
- \overline{Y_2} & \overline{Y_1} \end{bmatrix}$ and $Y_1=[y^{(1)}_{j,k}]_{j,k=1,...,n}$.

Inserting~\eqref{3.8dual}, \eqref{2nd-expectdual} and~\eqref{3rd-expectdual} into \eqref{3.7dual} then gives us
\begin{equation}\label{3.9dual}
\hat f_M(C)=\frac{\prod_{l=0}^{n-1}(2l+1)!}{{\Delta}_n^\odd (a){\Delta}_n^\odd (c)}
\det\bigg[\frac{e^{\frac{1}{2}tc_j^2}\sinh (a_k c_j)}{\det(\mathbb{I}+c_jB)}\bigg]_{j,k=1}^{n}.
\end{equation}
As in the Hermitian anti-symmetric  matrices case,  \eqref{3.5dual} follows by taking the inverse transform and this completes the proof of the theorem.
\end{proof}

As for the evaluation of the respective Harish-Chandra group integrals (\ref{3}) and (\ref{5}), we
see that the eigenvalue PDFs of (\ref{3.4selfdual}) and (\ref{3.4odd}) are identical.

\subsection{Transform structure} \label{sect3.5}

Kieburg and K\"osters have recently introduced the matrix spherical transform to the analysis of
both the eigenvalue and singular value distribution of products of unitary invariant matrix ensembles
\cite{KK16a,KK16b,FKK17,Ki17}. It was pointed out by Kuijlaars and Roman \cite{KR16} that the analogue in the
case of sums of unitary invariant matrix ensembles is the matrix Fourier transform. The latter
is equivalent to the matrix-valued  Fourier--Laplace transform (\ref{3.6a}).
Given these developments, it is then of some interest to put the workings of
\S~\ref{sec:3.2} (similar ideas apply to \S~\ref{sec:3.3} and \ref{sec:3.4}) into an analogous framework.

\begin{lemma}
Let $X$ and $C$ be $2n \times 2n$ Hermitian anti-symmetric matrices, we write $X = i [x_{jk} ]_{j,k=1}^{2n}$ and
$C = i [c_{jk} ]_{j,k=1}^{2n}$. Furthermore, let $X$ be random with distribution having PDF
$f(X)$, and define $\hat{f}_X(C)$ by the matrix Fourier--Laplace transform~\eqref{3.6a}.
With  $(dC)=\prod_{1\leq j<k\leq 2n}dc_{jk}$, we have
\begin{equation}\label{w.0}
f(X) = \Big ( \frac{1}{2\pi} \Big )^{n(2n-1)} \int \hat{f}_X(iC) \exp \big ( \frac{i}{2}
{\rm Tr} \, X C \big ) \, (d C).
\end{equation}
\end{lemma}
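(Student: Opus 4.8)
The plan is to reduce the matrix identity \eqref{w.0} to the classical multidimensional Fourier inversion theorem by passing to the independent real coordinates of the anti-symmetric matrices, so that the matrix Fourier--Laplace transform becomes an ordinary Euclidean Fourier transform.

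First I would compute the trace $\Tr(XC)$ in terms of the free parameters. Writing $X=i[x_{jk}]$ and $C=i[c_{jk}]$ with $[x_{jk}]$ and $[c_{jk}]$ real and anti-symmetric, a direct expansion gives $\Tr(XC)=-\Tr([x_{jk}][c_{jk}])=\sum_{j,k=1}^{2n}x_{jk}c_{jk}=2\sum_{1\le j<k\le 2n}x_{jk}c_{jk}$, where the middle equality uses $c_{kj}=-c_{jk}$ and the last step folds the full sum onto the strictly upper-triangular indices; the diagonal terms vanish since $x_{jj}=c_{jj}=0$. Consequently the matrix $X$ carries exactly $\binom{2n}{2}=n(2n-1)$ independent real coordinates $\{x_{jk}\}_{j<k}$, and the density $f$ is taken with respect to the measure $(dX)=\prod_{1\le j<k\le 2n}dx_{jk}$. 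This count matches both the number of factors in $(dC)$ and the exponent $n(2n-1)$ of the prefactor in \eqref{w.0}.

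Next I would rewrite the two transforms appearing in the statement as scalar integrals in these coordinates. By definition $\hat f_X(iC)=\mathbb{E}_X[e^{-\frac12\Tr(X\cdot iC)}]$, and since $\Tr(X\cdot iC)=i\,\Tr(XC)=2i\sum_{j<k}x_{jk}c_{jk}$, this equals $\int f(X')\exp\!\big(-i\sum_{j<k}x'_{jk}c_{jk}\big)\,(dX')$. The kernel of the inversion integral is $\exp(\tfrac{i}{2}\Tr XC)=\exp\!\big(i\sum_{j<k}x_{jk}c_{jk}\big)$. Introducing the Euclidean pairing $\langle x,c\rangle=\sum_{j<k}x_{jk}c_{jk}$ on $\mathbb{R}^{n(2n-1)}$, these are precisely the forward Fourier transform $\hat g(c)=\int f(x')e^{-i\langle x',c\rangle}\,(dX')$ and the oscillatory factor $e^{i\langle x,c\rangle}$, so the right-hand side of \eqref{w.0} is exactly $(2\pi)^{-n(2n-1)}\int \hat g(c)\,e^{i\langle x,c\rangle}\,(dC)$.

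The conclusion then follows from the standard $N$-dimensional Fourier inversion theorem with $N=n(2n-1)$, valid under the implicit regularity hypothesis that $f$ and its transform are integrable (which holds for the Gaussian-type densities of interest). The only point genuinely requiring care --- and the step I expect to be error-prone rather than deep --- is the bookkeeping of the factor of $2$ produced by the off-diagonal pairing together with the two factors of $\tfrac12$ in the exponents: these must combine so that the forward transform carries $-i\langle x,c\rangle$, the inversion carries $+i\langle x,c\rangle$, and the normalization $(2\pi)^{-N}$ aligns with the degrees-of-freedom count $N=n(2n-1)$. Verifying this alignment of signs, prefactors, and the number of independent entries completes the argument.
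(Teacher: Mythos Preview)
Your proposal is correct and follows essentially the same route as the paper: both compute $\tfrac{1}{2}\Tr(XC)=\sum_{j<k}x_{jk}c_{jk}$ in the independent real coordinates and then invoke the ordinary multivariate Fourier inversion formula on $\mathbb{R}^{n(2n-1)}$. The paper phrases the final step as integrating over the imaginary $c_{jk}$-axis, which is exactly your substitution $C\mapsto iC$ followed by real integration; the sign and prefactor bookkeeping you flag is the only content, and you have it right.
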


\begin{proof}
We have by definition
\[
\hat{f}_X(C) = \mathbb E_X\big[ e^{-\frac12\Tr XC}\big] =
 \int_{\R^{n(2n-1)}} \prod_{1\leq j<k\leq2n} \,e^{-x_{jk}c_{jk}}f(X)\,(dX).
\]
Multiplying both sides  by
\[
 \exp \Big ( \frac{1}{2} {\rm Tr} \, XC \Big ) = \exp \Big ( \sum_{j<k} x_{jk} c_{jk} \Big ),
\]
then integrating both sides over the imaginary axis in the complex $c_{jk}$-hyperplane $(1 \le j < k \le
2n)$ gives, by the usual multivariate inverse Fourier transform formula, (\ref{w.0}).
\end{proof}

Of particular interest is the case that the transform $\hat{f}_X(C)$ is unchanged by conjugation with real
orthogonal matrices.

\begin{proposition}\label{propW1}
Let $C$ and $X$ be $2n\times 2n$ Hermitian anti-symmetric matrix with positive eigenvalues $c = \{ c_j \}_{j=1}^n$ and $x = \{ x_j \}_{j=1}^n$, respectively. Assume that
$$\hat{f}_X(iC) = \hat{f}_X(i R C R^T)$$
for all $R \in\gO(2n)$. Write
\[
 \hat{f}_X(c) := \hat{f}_X \bigg (  {\rm diag} \, \begin{bmatrix} 0 & i c_j \\ - i c_j & 0 \end{bmatrix}_{j=1}^n \bigg ),
 \]
with  $\hat{f}_X(ic)$  being assumed to decay exponentially
fast in each $c_j$,   and denote by $f(x)$ the positive eigenvalue PDF of $X$.
We have
 \begin{equation}\label{w.2}
 f(x) = \prod_{j=0}^{n-1}\frac{2}{\pi(2j)!}
 \Delta^{\even}_n(x) \int_{0}^\infty dc_1 \cdots  \int_{0}^\infty dc_n \,
 \hat{f}_X(ic)
 \Delta^{\even}_n(ic) \prod_{j=1}^n \cos(x_j c_j).
 \end{equation}
 \end{proposition}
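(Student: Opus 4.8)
The plan is to combine three ingredients: the matrix Fourier--Laplace inversion formula \eqref{w.0} just established, the eigenvalue/angle decomposition \eqref{2.6a} of the flat measure on $2n\times 2n$ real anti-symmetric matrices (applied with $m\to n$), and the orthogonal Harish-Chandra integral \eqref{2}. First I would observe that the hypothesis $\hat{f}_X(iC)=\hat{f}_X(iRCR^T)$ forces the matrix density $f(X)$ itself to be invariant under $X\mapsto RXR^T$: replacing $X$ by $R_0XR_0^T$ in \eqref{w.0} and changing the integration variable $C\mapsto R_0CR_0^T$ (under which $(dC)$ is invariant, $\Tr(R_0XR_0^T\,C)=\Tr(X\,R_0^TCR_0)$ by cyclicity, and $\hat{f}_X$ is unchanged by assumption) reproduces the same integral, so $f(R_0XR_0^T)=f(X)$. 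Consequently the positive-eigenvalue PDF is obtained from the matrix density just by multiplying its value at the canonical block form $\Lambda_X$ by the Jacobian of \eqref{2.6a}, exactly as in \eqref{2.2}, namely
\begin{equation*}
f(x)=\Delta^{\even}_n(x)^2\Big(\tfrac{1}{2\pi}\Big)^n\operatorname{vol}\gO(2n)\,f(\Lambda_X),
\end{equation*}
and the task reduces to evaluating $f(\Lambda_X)$.

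To compute $f(\Lambda_X)$ I would insert $X=\Lambda_X$ into \eqref{w.0} and decompose the integration variable as $C=S\Lambda_CS^T$ by the same rule \eqref{2.6a}, dividing by $\operatorname{vol}\gO(2)^n=(2\pi)^n$ so as to integrate $S$ over all of $\gO(2n)$. Because $\hat{f}_X(iC)$ depends only on the positive eigenvalues $c$ of $C$, it pulls out of the angular integral as $\hat{f}_X(ic)$, leaving
\begin{equation*}
\int_{\gO(2n)}\exp\Big(\tfrac{i}{2}\Tr\,\Lambda_X S\Lambda_CS^T\Big)\,(S^TdS),
\end{equation*}
which is precisely of the form \eqref{2}. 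The one subtlety here is that the prefactor $i$ renders the relevant eigenvalues of the second matrix purely imaginary; since both sides of \eqref{2} are entire in the eigenvalues $y_j$ (the left because $\gO(2n)$ is compact and the integrand entire, the right because the apparent singularities of $1/\Delta^{\even}_n(y)$ are removable), the identity continues analytically to $y_j\mapsto -ic_j$, turning $\cosh(x_iy_j)$ into $\cos(x_ic_j)$ and producing $c_n^{\even}\det[2\cos(x_ic_j)]_{i,j=1}^n/\big(\Delta^{\even}_n(x)\Delta^{\even}_n(ic)\big)$.

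Assembling the two factors, the Jacobian $\Delta^{\even}_n(c)^2$ from the decomposition of $(dC)$ combines with the $1/\Delta^{\even}_n(ic)$ from the Harish-Chandra evaluation to leave a single $\Delta^{\even}_n(ic)$, using $\Delta^{\even}_n(ic)=(-1)^{n(n-1)/2}\Delta^{\even}_n(c)$, while one power of $\Delta^{\even}_n(x)$ survives from the ratio $\Delta^{\even}_n(x)^2/\Delta^{\even}_n(x)$. The integral over the ordered region $0<c_1<\dots<c_n$ is then un-ordered by the standard symmetrisation argument: the product $\hat{f}_X(ic)\Delta^{\even}_n(ic)$ is antisymmetric and $\det[2\cos(x_ic_j)]$ is antisymmetric in the $c_j$, so expanding the determinant and relabelling each of the $n!$ permutations identically replaces the determinant by $\prod_{j}2\cos(x_jc_j)$ and extends every $c_j$-integral to $(0,\infty)$.

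What then remains is purely the bookkeeping of constants, and this is the one place that demands genuine care. One must combine $(\operatorname{vol}\gO(2n))^2$, the accumulated power $(2\pi)^{-n(2n+1)}$, the factor $c_n^{\even}=\tfrac{1}{2^n}\prod_{j=0}^{n-1}(2j)!$ coming from \eqref{2}, and the $2^n$ released by writing $\prod_j 2\cos=2^n\prod_j\cos$, and verify via the explicit value \eqref{2.16} of $\operatorname{vol}\gO(2n)$ (with the $\Gamma(k/2)$ factors and the duplication formula) that they collapse exactly to the stated prefactor $\prod_{j=0}^{n-1}\tfrac{2}{\pi(2j)!}$. I expect this constant-matching, together with the justification of the analytic continuation $\cosh\to\cos$, to be the main obstacle; the structural steps are otherwise forced by the three ingredients above.
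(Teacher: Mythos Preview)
Your proposal is correct and follows essentially the same route as the paper. The only cosmetic difference is that the paper inserts an average over $R\in\gO(2n)$ acting on $X$ (using invariance of $\hat f_X(iC)$ and $(dC)$) and then later integrates out the eigenvectors of $C$ trivially, whereas you fix $X=\Lambda_X$ and perform the angular decomposition $C=S\Lambda_CS^T$ so that the $S$-integral itself becomes the Harish-Chandra integral; by cyclicity $\Tr(R\Lambda_XR^TC)=\Tr(\Lambda_X\,R^TCR)$ these two orderings are the same calculation, and your constant bookkeeping and symmetrisation step match the paper's exactly.
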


 \begin{proof}
 Using the orthogonal invariance of $\hat{f}(iC)$ and the measure $(dC)$, it follows from~\eqref{w.0} that
 \[
 f(X) = \Big ( \frac{1}{2 \pi} \Big )^{n(2n-1)} \int \hat{f}_X(iC)
 \int_{\gO(2n)}  e^{\frac i2 {\Tr} R X R^T C} (R^TdR)\, (d C),
 \]
 where $(R^TdR)$ is the normalised Haar measure on $\gO(2n)$. Here the integration
 over the matrices $C$ has been commuted to occur after
 that over the matrices $R$; this is justified by the assumed
 fast decay of $\hat{f}_X(ic)$. 
 Now, using the matrix integral (\ref{2}) yields
 \[
 f(X) = \frac{1}{(2 \pi)^{n (2n-1)}}
 \frac{\prod_{l=0}^{n-1} (2 l)! }{ \Delta^{\rm even}_n(x)}
 \int \hat{f}_X(ic) \frac{\det [ \cos (x_j c_k) ]_{j,k=1}^n }{ \Delta^{\rm even}_n(ic)} \, (d C).
 \]
The integrand depends only on the eigenvalues and eigenvectors, with the latter decoupling
and only contributing a constant. In this setting, one has for the Jacobian
(see e.g.~\cite[Eq.~(2.18)]{FN09})
\begin{equation}\label{w.1a}
\frac{1}{ Z_n} \Delta^{\rm even}_n(c)^2, \qquad
Z_n =  \pi^{n} (2\pi)^{-n^2} \prod_{j=0}^{n-1} (2j)!.
\end{equation}
Hence
\begin{align*}
{f}(X) & =  \frac{2^n}{(2 \pi)^{n^2}} \frac{1}{\Delta^{\rm even}_n(x)}
\int_{0< c_1< \cdots < c_n} \hat{f}_X(ic) \Delta^{\rm even}_n(ic) \det [ \cos (x_j c_k) ]_{j,k=1}^n dc_1\cdots dc_n\\
& = \frac{2^n}{(2 \pi)^{n^2}} \frac{1}{ \Delta^{\rm even}_n(x)}
\int_0^\infty \cdots  \int_0^\infty \,  \hat{f}_X(ic) \Delta^{\rm even}_n(ic)
\prod_{j=1}^n \cos (x_j c_j)\, dc_1\cdots dc_n.
\end{align*}
Here, the second line follows by first noting that since the integrand is symmetric, the integration
range can be taken to be $\mathbb R^n_+$, provided we divide by $n!$, then noting that since both
$\Delta^{\rm even}_n(c)$ and the determinant are anti-symmetric, only the latter can be
replaced by $n!$ times the diagonal term.

Now, to deduce the eigenvalue PDF of $X$, it only remains to multiply the right-hand side by the Jacobian
(\ref{w.1a}), with $c$ therein replaced by $x$, and (\ref{w.2}) results.
\end{proof}

An immediate (but significant) property of the matrix Fourier--Laplace transform (\ref{3.6a})
is its factorisation with respect to the addition of matrices.

\begin{lemma}\label{lemmaW2}
Let $X$ and $Y$ be independent $2n \times 2n$ Hermitian anti-symmetric random matrices, then we have
\begin{equation}
\hat{f}_{X+Y}(C) = \hat{f}_{X }(C) \hat{f}_{Y}(C).
\end{equation}
\end{lemma}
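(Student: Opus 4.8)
The plan is to compute $\hat{f}_{X+Y}(C)$ directly from its definition \eqref{3.6a} and exploit the independence of $X$ and $Y$ to factorise the expectation. First I would write
\[
\hat{f}_{X+Y}(C) = \mathbb{E}_{X+Y}\big[e^{-\frac{1}{2}\Tr (X+Y) C}\big],
\]
and observe that, since the expectation is taken over the joint distribution of the independent pair $(X,Y)$, this is an expectation over the product measure. The key algebraic input is simply the linearity of the trace, which gives
\[
\Tr\big((X+Y)C\big) = \Tr(XC) + \Tr(YC),
\]
so that the exponential factorises as $e^{-\frac{1}{2}\Tr XC}\,e^{-\frac{1}{2}\Tr YC}$.

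Next I would invoke independence: because $X$ and $Y$ are independent, the expectation of the product of the two exponential factors equals the product of the expectations,
\[
\mathbb{E}_{X,Y}\big[e^{-\frac{1}{2}\Tr XC}\,e^{-\frac{1}{2}\Tr YC}\big]
= \mathbb{E}_{X}\big[e^{-\frac{1}{2}\Tr XC}\big]\,
  \mathbb{E}_{Y}\big[e^{-\frac{1}{2}\Tr YC}\big]
= \hat{f}_{X}(C)\,\hat{f}_{Y}(C),
\]
where the last equality is again just the definition \eqref{3.6a} applied to $X$ and to $Y$ separately. Stringing these equalities together yields the claimed identity.

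I do not anticipate any genuine obstacle here; this is essentially the statement that the Fourier--Laplace transform converts addition of independent random matrices into multiplication of transforms, which is the matrix analogue of the classical fact that the characteristic function of a sum of independent random variables is the product of their characteristic functions. The only point warranting a word of care is that one should note that $X+Y$ is again a $2n\times 2n$ Hermitian anti-symmetric matrix (the class is closed under addition), so that $\hat{f}_{X+Y}(C)$ is well defined as an object of the same type; beyond this, the argument is a short two-line manipulation requiring no further machinery.
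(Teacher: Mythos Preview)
Your proposal is correct and follows essentially the same approach as the paper's own proof: expand the trace by linearity, factor the exponential, and use independence to split the expectation. The paper's argument is the same two-line manipulation (with the remark that this mirrors the scalar case), so there is nothing to add.
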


\begin{proof}
Identically to the scalar case, we have
\[
\hat{f}_{X+Y}(C)  :=
 \mathbb E_{X,Y}\big[e^{\frac12\Tr (X+Y) C}\big]=\mathbb E_{X,Y}\big[e^{\frac12\Tr X C}e^{\frac12\Tr Y C}\big]=
 \mathbb E_{X}\big[e^{\frac12\Tr X C}]\mathbb E_{Y}\big[e^{\frac12\Tr Y C}\big],
\]
where the last equality follows by independence.
\end{proof}

We are now in a position to give an alternative derivation of the result of Theorem \ref{prop4}.
With $M$ given by (\ref{3.4}), in accordance with Lemma \ref{lemmaW2},
$\hat{f}_M$ has been computed above to be equal to (\ref{3.9}).
Substituting in
(\ref{w.2}) we see that the integration can be done row-by-row to reclaim (\ref{3.5}).

\section{Special cases for even dimensions}\label{sec4}


\subsection{A low rank case}
In this subsection we will consider the matrix~\eqref{3.4} with $t=0$ and $l=2$.
Inspection of \eqref{3.4} shows that for these parameters the eigenvalue problem for $M $ is identical to the eigenvalue problem for
\begin{equation}\label{3.13}
\operatorname{diag}(a_1,\ldots,a_n)\otimes
\begin{bmatrix}
0 & i \\
-i & 0
\end{bmatrix}
+ X_{2n\times 2}
\begin{bmatrix}
0 & ib_1 \\
-ib_1 & 0
\end{bmatrix}(X^T)_{2\times 2n}.
\end{equation}
This corresponds to a rank two perturbation of the anti-symmetric matrix
\[
\operatorname{diag}(a_1,\ldots,a_n)\otimes
\begin{bmatrix} 0 & i \\ -i & 0 \end{bmatrix}.
\]
We note that rank two is the lowest possible rank for a perturbation which preserves the matrix structure.

\begin{corollary}\label{cor4.1}
In the case $t=0$, $l=2$, the random matrix \eqref{3.4} has the same eigenvalue PDF as \eqref{3.13}. This eigenvalue PDF is given by \eqref{3.5} with
\begin{equation}\label{3.14}
g_i(\lambda)=\frac{1}{2b_1}\left(e^{-(\lambda+a_i)/b_1}+e^{-\lvert\lambda-a_i\rvert/b_1}\right).
\end{equation}
\end{corollary}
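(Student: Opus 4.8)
The plan is to combine an orthogonal-invariance reduction with a direct evaluation of the integral defining $g_i$ in Theorem~\ref{prop4}. For the first assertion---that (\ref{3.4}) and (\ref{3.13}) have the same eigenvalue PDF when $t=0$ and $l=2$---I would argue as follows. Setting $t=0$ removes the $\sqrt t\,Y$ term, leaving $M=\Omega A\Omega^T+XBX^T$. Since $\Omega$ is Haar distributed and $X$ is invariant under real orthogonal similarity, the eigenvalue PDF of $M$ depends only on the eigenvalues of $A$ and $B$. Conjugating $M$ by $\Omega^T$ and using $\Omega^T X\stackrel{d}{=}X$, one may replace $\Omega A\Omega^T$ by the canonical block form $\operatorname{diag}(a_1,\dots,a_n)\otimes\big[\begin{smallmatrix}0&i\\-i&0\end{smallmatrix}\big]$ of $A$ (obtained from $A=i\tilde A$ with $\tilde A$ real anti-symmetric) and take $B$ to be the $2\times2$ Hermitian anti-symmetric matrix $\big[\begin{smallmatrix}0&ib_1\\-ib_1&0\end{smallmatrix}\big]$ with positive eigenvalue $b_1$. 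This is exactly (\ref{3.13}).

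For the explicit form of $g_i$, I would invoke the conclusion of Theorem~\ref{prop4} with $t=0$, so that the Gaussian factor $e^{-\frac12 tc^2}$ becomes unity and
\[
g_i(\lambda)=\frac{2}{\pi}\int_0^\infty \frac{\cos(a_i c)\cos(\lambda c)}{\det(\mathbb I+icB)}\,dc.
\]
With $B=\big[\begin{smallmatrix}0&ib_1\\-ib_1&0\end{smallmatrix}\big]$ the eigenvalues of $B$ are $\pm b_1$, hence $\det(\mathbb I+icB)=(1+icb_1)(1-icb_1)=1+b_1^2c^2$. Applying the product-to-sum identity $\cos(a_i c)\cos(\lambda c)=\tfrac12\big[\cos((\lambda+a_i)c)+\cos((\lambda-a_i)c)\big]$ splits the integral into two pieces of the standard form
\[
\int_0^\infty \frac{\cos(\alpha c)}{1+b_1^2c^2}\,dc=\frac{\pi}{2b_1}\,e^{-|\alpha|/b_1},
\]
which is readily evaluated by residues (the integrand decays like $c^{-2}$, so convergence is not an issue). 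Taking $\alpha=\lambda+a_i$ and $\alpha=\lambda-a_i$ and using $a_i,\lambda>0$, so that $|\lambda+a_i|=\lambda+a_i$, assembles precisely into the claimed expression (\ref{3.14}).

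The computation is essentially routine, and I do not anticipate a genuine obstacle; the content of the corollary is that the general polynomial-ensemble formula of Theorem~\ref{prop4} collapses, in this minimal (rank-two) case, to an elementary closed form. The only point requiring mild care is the absolute value in the second exponential: it survives because $\lambda-a_i$ can have either sign even though both $\lambda$ and $a_i$ are positive, whereas $\lambda+a_i$ is unambiguously positive and so its absolute value may be dropped.
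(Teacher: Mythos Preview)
Your proposal is correct and follows essentially the same route as the paper: you specialise the general formula of Theorem~\ref{prop4} to $t=0$, $l=2$, note $\det(\mathbb I+icB)=1+b_1^2c^2$, and evaluate the resulting cosine integral (the paper simply asserts ``Evaluating this integral gives (\ref{3.14})'' without spelling out the product-to-sum step or the residue computation). Your orthogonal-invariance reduction to (\ref{3.13}) also matches what the paper states just before the corollary by ``inspection of (\ref{3.4})''.
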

\begin{proof}
According to \eqref{3.6}, in the case $t=0$, $l=2$,
\begin{equation*}
g_i(\lambda)=\frac{2}{\pi}\int_0^\infty \frac{\cos (a_ic) \cos (\lambda c)}{1+b_1^2c^2}\,dc.
\end{equation*}
Evaluating this integral gives \eqref{3.14}.
\end{proof}

The working of \S \ref{sec2} tells us that we require a definite ordering of $\{\lambda_i\}$, say $0<\lambda_1< \cdots <\lambda_n$, for \eqref{3.5} to be correctly normalised. With $\{a_i\}$ similarly ordered, denote by $a \prec \lambda$ the interlacing
\begin{align*}
a_1<\lambda_1<a_2<\lambda_2<\cdots <a_n<\lambda_n.
\end{align*}
A feature of $\det[g_i(\lambda_j)]_{i,j=1}^n$ with $g_i(\lambda)$ given by \eqref{3.14} is that it vanishes unless $a \prec \lambda$ or $\lambda \prec a$, when in the former case it reduces to (after setting $b_1=1$ for convenience)
\begin{align*}
2^{-n}(e^{-(\lambda_1+a_1)}+e^{-\lvert \lambda_1-a_1\rvert})e^{-\sum_{j=1}^n(a_j+\lambda_j)}\prod_{l=2}^n(e^{2a_l}-e^{\lambda_{l-1}}),
\end{align*}
and in the latter case to the same expression with each $a_l$ interchanged with $\lambda_l$ for $\lambda \prec a$. In fact a result of Defosseux \cite{De10,De13} tells us that for all orderings, and with $g_i(\lambda)$ given by \eqref{3.14},
\begin{multline}\label{X0}
\det[g_i(\lambda_j)]_{i,j=1}^n= \\
\frac{1}{2}\int
\chi({z\prec\lambda})\chi({z\prec a}) e^{-\sum_{j=1}^n(a_j+\lambda_j - 2 z_j)}(e^{-(\lambda_1+a_1)}+e^{-\lvert \lambda_1-a_1\rvert})\,dz_1 \cdots dz_{n-1},
\end{multline}
where $\chi(T)=1$ for $T$ true and $\chi(T)=0$ otherwise. Here, $z\prec \lambda$ is to be interpreted as though there is a variable $z_0:=0$ and thus $0<\lambda_1<z_1< \cdots <z_{n-1}<\lambda_n$; similarly for $z\prec a$.

Let $A$ and $B$ be $N\times N$ Hermitian matrices with $B$ a rank-$r$ perturbation of $A$. With $\{\sigma_i(X)\}_{i=1}^N$ denoting the singular values of the matrix $X$, ordered so that $0\leq \sigma_1(X)\leq \cdots \leq \sigma_N(X)$, it is a known theorem that \cite{Th76}
\begin{equation}\label{X1}
\sigma_{k-r}(A+B)\leq \sigma_k(A)\leq \sigma_{k+r}(A+B).
\end{equation}
Since for $2n\times 2n$ anti-symmetric matrices the singular values correspond to the $n$ eigenvalues $0\leq \lambda_1\leq \cdots\leq \lambda_n$, we see that the conditions for \eqref{X0} to be nonzero are consistent with \eqref{X1} in the case $r=2$.

\subsection{Hermitised products}
In this subsection, we consider the matrix~\eqref{3.4} with $t=0$ and $A=0$. While we can immediately set $t=0$ in \eqref{3.6}, the structure of \eqref{3.5} does not permit us to immediately set each $a_i=0$. Instead a sequential limiting procedure, analogous to that in Remark \ref{rem1}, is required. Performing these steps shows that the PDF in Theorem \ref{prop4} reduces to
\begin{equation}\label{3.15}
\prod_{l=1}^{n-1}\frac{1}{(2l)!}{\Delta}_n^\even (\lambda)
\det\bigg[\frac{2}{\pi}\int_0^\infty \frac{(-c^2)^{i-1} \cos (\lambda_j c)}{\det(\mathbb{I}+icB)}\,dc\bigg]_{i,j=1}^n.
\end{equation}
The task now is to simplify the determinant. For all integrals therein to be well-defined, we assume that $l=2m$ and $m\geq n$, since $\det(\mathbb{I}+icB)=\prod_{l=1}^m (1+c^2b_l^2)$.

\begin{proposition}\label{prop5}
In the case $l=2n$, the eigenvalue PDF \eqref{3.15} corresponding to $t=0$, $A=0$ in Theorem \ref{prop4}
and thus of the random matrix $XBX^T$ reduces to
\begin{equation}\label{3.16}
\prod_{l=1}^{n-1}\frac{1}{(2l)!} \frac{{\Delta}_n^\even (\lambda)}{{\Delta}_n^\odd (b)}\det[e^{-\lambda_j/b_k}]_{j,k=1}^n, \qquad \lambda_1<\lambda_2<\cdots<\lambda_n.
\end{equation}
\end{proposition}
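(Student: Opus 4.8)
The plan is to start from the general formula \eqref{3.15} and evaluate the determinant in the special case $l=2n$ (so $m=n$ and $\det(\mathbb{I}+icB)=\prod_{l=1}^n(1+c^2b_l^2)$). The key observation is that the entries of the determinant in \eqref{3.15} involve the integrals
\[
I_{ij} := \frac{2}{\pi}\int_0^\infty \frac{(-c^2)^{i-1}\cos(\lambda_j c)}{\prod_{l=1}^n(1+c^2 b_l^2)}\,dc,
\]
and I want to show that the determinant $\det[I_{ij}]$ collapses to the much simpler $\det[e^{-\lambda_j/b_k}]$ up to a factor involving $\Delta^{\odd}_n(b)$. The natural route is a partial-fraction decomposition of the rational function $1/\prod_{l=1}^n(1+c^2 b_l^2)$, together with the elementary Fourier cosine integral
\[
\frac{2}{\pi}\int_0^\infty \frac{\cos(\lambda c)}{1+b^2 c^2}\,dc = \frac{1}{b}\,e^{-\lambda/b},\qquad \lambda>0,\ b>0.
\]

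First I would perform column operations on $\det[I_{ij}]$ — since row $i$ carries the factor $(-c^2)^{i-1}$ inside the integral, the determinant is multilinear and I can use the row structure to my advantage. Concretely, I expect to write $I_{ij}$ as an integral of $(-c^2)^{i-1}$ against the fixed function $h_j(c):=\frac{2}{\pi}\cos(\lambda_j c)/\prod_l(1+c^2 b_l^2)$, so that the $i$-dependence lives entirely in the monomial $(-c^2)^{i-1}$. The cleanest way to proceed is then to expand $1/\prod_{l=1}^n(1+c^2 b_l^2)$ in partial fractions in the variable $u=c^2$: one has
\[
\frac{1}{\prod_{l=1}^n(1+b_l^2 u)} = \sum_{k=1}^n \frac{A_k}{1+b_k^2 u},\qquad
A_k = \prod_{\substack{l=1\\ l\ne k}}^n \frac{1}{1-b_l^2/b_k^2},
\]
which reduces each scalar integral to the elementary cosine transform above and produces the factor $e^{-\lambda_j/b_k}$. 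Carrying this through, the matrix $[I_{ij}]$ factors as a product of a matrix depending only on $i$ and $\{b_k\}$ (a Vandermonde-type matrix in the variables $b_k^2$, carrying the $(-c^2)^{i-1}\mapsto$ moment structure) times the matrix $[e^{-\lambda_j/b_k}/b_k]$; the determinant then splits by the Cauchy--Binet / multiplicativity of determinants.

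The main obstacle, and the step requiring the most care, is the precise bookkeeping of the resulting Vandermonde-type factor and verifying that it is exactly $\pm\prod_{k}(\text{const})/\Delta^{\odd}_n(b)$ so as to match the claimed normalisation $1/\Delta^{\odd}_n(b)$ in \eqref{3.16}. After the partial-fraction step, $I_{ij} = \sum_{k=1}^n A_k\, b_k^{-1}\,(-1/b_k^2)^{i-1}\,e^{-\lambda_j/b_k}$ up to the moment evaluation $\frac{2}{\pi}\int_0^\infty (-c^2)^{i-1}\cos(\lambda_j c)/(1+b_k^2 c^2)\,dc$; I would compute this latter integral (it evaluates to $b_k^{-2(i-1)}\cdot b_k^{-1} e^{-\lambda_j/b_k}$ by differentiating the basic transform, or by residues), so that the whole determinant factors as
\[
\det[I_{ij}]_{i,j=1}^n = \det\!\big[(-b_k^{-2})^{i-1}\big]_{i,k=1}^n \,\cdot\, \prod_{k=1}^n \frac{A_k}{b_k}\,\cdot\,\det[e^{-\lambda_j/b_k}]_{j,k=1}^n
\]
by the multiplicativity of the determinant. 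The first determinant is a Vandermonde in $-b_k^{-2}$, and combining it with $\prod_k A_k/b_k$ should, after simplification using the explicit form of $A_k$, collapse to $1/\Delta^{\odd}_n(b)$; the only delicate points are the signs and the powers of $b_k$, which I expect to cancel cleanly against the $\prod_{l\ne k}(b_k^2-b_l^2)$ denominators hidden in $A_k$ and against $\prod_k b_k$ in $\Delta^{\odd}_n(b)$. This verification — that the scalar prefactors reduce exactly to $\prod_{l=1}^{n-1}1/(2l)!$ times $1/\Delta^{\odd}_n(b)$ with no stray constants — is the part I would grind through carefully.
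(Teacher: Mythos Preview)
Your overall strategy is essentially the paper's: express the integral as a sum over $k$ (one term per pole $c=i/b_k$), recognise that the resulting matrix $[I_{ij}]$ factors as a product of a Vandermonde-type matrix in the $b_k$'s times $[e^{-\lambda_j/b_k}]$, and simplify. The paper does the first step by a single residue calculation on the full integrand and the second by the anti-symmetrisation identity (Cauchy--Binet in disguise); your matrix-factorisation phrasing is equivalent.

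There is, however, a genuine gap in the order of operations you propose. After the partial-fraction decomposition you want to evaluate
\[
\frac{2}{\pi}\int_0^\infty \frac{(-c^2)^{i-1}\cos(\lambda_j c)}{1+b_k^2 c^2}\,dc
\]
term by term, claiming it equals $b_k^{-2i+1}e^{-\lambda_j/b_k}$. But for $i\ge 2$ this integral diverges: the integrand behaves like $c^{2i-4}$ at infinity. The \emph{full} integral in \eqref{3.15} converges only because the denominator $\prod_{l=1}^n(1+c^2 b_l^2)$ has degree $2n$; once you split into partial fractions the individual pieces no longer converge, so interchanging the sum and the integral is illegitimate. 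Neither ``differentiating the basic transform'' nor ``residues'' rescues the single-pole integral --- differentiating $\frac{1}{b}e^{-\lambda/b}$ in $\lambda$ produces integrals that are equally divergent, and a residue argument on a function that does not decay at infinity picks up a contribution from the large arc.

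The fix is immediate and brings you exactly onto the paper's track: keep the full denominator and compute the contour integral
\[
\frac{1}{\pi}\int_{-\infty}^{\infty}\frac{(-c^2)^{i-1}e^{i\lambda_j c}}{\prod_{l=1}^n(1+c^2 b_l^2)}\,dc
\]
by residues in one shot. The sum over poles $c=i/b_k$ gives $I_{ij}=\sum_{k=1}^n e^{-\lambda_j/b_k}\,b_k^{2(n-i)-1}/\prod_{l\ne k}(b_k^2-b_l^2)$, which is precisely the matrix product $[b_k^{2(n-i)}]_{i,k}\cdot\mathrm{diag}(b_k^{-1}/\prod_{l\ne k}(b_k^2-b_l^2))\cdot[e^{-\lambda_j/b_k}]_{k,j}$. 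Taking determinants, the first factor is $\pm\Delta_n^{\even}(b)$ and the diagonal contributes $1/(\Delta_n^{\even}(b)^2\prod_k b_k)$, giving the claimed $1/\Delta_n^{\odd}(b)$. (Note also the sign: the student's displayed factorisation with $(-b_k^{-2})^{i-1}$ is inconsistent with the earlier claim that the single-pole integral equals $b_k^{-2(i-1)}b_k^{-1}e^{-\lambda_j/b_k}$; the correct residue calculation produces $b_k^{-2(i-1)}$ with no sign, since $-c^2\big|_{c=i/b_k}=+1/b_k^2$.)
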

\begin{proof}
For the integral in \eqref{3.15} we have
\begin{equation}
\frac{2}{\pi}\int_0^\infty \frac{(-c^2)^{i-1}\cos (\lambda_j c)}{\det(\mathbb{I}+icB)}\,dc
=\frac{1}{\pi}\int_{-\infty}^\infty \frac{(-c^2)^{i-1}e^{i\lambda_j c}}{\prod_{l=1}^m (1+(cb_l)^2)}\,dc
 = \sum_{k=1}^m \frac{e^{-\lambda_j/b_k}b_k^{2(m-i-1/2)}}{\prod_{l=1,l\neq k}^m(b_k^2-b_l^2)},
\end{equation}
where the final equality follows by closing the contour in the upper half plane and evaluating using residues.

Substituting in \eqref{3.15} shows that in the special case $l=2n$ the determinant therein can be written
\begin{multline*}
\frac{1}{{\Delta}_n^\even (b){\Delta}_n^\odd (b)}\sum_{k_1,...,k_n=1}^n \prod_{j=1}^n e^{-\lambda_j/b_{k_j}}\det[b_{k_j}^{2(i-1)}]_{i,j=1}^n
\\
= \frac{\det[e^{-\lambda_j/b_k}]_{j,k=1}^n \det[b_{k_j}^{2(i-1)}]_{i,j=1}^n}{{\Delta}_n^\even (b){\Delta}_n^\odd (b)},
\end{multline*}
where the equality follows by noting that because the determinant in the first line is anti-symmetric in $\{b_k\}$, the sum can be replaced by the anti-symmetrisation of $\prod_{j=1}^n e^{-\lambda_j/b_j}$ times $\det[b_j^{2(i-1)}]_{i,j=1}^n$. The proposition follows by noting that $\det[b_j^{2(i-1)}]_{i,j=1}^n$ cancels with ${\Delta}_n^\even (b)$.
\end{proof}

One point of interest is the limiting case of Proposition \ref{prop5} in which some of $b_1,\dots,b_n$ tend to zero.

\begin{corollary}[Defosseux \cite{De10}]\label{cor4}
Let $X$ be a $2n\times 2m$ ($n\geq m$) standard real Gaussian matrix. With $b_j>0$ ($j=1,\dots,m$), let
\begin{align*}
B=\operatorname{diag}(b_1,\ldots,b_m)\otimes
\begin{bmatrix}
0 & i \\
-i & 0
\end{bmatrix}.
\end{align*}
The eigenvalue PDF of $XBX^T$, or equivalently of $\tilde{X} \tilde{B} \tilde{X}^T$, where
$\tilde{X}$ is a $2n \times 2n$ standard Gaussian matrix and $\tilde{B}$ is the $n \times n$
block diagonal matrix
\[
\tilde{B} =
\operatorname{diag}\Big(
\big[\begin{smallmatrix} 0 & ib_1 \\ -ib_1 & 0 \end{smallmatrix}\big],
\ldots,
\big[\begin{smallmatrix} 0 & ib_m \\ -ib_m & 0 \end{smallmatrix}\big],
\underbrace{%
\big[\begin{smallmatrix} 0 & 0 \\ 0 & 0\vphantom{b_m} \end{smallmatrix}\big]
\ldots,
\big[\begin{smallmatrix} 0 & 0 \\ 0 & 0\vphantom{b_m} \end{smallmatrix}\big]%
}_{n-m\textup{ times}}
\Big),
\]
 is equal to
\begin{equation}\label{3.18}
\prod_{l=n-m}^{n-1}\frac{1}{(2l)!}\frac{{\Delta}_m^\even (\lambda)}{{\Delta}_m^\odd (b)} \prod_{l=1}^{m}(\lambda_l/b_l)^{2(n-m)} \det[e^{-\lambda_j/b_k}]_{j,k=1}^{m}, \qquad \lambda_1<\lambda_2<\cdots<\lambda_n.
\end{equation}
\end{corollary}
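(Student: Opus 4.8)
The plan is to reduce the rectangular problem to the square case already treated in Proposition \ref{prop5}, and then to degenerate $n-m$ of the parameters $b_j$ to zero by a sequential limiting procedure. First I would dispose of the equivalence asserted in the statement. Writing $\tilde X = [\,X \mid X'\,]$ with $X'$ an independent $2n \times 2(n-m)$ Gaussian block and $\tilde B = B \oplus 0_{2(n-m)}$, one has $\tilde X \tilde B \tilde X^T = X B X^T$ identically, so the extra columns $X'$ never contribute and it suffices to compute the eigenvalue PDF of the square matrix $\tilde X \tilde B \tilde X^T$.

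Next I would apply Proposition \ref{prop5} in size $2n\times 2n$, taking the $n$ positive parameters to be $b_1,\dots,b_m$ together with auxiliary distinct positive values $b_{m+1},\dots,b_n$, so that the joint PDF of the ordered positive eigenvalues $0<\lambda_1<\cdots<\lambda_n$ is
\[
\prod_{l=1}^{n-1}\frac{1}{(2l)!}\,\frac{\Delta^\even_n(\lambda)}{\Delta^\odd_n(b)}\det[e^{-\lambda_j/b_k}]_{j,k=1}^n .
\]
Since $\tilde B$ has rank $2m$, exactly $n-m$ eigenvalues must collapse to the origin as $b_{m+1},\dots,b_n\to 0^+$; I would send these parameters to zero one at a time and track the marginal density of the surviving eigenvalues.

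For a single step, in which one parameter $b_*\to 0$, the smallest eigenvalue $\lambda_1$ is the one driven to the origin, being the eigenvalue whose column entry $e^{-\lambda_1/b_*}$ decays slowest. The clean device is the rescaling $\lambda_1=b_* u$. Under it the Vandermonde factors degenerate as $\Delta^\even_n(\lambda)\to \prod_{k=2}^n\lambda_k^2\,\Delta^\even_{n-1}(\lambda')$ and $\Delta^\odd_n(b)\to \pm\, b_*\prod_{j\ne *}b_j^2\,\Delta^\odd_{n-1}(b')$; the $b_*$-column of the determinant contributes $e^{-u}$ while the complementary minor tends to $\det[e^{-\lambda_j/b_k}]$ over the survivors; and $d\lambda_1=b_*\,du$. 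Collecting the powers of $b_*$ shows they cancel, leaving the finite integral $\int_0^\infty u^{2(s-1)}e^{-u}\,du=(2s-2)!$ at the $s$-th step, the weight $u^{2(s-1)}$ being the power $\lambda_1^{2(s-1)}$ accumulated from the previous steps, and each surviving variable picks up a fresh factor $(\lambda_k/b_k)^2$.

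Iterating this $n-m$ times produces the determinant $\det[e^{-\lambda_j/b_k}]_{j,k=1}^m$, the Vandermonde ratio $\Delta^\even_m(\lambda)/\Delta^\odd_m(b)$, the accumulated prefactor $\prod_{l=1}^m(\lambda_l/b_l)^{2(n-m)}$, and a constant
\[
\prod_{l=1}^{n-1}\frac{1}{(2l)!}\prod_{s=1}^{n-m}(2s-2)!=\prod_{l=n-m}^{n-1}\frac{1}{(2l)!},
\]
which is precisely \eqref{3.18}. The main obstacle is the power-counting in this iterated limit: one must check that at every step the powers of the vanishing $b_*$ arising from $d\lambda_1$, from the lone factor in $\Delta^\odd_n(b)$, from the factors $\prod_{j\ne *}b_j^2$ in the degenerating Vandermonde, and from the accumulated prefactor $1/b_*^{2(s-1)}$ cancel exactly, so that each concentration integral converges to the finite constant $(2s-2)!$ with neither spurious vanishing nor divergence. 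The only genuinely analytic point is justifying the interchange of limit and integration (a dominated-convergence or Watson's-lemma argument), using that the subleading permutation terms in the determinant are exponentially suppressed relative to the one carrying $e^{-\lambda_1/b_*}$.
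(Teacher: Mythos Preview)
Your proposal is correct and follows essentially the same route as the paper: start from Proposition~\ref{prop5} in size $2n$, then send $n-m$ of the $b_j$ to zero one at a time, at each step the smallest eigenvalue concentrating at the origin and being integrated out. The only cosmetic difference is that the paper phrases the concentration as $\tfrac{1}{b_1}e^{-\lambda_1/b_1}\to\delta(\lambda_1)$, whereas you make the substitution $\lambda_1=b_* u$ and evaluate the resulting gamma integral $\int_0^\infty u^{2(s-1)}e^{-u}\,du=(2s-2)!$ explicitly; your bookkeeping of the accumulated constant $\prod_{s=1}^{n-m}(2s-2)!$ and your explicit check of the equivalence $\tilde X\tilde B\tilde X^T=XBX^T$ are details the paper leaves implicit.
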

\begin{proof}
With the $\lambda_i$ ordered so that $\lambda_1$ is the smallest etc., we see that in the limit $b_1\to 0^+$ the determinant in \eqref{3.16} has the leading order value requiring $\lambda_1\to 0^+$ and equalling the top left entry $e^{-\lambda_1/b_1}$ times  the determinant of the $(n-1)\times (n-1)$ sub-block obtained by deleting the first row and first column. Noting too that
\begin{align*}
\frac{\Delta_n^\even (\lambda)}{\Delta_n^\odd (b_j)}
\to \frac{1}{b_1}\prod_{l=2}^{n}(\lambda_l/b_l)^2
\frac{\Delta_{n-1}^\even (\{\lambda_j\}_{j=2}^{n})}{\Delta_{n-1}^\odd (\{b_j\}_{j=2}^{n})},
\end{align*}
and that $\frac{1}{b_1}e^{-\lambda_1/b_1}$ tends to the delta function $\delta(\lambda_1)$, we obtain \eqref{3.18} in the case $m=n-1$ after renaming $\{\lambda_j\}_{j=2}^n \mapsto \{\lambda_j\}_{j=1}^{n-1}$ and similarly for $\{b_j\}$.
Repeating this procedure starting with the case $m=n-1$ gives the result for $m=n-2$, etc.
\end{proof}

The structure \eqref{3.18} further simplifies by setting $b_1= \dots =b_{m}=1$, which we do sequentially in an analogous manner to that specified in Remark \ref{rem1}.

\begin{corollary}
Let $X$ be an $2n\times 2m$ ($n\geq m$) standard real Gaussian matrix. The eigenvalue PDF of
\begin{align*}
X \left(\mathbb I_m\otimes
\begin{bmatrix} 0 & i \\ -i & 0 \end{bmatrix}\right)
X^T
\end{align*}
is equal to
\begin{equation}\label{3.19}
\frac{1}{2^{m(m-1)/2}}\prod_{l=n-m}^{n-1}\frac1{(2l)!}\prod_{l=0}^{m-1}\frac1{l!}
\prod_{l=1}^{m}\lambda_l^{2(n-m)}e^{-\lambda_l}\prod_{1\leq j<k\leq m}(\lambda_k-\lambda_j)(\lambda_k^2-\lambda_j^2).
\end{equation}
\end{corollary}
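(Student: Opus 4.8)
The plan is to obtain \eqref{3.19} from equation \eqref{3.18} of Corollary~\ref{cor4} by setting $b_1=\cdots=b_m=1$. Since the $b_l$ then coincide, both $\Delta_m^{\odd}(b)$ and $\det[e^{-\lambda_j/b_k}]_{j,k=1}^m$ degenerate, so the substitution must be carried out as a confluent limit $b_1,\dots,b_m\to 1$; as indicated above, this can be done one variable at a time by successive use of L'H\^opital's rule in the manner of Remark~\ref{rem1}, but it is cleaner to treat all $b_l$ at once. The only $b$-dependent factors in \eqref{3.18} are $\prod_{l=1}^m b_l^{-2(n-m)}$, which tends to $1$, and the ratio $\det[e^{-\lambda_j/b_k}]_{j,k=1}^m/\Delta_m^{\odd}(b)$, whose limit is the crux of the argument.

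To evaluate this ratio I would first factor the odd Vandermonde as $\Delta_m^{\odd}(b)=\prod_{l=1}^m b_l\prod_{1\le j<k\le m}(b_k-b_j)(b_k+b_j)$, so that near $b=\mathbf{1}$ one has $\Delta_m^{\odd}(b)\sim 2^{m(m-1)/2}\Delta_m(b)$, with $\Delta_m(b)=\prod_{j<k}(b_k-b_j)$ the ordinary Vandermonde. For the numerator I would pass to the variable $u_k=1/b_k$, so that the columns become $e^{-\lambda_j u_k}$ with clean $u$-derivatives $\partial_u^p e^{-\lambda_j u}=(-\lambda_j)^p e^{-\lambda_j u}$. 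The standard confluent alternant expansion (Taylor-expanding each column about the coincidence point and keeping the leading term) then gives
\begin{equation*}
\frac{\det[e^{-\lambda_j u_k}]_{j,k=1}^m}{\Delta_m(u)}\xrightarrow{u\to\mathbf{1}}\frac{1}{\prod_{p=0}^{m-1}p!}\det\big[(-\lambda_j)^{k-1}e^{-\lambda_j}\big]_{j,k=1}^m=\frac{(-1)^{m(m-1)/2}}{\prod_{p=0}^{m-1}p!}\prod_{j=1}^m e^{-\lambda_j}\,\Delta_m(\lambda),
\end{equation*}
the final equality following by pulling $e^{-\lambda_j}$ from each row and the sign $(-1)^{k-1}$ from each column of the resulting matrix.

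It then remains to reconcile the two Vandermondes. Since $u_k=1/b_k$, near $b=\mathbf{1}$ one has $\Delta_m(u)\sim(-1)^{m(m-1)/2}\Delta_m(b)$, and the two factors of $(-1)^{m(m-1)/2}$ cancel, leaving
\begin{equation*}
\frac{\det[e^{-\lambda_j/b_k}]_{j,k=1}^m}{\Delta_m^{\odd}(b)}\xrightarrow{b\to\mathbf{1}}\frac{1}{2^{m(m-1)/2}\prod_{p=0}^{m-1}p!}\prod_{j=1}^m e^{-\lambda_j}\,\Delta_m(\lambda).
\end{equation*}
Substituting this into \eqref{3.18}, using the identity $\Delta_m^{\even}(\lambda)\Delta_m(\lambda)=\prod_{1\le j<k\le m}(\lambda_k-\lambda_j)(\lambda_k^2-\lambda_j^2)$, and collecting the constants $\prod_{l=n-m}^{n-1}1/(2l)!$ (which is $b$-independent) together with $\prod_{l=0}^{m-1}1/l!$, yields \eqref{3.19}.

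I expect the main obstacle to be the careful bookkeeping of signs and powers of two in the confluent limit: one must correctly extract the factor $2^{m(m-1)/2}$ from $\prod_{j<k}(b_k+b_j)$ in $\Delta_m^{\odd}$, and verify that the sign $(-1)^{m(m-1)/2}$ produced by the reparametrisation $u=1/b$ of the Vandermonde cancels exactly against the sign arising from $\det[(-\lambda_j)^{k-1}]$. The confluent-alternant step itself is routine once the leading-order Taylor term is identified, but getting these constants right is precisely what makes the normalisation in \eqref{3.19} come out correctly.
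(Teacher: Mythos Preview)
Your proposal is correct and follows the same route as the paper: take the confluent limit $b_1,\dots,b_m\to 1$ in \eqref{3.18}. The paper merely states that this is done sequentially via L'H\^opital's rule as in Remark~\ref{rem1}, whereas you carry it out simultaneously via the confluent-alternant formula; this is the same computation organised differently, and your bookkeeping of the factors $2^{m(m-1)/2}$, $\prod_{p=0}^{m-1}p!$, and the cancelling signs $(-1)^{m(m-1)/2}$ is correct.
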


\begin{remark}
In the case $m=n$, this result was first derived in \cite{LSZ06}. The functional form is a particular example of the so-called Laguerre Muttalib-Borodin ensemble \cite{Mu95,Bor99,FW15}.
\end{remark}

Eigenvalue PDFs of the form (\ref{3.18}) for the random matrix structure $X B X^T$, with $B$ fixed,
underlie the fact  that the singular value PDF for certain product ensembles are determinantal
point processes of the polynomial ensemble type; see the text below (\ref{3.3}).
The essential mechanism is that in the case the eigenvalue PDF of $B =: B_0$ is of the
polynomial ensemble form
\begin{equation}\label{f1}
\frac{1}{Z_m} \Delta_m^\even (b_j )
\det [ g_{j-1}(b_k) ]_{j,k=1}^m
\end{equation}
for some $\{g_j(x)\}$, (\ref{3.18}) tells us that $B_1 := X B_0 X^T$ also gives rise to a polynomial ensemble.
This follows immediately from Andreief's integration formula. Iterating allows
for the determination of the PDF of the eigenvalues of
\begin{equation}\label{f1a}
B_j = X_j B_{j-1} X_j^T,
\end{equation}
where $X_j$ is  of size $2(m + \nu_j) \times 2(m+ \nu_{j-1})$ with $\nu_j \ge \nu_{j-1}$
$(j=1,2,\dots)$ and $\nu_0 = 0$.

\begin{corollary}
In the setting of Corollary \ref{cor4}, suppose $\{b_j\}_{j=1}^m$, assumed ordered, have PDF (\ref{f1}).
The eigenvalues of the random matrix $X B X^T$, or equivalently of $\tilde{X} \tilde{B}
\tilde{X}^T$, have PDF
\begin{equation}\label{f2}
\frac{1}{Z_m} \prod_{l=n-m}^{n-1} \frac{1}{(2 l)!}
\Delta^\even_m (\lambda)
\det \Big [ \int_0^\infty e^{-b} b^{2(n-m) - 1} g_{j-1} \Big ( \frac{\lambda_k}{b} \Big ) \, d b
\Big ]_{j,k=1}^m.
\end{equation}
Moreover, a general member
\begin{equation}\label{f3}
B_M = X_M \cdots X_1 B_0 X_1^T \cdots X_M^T
\end{equation}
of the sequence of random matrices $B_1,B_2,\dots$ as specified by (\ref{f1a}) has PDF for its
eigenvalues given by
\begin{equation}\label{f4}
\frac{1}{Z_m} \prod_{j=1}^M\prod_{l=\nu_j}^{m+\nu_j-1}  \frac{1}{(2l)!}\Delta^\even_m (\lambda)
\det \Big [ g_{j-1}^{(M)}(\lambda_k) \Big ]_{j,k=1}^m,
\end{equation}
where $g_j^{(M)}$ is defined recursively according to
\begin{equation}\label{f5}
g_j^{(s)}(\lambda) = \int_0^\infty e^{-b} b^{2\nu_s - 1} g_j^{(s-1)} \Big ( \frac{\lambda}{b} \Big ) \, {\rm d}b
\end{equation}
with $g_j^{(0)}(\lambda) := g_j(\lambda)$.
\end{corollary}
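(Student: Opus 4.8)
The plan is to obtain (\ref{f2}) by integrating the conditional eigenvalue law (\ref{3.18}) of $XBX^T$ against the prior (\ref{f1}) on the positive eigenvalues $\{b_j\}_{j=1}^m$ of $B$, and then to get (\ref{f4}) by iterating (\ref{f2}) along the chain (\ref{f1a}). First I would form the joint density, the product of (\ref{3.18}) and (\ref{f1}), and integrate over the ordered simplex $0<b_1<\cdots<b_m$. The Vandermonde factors collapse through $\Delta_m^\even(b)/\Delta_m^\odd(b)=\prod_{l=1}^m b_l^{-1}$, so after stripping off the $\lambda$-prefactor $\prod_{l=n-m}^{n-1}(2l)!^{-1}\,Z_m^{-1}\,\Delta_m^\even(\lambda)\prod_{l=1}^m\lambda_l^{2(n-m)}$ the $b$-dependence reduces to a symmetric weight $\prod_l b_l^{-(2(n-m)+1)}$ multiplying the two column-antisymmetric determinants $\det[e^{-\lambda_j/b_k}]_{j,k}$ and $\det[g_{j-1}(b_k)]_{j,k}$.

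Because this $b$-integrand is symmetric in $b$, the integral over the ordered simplex equals $1/m!$ times the integral over $\mathbb R_+^m$, and Andreief's integration formula (see e.g.~\cite[Eq.~(5.17c)]{Fo10}) then turns the product of the two determinants into the single determinant with entries $\int_0^\infty e^{-\lambda_j/b}\,g_{k-1}(b)\,b^{-(2(n-m)+1)}\,db$. To match the determinant in (\ref{f2}) I would transpose this matrix and substitute $b\mapsto\lambda_k/b$ in its $(j,k)$ entry, which produces $\lambda_k^{-2(n-m)}\int_0^\infty e^{-b}b^{2(n-m)-1}g_{j-1}(\lambda_k/b)\,db$. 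The column factor $\lambda_k^{-2(n-m)}$ comes out of the determinant as $\prod_k\lambda_k^{-2(n-m)}$ and cancels exactly the $\prod_l\lambda_l^{2(n-m)}$ inherited from (\ref{3.18}), leaving precisely (\ref{f2}).

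For (\ref{f3})--(\ref{f4}) I would induct on $M$, using (\ref{f2}) as the single step. The point is that every $B_{j-1}$ has rank $2m$, so only its $m$ positive eigenvalues are active; writing $B_j=X_jB_{j-1}X_j^T$ in the $\tilde X\tilde B\tilde X^T$ form of Corollary~\ref{cor4} with $n=m+\nu_j$---the left $2m$ columns of the standard real Gaussian $X_j$ are again standard Gaussian and carry those eigenvalues---shows that (\ref{f2}) applies verbatim with $n-m=\nu_j$. Hence passing from $B_{j-1}$ to $B_j$ multiplies the constant by $\prod_{l=\nu_j}^{m+\nu_j-1}(2l)!^{-1}$ and sends each function $g_i^{(j-1)}$ to $g_i^{(j)}(\lambda)=\int_0^\infty e^{-b}b^{2\nu_j-1}g_i^{(j-1)}(\lambda/b)\,db$, which is exactly the recursion (\ref{f5}). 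Starting from $g_i^{(0)}=g_i$ and the polynomial ensemble (\ref{f1}) and iterating over $j=1,\dots,M$ yields (\ref{f4}).

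The bulk of this is routine Gaussian and Andreief bookkeeping. The one step that needs care is the substitution $b\mapsto\lambda_k/b$ in the derivation of (\ref{f2}): one must check that the spurious factor $\lambda_k^{-2(n-m)}$ it generates depends on the column index $k$ only (so that it leaves the determinant cleanly) and cancels the prefactor $\prod_l\lambda_l^{2(n-m)}$, rather than leaving residual $\lambda$-weights. For the iteration the only conceptual issue is the rank-$2m$ reduction that allows (\ref{f2}) to be reapplied with dimension $m+\nu_j$ in place of $n$; once this is granted, (\ref{f4}) follows by the repeated application of Andreief's formula already flagged in the text preceding the statement.
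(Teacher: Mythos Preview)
Your proposal is correct and follows essentially the same route the paper indicates: integrate the conditional law (\ref{3.18}) against the polynomial ensemble (\ref{f1}) via Andr\'eief's identity to obtain (\ref{f2}), then iterate along (\ref{f1a}) to get (\ref{f4}) with the recursion (\ref{f5}). The paper leaves these steps implicit (``follows immediately from Andr\'eief's integration formula''), and your careful handling of the substitution $b\mapsto\lambda_k/b$ and the rank-$2m$ reduction in the inductive step fills in exactly what is needed.
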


\subsection{Initial matrix as an elementary anti-symmetric matrix and proof of Corollary \ref{cor1even}}
\label{sec:ele}

In this subsection, we consider the case where the initial matrix in~\eqref{f3} is given by
\[
B_0 = \mathbb I_m\otimes  \begin{bmatrix} 0 & i \\ -i & 0 \end{bmatrix}.
\]
In this case, according to (\ref{3.19}), the random matrix $X_1 B_0 X_1^T$ has for its eigenvalues
the PDF
\[
\frac{1}{Z_m}\prod_{l=\nu_1}^{m+\nu_1-1} \frac{1}{(2l)!} \Delta^\even_m (\lambda)
\det \Big [ \lambda_k^{2 \nu_1 +j-1} e^{-\lambda_k}   \Big ]_{j,k=1}^m,
\]
with $$Z_m = 2^{m(m-1)/2} \prod_{l=0}^{m-1} l!,$$
and is thus of the form (\ref{f4}) with $M=1$ and
\begin{equation}\label{f6}
g_{j-1}^{(1)}(\lambda) = \lambda^{2 \nu_1 + j - 1} e^{- \lambda}.
\end{equation}
Regarding this as an initial condition in (\ref{f5}) allows $g_j^{(M)}$ to be expressed as a particular
Meijer G-function. Recall that the Meijer G-function is defined by
\begin{equation}
\MeijerG{m}{n}{p}{q}{a_1,\ldots,a_p}{b_1,\ldots,b_q }{z}=\frac{1}{2\pi i}\int_\gamma
\frac{\prod_{j=1}^m\Gamma(b_j+u)\prod_{j=1}^n\Gamma(1-a_j-u)}
{\prod_{j=m+1}^q\Gamma(1-b_j-u)\prod_{j=n+1}^p\Gamma(a_j+u)}z^{-u}
\, du,
\end{equation}
where $\gamma$ is an appropriate contour relating to the validity of the inverse Mellin transform formula. Most of the often encountered special functions can be viewed as special cases of the Meijer G-functions.

\begin{lemma}\label{lemma-g-ele}
Let $g_j^{(1)}$ be given by (\ref{f6}) and define $g_j^{(s)}$ for $s=2,3,\dots$ by the recurrence (\ref{f5}).
We have
\begin{equation}\label{f7}
g_j^{(M)}(\lambda) =
\MeijerG{M}{0}{0}{M}{-}{2 \nu_M, \dots, 2 \nu_2, 2\nu_1 + j }{\lambda}.
\end{equation}
\end{lemma}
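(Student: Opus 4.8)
The plan is to pass to the Mellin transform, under which the recurrence (\ref{f5}) becomes a simple multiplication by a gamma factor. Writing $\mathcal M[f](u) = \int_0^\infty f(\lambda)\,\lambda^{u-1}\,d\lambda$, the contour-integral definition of the Meijer $G$-function exhibits $\MeijerG{M}{0}{0}{M}{-}{2\nu_M,\dots,2\nu_2,2\nu_1+j}{\lambda}$ as the inverse Mellin transform of the ratio of gamma functions in its integrand; since here $n=p=0$ and $q=m=M$, that ratio collapses to the single product $\Gamma(2\nu_1+j+u)\prod_{s=2}^M\Gamma(2\nu_s+u)$. The whole argument therefore reduces to showing that $\mathcal M[g_j^{(M)}](u)$ equals exactly this product, after which uniqueness of the inverse Mellin transform gives (\ref{f7}).

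First I would dispose of the base case $M=1$. From (\ref{f6}) one reads off $g_j^{(1)}(\lambda)=\lambda^{2\nu_1+j}e^{-\lambda}$, whose Mellin transform is $\int_0^\infty \lambda^{2\nu_1+j+u-1}e^{-\lambda}\,d\lambda=\Gamma(2\nu_1+j+u)$; this is exactly the Mellin transform of the right-hand side of (\ref{f7}) for $M=1$, consistent with the elementary evaluation $\MeijerG{1}{0}{0}{1}{-}{b}{z}=z^b e^{-z}$.

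For the inductive step, assume (\ref{f7}) at level $M-1$ and apply $\mathcal M$ to the recurrence (\ref{f5}) with $s=M$. Interchanging the two integrations and substituting $\lambda=bt$ in the inner $\lambda$-integral decouples them: the $\lambda$-integral produces $b^{u}\,\mathcal M[g_j^{(M-1)}](u)$, while the remaining $b$-integral is $\int_0^\infty e^{-b}b^{2\nu_M+u-1}\,db=\Gamma(2\nu_M+u)$. Hence $\mathcal M[g_j^{(M)}](u)=\Gamma(2\nu_M+u)\,\mathcal M[g_j^{(M-1)}](u)$, and iterating down to the base case gives $\mathcal M[g_j^{(M)}](u)=\Gamma(2\nu_1+j+u)\prod_{s=2}^M\Gamma(2\nu_s+u)$, the product identified in the first paragraph. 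Inverting the Mellin transform completes the induction.

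The step demanding genuine care is the interchange of the $b$-integration with the $\lambda$-integration (equivalently, in a direct argument, with the inverse-Mellin contour obtained by expanding $g_j^{(M-1)}$ via its integral representation). This is legitimised by taking $\gamma$ along a vertical line $\operatorname{Re} u = c$ with $c$ chosen to the right of all the poles $u=-(2\nu_1+j)$ and $u=-2\nu_s$ of the gamma factors, so that $\Gamma(2\nu_M+u)$ is finite and the $b$-integral converges absolutely, and by using the exponential decay of the gamma factors along vertical lines to invoke Fubini's theorem. The decay of $g_j^{(M-1)}$ needed for its Mellin transform to exist on a common vertical strip follows inductively from the Meijer $G$-function asymptotics; all remaining manipulations are routine gamma-integral evaluations.
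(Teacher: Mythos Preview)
Your argument is correct. Both your proof and the paper's proof rest on the same underlying fact: the recurrence (\ref{f5}) is a Mellin-type convolution, so that passing to Mellin space turns it into multiplication by a single gamma factor, and induction on $M$ then yields the product $\Gamma(2\nu_1+j+u)\prod_{s=2}^M\Gamma(2\nu_s+u)$, which is precisely the Mellin transform of the Meijer $G$-function in (\ref{f7}).

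The difference is in packaging. The paper first rewrites $e^{-b}b^{2\nu_s-1}$ and $g_j^{(s-1)}(\lambda/b)$ as $G^{1,0}_{0,1}$ functions, applies the argument-inversion identity $G(z)\leftrightarrow G(1/z)$, and then invokes the closure of Meijer $G$-functions under convolution as a black-box result (citing Luke). You instead compute the Mellin transform directly, prove the multiplicativity under the convolution by the substitution $\lambda=bt$, and only identify the Meijer $G$-function at the very end. Your route is more self-contained --- it does not require the reader to look up the $G$-function convolution theorem or the inversion formula --- and it makes transparent why the answer is a Meijer $G$: it is simply the inverse Mellin transform of a product of gammas. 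The paper's route, on the other hand, makes more visible that each step stays within the Meijer $G$ class, which is conceptually appealing if one already has those identities at hand. Mathematically the two arguments are equivalent, since the $G$-function convolution theorem is itself proved exactly by the Mellin computation you carry out.
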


\begin{proof}
For general $\rho$, we have
\[
x^\rho e^{-x} =
\MeijerG{1}{0}{0}{1}{\underline{\hspace{0.2cm}}}{\rho }{x}.
\]
Hence, from (\ref{f5}) and (\ref{f6})
\begin{align*}
g_j^{(2)}(\lambda) & = \int_0^\infty
\MeijerG{1}{0}{0}{1}{-}{2 \nu_2 - 1}{b}
\MeijerG{1}{0}{0}{1}{-}{2 \nu_1 + j }{\frac{\lambda}{b}} \, db \\
& = \int_0^\infty
\MeijerG{1}{0}{0}{1}{-}{2 \nu_2 - 1}{b}
\MeijerG{0}{1}{1}{0}{1 - (2 \nu_1 + j)}{-}{\frac{b}{\lambda}} \, db,
\end{align*}
where the second equality follows from a general formula relating the Meijer G-function with argument
$1/x$ to another Meijer G-function with argument $x$; see e.g.~\cite{Lu69}.  A fundamental property of
Meijer G-functions is that they are closed under convolutions of this type, and (\ref{f7}) results with
$M=2$. The derivation for general $M$ proceeds similarly, making use too of induction.
\end{proof}

So by Lemma~\ref{lemma-g-ele}, we are considering a polynomial ensemble with eigenvalue PDF given by
\begin{equation}
\frac1{Z_{m,M}}\Delta_m^\even(\lambda)\det[g_{j-1}^{(M)}(\lambda_i)]_{i,j=1}^m.
\end{equation}
The corresponding biorthogonal functions are functions $p_n(x)\in\operatorname{span}\{1,x^2,\ldots,x^{2n}\}$ and $\phi_n(x)\in\operatorname{span}\{g^{(M)}_0,g^{(M)}_1,\ldots,g^{(M)}_n\}$ such that
\begin{equation}\label{def:pq}
\int_0^\infty p_n(x)\phi_m(x) \, dx=\delta_{m,n}.
\end{equation}
To specify them it is illuminating to first calculate the bimoments. We have
\begin{equation}\label{bimoment-ele}
B_{k\ell}^{(M)}:=\int_0^\infty x^{2k}g_\ell^{(M)}(x)\, dx=\Gamma(2\nu_1+2k+\ell+1)\prod_{i=2}^M\Gamma(2\nu_i+2k+1),
\end{equation}
while the determinant of the bimoment matrix yields
\begin{equation}
D_m^{(M)}:=\det[B_{k\ell}^{(M)}]_{k,\ell=0}^m=\prod_{k=0}^m2^kk!\prod_{i=1}^M\Gamma(2\nu_i+2k+1),
\end{equation}
and more importantly
\begin{equation}\label{det-frac}
\frac{D_m^{(M)}}{D_m^{(1)}}=\prod_{k=0}^m\prod_{i=2}^M\Gamma(2\nu_i+2k+1).
\end{equation}

Now, we are ready to show
\begin{proposition}
The biorthogonal polynomial $p_n(x)$ in \eqref{def:pq} is given by
\begin{equation}\label{polynomial1-sum}
p_n(x)=\sum_{k=0}^n(-1)^{n-k}\binom{n}{k}\prod_{i=1}^M\frac{\Gamma(2\nu_i+2n+1)}{\Gamma(2\nu_i+2k+1)}x^{2k}.
\end{equation}
\end{proposition}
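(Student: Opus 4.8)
The plan is to characterize $p_n$ by a small set of linear conditions and then verify that the closed form \eqref{polynomial1-sum} satisfies them. Since $p_n \in \operatorname{span}\{1, x^2, \ldots, x^{2n}\}$ and the functions $\phi_0, \ldots, \phi_{n-1}$ span the same space as $g_0^{(M)}, \ldots, g_{n-1}^{(M)}$ (the biorthogonal system being nondegenerate), the relations $\int_0^\infty p_n \phi_m\, dx = 0$ for $m = 0, \ldots, n-1$ coming from \eqref{def:pq} are equivalent to the $n$ orthogonality conditions $\int_0^\infty p_n(x) g_\ell^{(M)}(x)\, dx = 0$ for $\ell = 0, \ldots, n-1$. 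These, together with the normalization fixed by taking $p_n$ monic (coefficient of $x^{2n}$ equal to $1$, as in \eqref{polynomial1-sum}), determine $p_n$ uniquely: the associated linear system in the coefficients of $p_n$ has as its coefficient matrix a submatrix of the bimoments, whose determinant is the explicitly evaluated, nonzero quantity $D_{n-1}^{(M)}$. Thus it suffices to check that the candidate \eqref{polynomial1-sum} is monic and obeys the orthogonality conditions.

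Writing $p_n(x) = \sum_{k=0}^n c_{n,k} x^{2k}$ with $c_{n,k} = (-1)^{n-k}\binom{n}{k}\prod_{i=1}^M \Gamma(2\nu_i+2n+1)/\Gamma(2\nu_i+2k+1)$, the condition $\int_0^\infty p_n g_\ell^{(M)}\,dx = 0$ reads $\sum_{k=0}^n c_{n,k} B_{k\ell}^{(M)} = 0$ in terms of the bimoments \eqref{bimoment-ele}. Substituting the explicit form of $B_{k\ell}^{(M)}$, I would observe that the factors $\prod_{i=2}^M \Gamma(2\nu_i+2k+1)$ from the bimoment cancel exactly against the corresponding denominators in $c_{n,k}$, while the $k$-independent factor $\prod_{i=1}^M \Gamma(2\nu_i+2n+1)$ pulls outside the sum. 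What remains is
\[
\prod_{i=1}^M \Gamma(2\nu_i+2n+1) \sum_{k=0}^n (-1)^{n-k}\binom{n}{k} \frac{\Gamma(2\nu_1+2k+\ell+1)}{\Gamma(2\nu_1+2k+1)},
\]
so the whole claim reduces to showing this alternating binomial sum vanishes for $0 \le \ell \le n-1$.

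To finish, note that $\Gamma(2\nu_1+2k+\ell+1)/\Gamma(2\nu_1+2k+1) = \prod_{r=1}^\ell (2\nu_1 + 2k + r)$ is a polynomial in $k$ of degree exactly $\ell$. The alternating sum $\sum_{k=0}^n (-1)^{n-k}\binom{n}{k} P(k)$ is precisely the $n$-th forward difference $(\Delta^n P)(0)$, which annihilates every polynomial of degree strictly less than $n$; since $\ell \le n-1 < n$, the sum vanishes, giving the orthogonality for all $\ell = 0, \ldots, n-1$. Combined with the manifestly monic leading term (the $k=n$ summand contributes $x^{2n}$ with coefficient $1$) and membership in $\operatorname{span}\{1, x^2, \ldots, x^{2n}\}$, this pins down $p_n$ and establishes \eqref{polynomial1-sum}.

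I expect the main obstacle to be bookkeeping rather than conceptual: tracking the cancellation of the $i = 2, \ldots, M$ gamma factors cleanly, and recognizing that the surviving sum is a finite difference of a polynomial whose degree is controlled by $\ell$ rather than by $\nu_1$ or $M$. A secondary point requiring care is the normalization convention, namely verifying that taking $p_n$ monic is the correct way to fix the scaling freedom inherent in any biorthogonal pair $(p_n, \phi_n)$, which is what makes \eqref{polynomial1-sum} the unique solution of the orthogonality system.
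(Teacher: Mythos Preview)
Your proof is correct and takes a genuinely different route from the paper's. The paper starts from the determinantal representation
\[
p_n(x)=\frac{1}{D_{n-1}^{(M)}}\det\big[B_{k\ell}^{(M)}\,\big\vert\, x^{2k}\big]_{\ell=0,\ldots,n-1}^{k=0,\ldots,n},
\]
then exploits the product structure of the bimoments \eqref{bimoment-ele} together with \eqref{det-frac} to strip off the $i\ge 2$ gamma factors columnwise, thereby reducing the determinant to the $M=1$ case (the Laguerre Muttalib--Borodin ensemble), for which it quotes the explicit closed form from the literature. You instead characterize $p_n$ by monicity plus orthogonality to $g_0^{(M)},\dots,g_{n-1}^{(M)}$ and verify the candidate formula directly: after the same gamma cancellations, what remains is the alternating sum $\sum_{k=0}^n(-1)^{n-k}\binom{n}{k}\prod_{r=1}^{\ell}(2\nu_1+2k+r)$, which you recognize as the $n$-th forward difference of a polynomial of degree $\ell<n$ in $k$ and hence zero. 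Your argument is more elementary and fully self-contained, dispensing with any appeal to prior results on the $M=1$ biorthogonal polynomials; the paper's argument, on the other hand, makes the structural reduction to the Muttalib--Borodin case explicit, which is conceptually informative for readers coming from that literature.
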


\begin{proof}
By the general theory of biorthogonal functions~\cite{DF06b}, we know that
\begin{equation}
p_n(x)=\frac1{D_{n-1}^{(M)}}\det\big[B_{k\ell}^{(M)}\big\vert x^{2k} \big]^{k=0,\ldots,n}_{\ell=0,\ldots,n-1}.
\end{equation}
Now, using the structure of the bimoments~\eqref{bimoment-ele} and~\eqref{det-frac}, it follows that
\begin{equation}\label{polynomial-step-proof}
p_n(x)=\prod_{\ell=2}^M\Gamma(2\nu_\ell+2n+1)
\frac1{D_{n-1}^{(1)}}\det\bigg[B_{k\ell}^{(1)}\bigg\vert
\frac{x^{2k}}{\prod_{i=2}^M\Gamma(2\nu_i+2k+1)} \bigg]^{k=0,\ldots,n}_{\ell=0,\ldots,n-1}.
\end{equation}
Thus, the structure for the biorthogonal polynomial for general $M$ is linked to the case $M=1$. Moreover, the $M=1$ case is the so-called Laguerre Muttalib--Borodin ensemble for which the biorthogonal polynomials are known. In particularly, we know from~\cite{Kon67,Zh15,FI16} that
\begin{equation}
\frac1{D_{n-1}^{(1)}}\det\big[B_{k\ell}^{(1)}\big\vert
x^{2k} \big]^{k=0,\ldots,n}_{\ell=0,\ldots,n-1}
=\sum_{k=0}^n(-1)^{n-k}\binom{n}{k}\frac{\Gamma(2\nu_1+2n+1)}{\Gamma(2\nu_1+2k+1)}x^{2k}.
\end{equation}
We can use this expression to evaluate the determinant in~\eqref{polynomial-step-proof}, which completes the proof.
\end{proof}

There is an alternative representation for the polynomial $p_n$.
For this, we need  the generalized hypergeometric function
 \begin{equation}\label{def:hypergeo}
 {\; }_p F_q \left({a_1,\ldots, a_p \atop b_1,\ldots,b_q} \Big{|} z \right)=\sum_{k=0}^\infty \frac{(a_1)_k\cdots (a_p)_k}{(b_1)_k \cdots (b_q)_k}\frac{z^k}{k!}
\end{equation}
   with
\begin{equation}\label{eq:pochammer}
(a)_k=\frac{\Gamma(a+k)}{\Gamma(a)}=a(a+1)\cdots(a+k-1)
\end{equation}
being the Pochhammer symbol.

\begin{corollary}\label{cor:pn}
The biorthogonal polynomial~\eqref{polynomial1-sum} has the generalized hypergeometric function and Meijer G-function representations given by
\begin{align}
p_n(x)&=(-1)^n\prod_{i=1}^M\frac{\Gamma(2\nu_i+2n+1)}{\Gamma(2\nu_i+1)}
\hypergeometric{1}{2M}{-n}{1+\nu_1,\nu_1+\tfrac12,\ldots,1+\nu_M,\nu_M+\tfrac12}{\frac{x^2}{2^{2M}}}
\nonumber
\\
&=(-1)^nh_n\MeijerG{1}{0}{1}{2M+1}{n+1}{0,-\nu_1,\tfrac12-\nu_1,\ldots,-\nu_M,\tfrac12-\nu_M}{\frac{x^2}{2^{2M}}}
\label{polynomial1-meijer}
\end{align}
with $$h_n=n!\prod_{i=1}^M\left(2^{2n}\Gamma(\nu_i+n+1)\Gamma(\nu_i+n+\tfrac12)\right).$$
\end{corollary}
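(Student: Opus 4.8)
The plan is to treat the two representations in turn, deriving the ${}_1F_{2M}$ form first and then passing to the Meijer $G$-function. For the hypergeometric representation I would start from the explicit sum \eqref{polynomial1-sum} and rewrite its coefficients in terms of Pochhammer symbols. Using $(-n)_k=(-1)^k n!/(n-k)!$ one has $(-1)^{n-k}\binom{n}{k}=(-1)^n (-n)_k/k!$, which both produces the overall prefactor $(-1)^n$ and converts the binomial into the standard numerator $(-n)_k$. Writing $\Gamma(2\nu_i+2k+1)=\Gamma(2\nu_i+1)\,(2\nu_i+1)_{2k}$ then lets me pull the $k$-independent ratio $\prod_{i=1}^M\Gamma(2\nu_i+2n+1)/\Gamma(2\nu_i+1)$ outside the sum, matching the prefactor in the first line of the corollary.

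The only nontrivial step in this first part is the Legendre duplication formula $(a)_{2k}=2^{2k}(a/2)_k\big((a+1)/2\big)_k$, applied with $a=2\nu_i+1$, which splits $(2\nu_i+1)_{2k}=2^{2k}(\nu_i+\tfrac12)_k(\nu_i+1)_k$. Collecting the $M$ factors of $2^{-2k}$ into $(x^2/2^{2M})^k$ then yields precisely the argument $x^2/2^{2M}$ and the lower parameters $\{1+\nu_i,\nu_i+\tfrac12\}_{i=1}^M$, giving the claimed ${}_1F_{2M}$.

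For the Meijer $G$-function representation, rather than invoking the general ${}_pF_q$-to-$G$ dictionary — where the sign of the argument is easy to misplace — I would verify the identity directly from the Mellin--Barnes definition stated in the text. Expanding the proposed $G^{1,0}_{1,2M+1}$ by summing the residues of its integrand at the poles $u=-k$ ($k\ge0$) of $\Gamma(u)$ (the numerator factor coming from the lower parameter $b_1=0$), and using $\operatorname{Res}_{u=-k}\Gamma(u)=(-1)^k/k!$, produces a series whose general term carries $1/\Gamma(n+1-k)$; this automatically truncates the sum at $k=n$, reflecting the polynomial nature. Multiplying by $(-1)^n h_n$ and comparing termwise against \eqref{polynomial1-sum} reduces, for each index $i$, to the single identity $\Gamma(2\nu_i+2n+1)/\Gamma(2\nu_i+2k+1)=2^{2(n-k)}\Gamma(\nu_i+n+1)\Gamma(\nu_i+n+\tfrac12)/\big[\Gamma(\nu_i+k+1)\Gamma(\nu_i+k+\tfrac12)\big]$, which is again the duplication formula; the powers of $2$ collected from the $M$ factors combine with the $2^{2Mk}$ arising from $x^{2k}=2^{2Mk}(x^2/2^{2M})^k$ to reproduce exactly the constant $h_n$.

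The main obstacle is purely bookkeeping: keeping the various $\Gamma$-ratios, powers of two, and alternating signs aligned so that the constant $h_n$ and the argument $x^2/2^{2M}$ emerge exactly. I expect the cleanest route to be the termwise residue comparison, since it sidesteps any ambiguity in the orientation of the Mellin--Barnes contour and in the argument convention of the standard ${}_pF_q$/$G$ conversion.
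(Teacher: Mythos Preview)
Your proposal is correct. For the hypergeometric representation you follow exactly the paper's route---recognising the sum \eqref{polynomial1-sum} as a ${}_1F_{2M}$ series via the duplication formula---only spelling out the Pochhammer bookkeeping in more detail. For the Meijer $G$-function representation the paper simply invokes the standard identity
\[
{}_1F_q\!\left(\begin{matrix}-n\\ b_1,\dots,b_q\end{matrix}\,\Big|\,z\right)
=n!\prod_{i=1}^q\Gamma(b_i)\,
G^{1,0}_{1,q+1}\!\left(\begin{matrix}n+1\\ 0,1-b_1,\dots,1-b_q\end{matrix}\,\Big|\,z\right),
\]
whereas you choose to verify the equality directly by summing the residues of the Mellin--Barnes integrand at $u=-k$. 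Your route is a bit more self-contained and sidesteps any worry about argument conventions in the ${}_pF_q$/$G$ dictionary, at the cost of redoing in this special case what the quoted identity already packages; the paper's route is shorter because that identity is standard. Either way the only analytic input is the Legendre duplication formula, and your termwise comparison with \eqref{polynomial1-sum} (including the truncation via $1/\Gamma(n+1-k)$ and the assembly of $h_n$ from the collected powers of $2$) is sound.
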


\begin{proof}
The expression~\eqref{polynomial1-sum} is immediately recognized as a generalized hypergeometric sum after using the duplication formula for the gamma functions, hence the first equality in \eqref{polynomial1-meijer} follows. The second equality follows from the relationship between the generalized hypergeometric function and the Meijer G-function,
\begin{equation*}
\hypergeometric{1}{q}{-n}{b_1,\ldots,b_q}{z}=n!\prod_{i=1}^q\Gamma(b_i) \,
\MeijerG{1}{0}{1}{q+1}{n+1}{0,1-b_1,\ldots,1-b_q}{z}.
\end{equation*}
\end{proof}

The biorthogonal function $\phi_n(x)$ can also be expressed as a Meijer G-function.
\begin{proposition}\label{prop:phin}
The function $\phi_n(x)$ in \eqref{def:pq} is given by
\begin{equation}\label{polynomial2}
\phi_n(x)=\frac{(-1)^nx}{2^{2M-1}h_n}
\MeijerG{2M}{1}{1}{2M+1}{-n}{\nu_1,\nu_1-\tfrac12,\ldots,\nu_M,\nu_M-\tfrac12,0}{\frac{x^2}{2^{2M}}}.
\end{equation}
\end{proposition}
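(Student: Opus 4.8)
The plan is to characterize $\phi_n$ intrinsically and then verify that the Meijer G-function on the right-hand side of \eqref{polynomial2}, call it $\tilde\phi_n$, satisfies that characterization. Since the polynomials $\{p_m\}$ span $\operatorname{span}\{1,x^2,\dots,x^{2n}\}$ with $p_n$ having leading coefficient $1$ (take $k=n$ in \eqref{polynomial1-sum}), the defining relations $\int_0^\infty p_m\phi_n\,dx=\delta_{m,n}$ amount, for the determination of $\phi_n$, to the $n+1$ conditions that $\phi_n\in\operatorname{span}\{g_0^{(M)},\dots,g_n^{(M)}\}$ and $\int_0^\infty x^{2k}\phi_n(x)\,dx=\delta_{k,n}$ for $k=0,\dots,n$ (the remaining biorthogonality against $p_m$ with $m>n$ then being automatic from the defining orthogonality of $p_m$ to $g_0^{(M)},\dots,g_{m-1}^{(M)}$). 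Because the bimoment matrix $[B_{k\ell}^{(M)}]$ is nonsingular ($D_n^{(M)}\neq 0$), these conditions fix $\phi_n$ uniquely inside the $(n+1)$-dimensional span, so it suffices to check that $\tilde\phi_n$ lies in the span and has the prescribed moments.

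First I would establish the moment conditions via the Mellin transform. Writing $\tilde\phi_n(x)=\frac{(-1)^n x}{2^{2M-1}h_n}\,G\!\left(x^2/2^{2M}\right)$ with $G$ the Meijer G-function in \eqref{polynomial2}, the substitution $t=x^2$ turns $\int_0^\infty x^{2k}\tilde\phi_n(x)\,dx$ into a Mellin transform of $G$ evaluated at $s=k+1$. Since the Mellin transform of a Meijer G-function is exactly the ratio of gamma factors in its Mellin--Barnes integrand, this yields a quantity proportional to $\big(\prod_i\Gamma(\nu_i+k+1)\Gamma(\nu_i+k+\half)\big)\,\Gamma(n-k)/\Gamma(-k)$. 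For $0\le k\le n-1$ the factor $1/\Gamma(-k)$ vanishes while $\Gamma(n-k)$ stays finite, so these moments are zero; at $k=n$ both gamma factors are singular and the limiting ratio $\Gamma(0)/\Gamma(-n)=(-1)^n n!$ is finite, whereupon a short bookkeeping with the constant $h_n$ and the powers of $2^{2M}$ gives $\int_0^\infty x^{2n}\tilde\phi_n\,dx=1$.

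The containment $\tilde\phi_n\in\operatorname{span}\{g_0^{(M)},\dots,g_n^{(M)}\}$ is the crux, and I would argue it at the level of Mellin--Barnes kernels. From \eqref{f7}, $g_\ell^{(M)}$ has integrand $\Gamma(2\nu_1+\ell+u)\prod_{i=2}^M\Gamma(2\nu_i+u)\,x^{-u}$, and using $\Gamma(2\nu_1+\ell+u)=(2\nu_1+u)_\ell\,\Gamma(2\nu_1+u)$ one sees that a combination $\sum_{\ell=0}^n c_\ell g_\ell^{(M)}$ has integrand $\prod_{i=1}^M\Gamma(2\nu_i+u)$ times an arbitrary polynomial in $u$ of degree $\le n$. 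So membership in the span is equivalent to the kernel factoring this way. For $\tilde\phi_n$ I would write its Mellin--Barnes integrand (Mellin variable $w$ conjugate to $x^2/2^{2M}$), absorb the prefactor $x$ through the change $u=2w-1$, and apply the duplication formula $\Gamma(\nu_i+\tfrac{u}{2})\Gamma(\nu_i+\half+\tfrac{u}{2})=2^{1-2\nu_i-u}\sqrt\pi\,\Gamma(2\nu_i+u)$ to collapse each pair $\Gamma(\nu_i+\cdots)\Gamma(\nu_i-\half+\cdots)$ into a single $\Gamma(2\nu_i+u)$. The $u$-dependent powers of $2$ produced by the duplication formula cancel exactly against the factor $2^{2Mw}$ coming from the argument $x^2/2^{2M}$ --- this is precisely what the rescaling in \eqref{polynomial2} is arranged to achieve --- leaving $\prod_{i=1}^M\Gamma(2\nu_i+u)$ times $\Gamma(n+\half-\tfrac u2)/\Gamma(\half-\tfrac u2)=(\half-\tfrac u2)_n$, a polynomial in $u$ of degree exactly $n$. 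Hence $\tilde\phi_n$ lies in the span, and together with the moment computation and uniqueness this proves \eqref{polynomial2}.

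I expect the duplication-formula step to be the main obstacle: one must track the $u$-dependent powers of $2$ carefully and confirm they cancel against the rescaling of the argument, and one must keep the contour conditions for the Mellin--Barnes and Mellin-transform manipulations valid, which is justified by the assumed exponential decay exactly as in the derivation of $g_j^{(M)}$. Once the kernel is recognized as $\prod_{i=1}^M\Gamma(2\nu_i+u)$ times a degree-$n$ polynomial in $u$, both the span membership and the vanishing of the lower moments become transparent.
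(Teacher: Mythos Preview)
Your proof is correct and genuinely different from the paper's. The paper proceeds by citation: it recalls from \cite{AIK13} that the Meijer G-functions
\[
q_n(y)=\frac{(-1)^n}{h_n}\,\MeijerG{2M}{1}{1}{2M+1}{-n}{\nu_1,\nu_1-\tfrac12,\ldots,\nu_M,\nu_M-\tfrac12,0}{y}
\]
are biorthogonal to the functions appearing in the Meijer G representation \eqref{polynomial1-meijer} of $p_n$, and after the change of variables $y=x^2/2^{2M}$ this immediately yields \eqref{def:pq}; span membership is then asserted to follow from the same comparison with \cite{AIK13}, without the mechanism being spelled out. Your argument, by contrast, is entirely self-contained: you characterise $\phi_n$ by the moment conditions $\int_0^\infty x^{2k}\phi_n\,dx=\delta_{k,n}$ together with membership in $\operatorname{span}\{g_0^{(M)},\ldots,g_n^{(M)}\}$, verify the moments by the Mellin transform of the Meijer G-function, and establish span membership by the key observation that the duplication formula collapses each pair $\Gamma(\nu_i+w)\Gamma(\nu_i-\tfrac12+w)$ into $\Gamma(2\nu_i+u)$, with the $u$-dependent powers of $2$ cancelling precisely against the rescaling $x^2/2^{2M}$, leaving the kernel $\prod_{i=1}^M\Gamma(2\nu_i+u)$ times the degree-$n$ polynomial $(\tfrac12-\tfrac u2)_n$. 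This makes transparent exactly why the $2M$ half-integer-shifted parameters in \eqref{polynomial2} correspond to the $M$ doubled parameters in $g_j^{(M)}$, a point the paper's proof leaves to the reader via the reference. The paper's route is shorter if one grants \cite{AIK13}; yours explains the structural reason the identity holds and would survive even without that reference.

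One small remark: at $k=n$ the Mellin integral sits on the boundary of the fundamental strip (the pole of $\Gamma(1+n-s)$ is at $s=n+1$), so the ratio $\Gamma(0)/\Gamma(-n)$ should, strictly, be obtained by analytic continuation in $s$ or by shifting the Mellin--Barnes contour and picking up the residue; this is routine given the exponential decay, as you note.
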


\begin{proof}
We know from the work of Akemann et al.~\cite{AIK13} (see also~\cite{Ip15}) that the biorthogonal relation
\begin{equation}
\int_0^\infty(-1)^nh_n\MeijerG{1}{0}{1}{2M+1}{n+1}{0,-\nu_1,\tfrac12-\nu_1,\ldots,-\nu_M,\tfrac12-\nu_M}{x}q_k(x)dx
=\delta_{n,k}
\end{equation}
is satisfied if
\begin{equation}
q_n(x)=\frac{(-1)^n}{h_n}\MeijerG{2M}{1}{1}{2M+1}{-n}{\nu_1,\nu_1-\tfrac12,\ldots,\nu_M,\nu_M-\tfrac12,0}{x}.
\end{equation}
Thus, by comparison with~\eqref{polynomial1-meijer} we know that the functions~\eqref{polynomial2} are biorthogonal respect to the polynomials~\eqref{polynomial1-sum}. Furthermore, it follows from this comparison that $\phi_n$ belongs to the linear span of $g^{(M)}_0, \ldots, g^{(M)}_n$.
\end{proof}

We conclude this subsection with the proof of Corollary \ref{cor1even}.

\paragraph{Proof of Corollary \ref{cor1even}}
Let $X_j$, $j=1,\ldots,M$ be independent complex matrices of size $(n+\nu_j)\times(n+\nu_{j-1})$ with $\nu_0=0$ and $\nu_j\geq 0$. Assume that the matrices $\{X_j\}$ are distributed by identical, independent Gaussians, i.e., with PDF proportional to
$$e^{-\textrm{Tr}X_j^\dagger X_j},$$
and consider the the product
\begin{equation} \label{Ym}
Y_M = X_M X_{M-1} \cdots X_1.
\end{equation}

Denote by $\{x_j\}_{j=1}^n$ the eigenvalues of $Y_M^\dagger Y_M$. According to \cite{AIK13}, the
PDF of $\{x_j\}$ is given by
\begin{equation} \label{jpdf}
  \frac{1}{\mathcal{Z}_n}  \Delta(x)
        \det \left[ w_{k-1}(x_j) \right]_{j,k=1}^n,
\end{equation}
where $\mathcal{Z}_n = n!\prod_{i=1}^{n}\prod_{j=0}^M \Gamma(i+\nu_j)$ and
\begin{equation} \label{wk}
    w_k(x) = \MeijerG{M}{0}{0}{M}{-}{\nu_M, \nu_{M-1},  \ldots, \nu_2, \nu_1 +
    k}{x}
    \end{equation}
Let $P_n(x)$ and $Q_n(x)$ be the associated biorthogonal functions. That is,
$P_n$ is a monic polynomial of degree $n$ and $Q_n$ is a linear combination of $w_0, \ldots, w_n$, uniquely defined by the orthogonality
\begin{equation} \label{PkQkbio}
    \int_0^{\infty} P_j(x) Q_k(x) \, dx = \delta_{j,k}.
    \end{equation}
By \cite{AIK13}, we have the following explicit formulas of $P_n$ and $Q_n$ in terms of Meijer G-functions:
\begin{equation} \label{PnMeijerG}
  P_n(x)  = (-1)^n
    \prod_{j=0}^M\Gamma(n+\nu_j+1)\MeijerG{0}{1}{1}{M+1}{n+1}{0,-\nu_M, \ldots, -\nu_1}{x}
   \end{equation}
   and
    \begin{align} \label{QkMeijerG}
  Q_n(x) = \frac{(-1)^n}{\prod_{j=0}^M\Gamma(n+\nu_j+1)} \MeijerG{M}{1}{1}{M+1}{-n}{\nu_M, \nu_{M-1}, \ldots, \nu_1,0}{x}.
    \end{align}

By Corollary \ref{cor:pn} and Proposition \ref{prop:phin}, it is easily seen that if we make a change of variables $x^2/2^{2M}\mapsto x$, then we find exactly the same biorthogonal system for two different random products, and the sought result  follows.
 \qed


\subsection{Initial matrix as Hermitian anti-symmetric Gaussian random matrix}\label{S4.4}

In this subsection, we identify another case of~\eqref{f3} which also turns out to be closely related to products of complex Gaussian matrices. In this case the initial matrix is chosen as $$B_0=iA$$
with $A$ a $2m\times 2m$ real anti-symmetric standard Gaussian random matrix.
The eigenvalue PDF of $B_0$ in this setting is given by the functional form (\ref{f1}) with
\begin{equation}\label{f8}
Z_m = \pi^{m/2} 2^{-m^2} \prod_{j=0}^{m-1} (2j)!, \qquad
g_{j-1}(b) = b^{2(j-1)} e^{-b^2};
\end{equation}
see e.g.~\cite[Eq.~(2.18)]{FN09}.
With $g_j^{(0)}(b)$ given by this $g_j$, it so happens that the recurrence (\ref{f5}) has been
encountered in the recent study \cite{FIL17} of the eigenvalues of the random matrix product
\[
G_M^\dagger \cdots G_1^\dagger H G_1 \cdots G_M,
\]
where each $G_j$ is a standard complex Gaussian matrix of size
$(\nu_{j-1} + N ) \times (\nu_j + N)$ and $H$ is a member of the GUE of size $N\times N$. The
only difference is that $\{2 \nu_s \}_{s=0,1,\dots}$ in (\ref{f5}) and $2(j-1)$ in (\ref{f8}) need to be replaced by $\{ \nu_s \}_{s=0,1,\dots}$  and $(j-1)$ respectively to obtain a recurrence of \cite{FIL17}.
Moreover, the recurrence was solved in this work, first in terms of a contour integral
\cite[Eq.~(2.10)]{FIL17}, and this in turn was recognised as a particular Meijer G-function
\cite[Eq.~(2.14)]{FIL17}, telling us in the present setting that
\begin{equation}\label{f9}
g_j^{(M)}(\lambda) =
\prod_{\ell=1}^M \frac{2^{\nu_\ell - 1}}{\sqrt{\pi}}
\MeijerG{2M+1}{0}{0}{2M+1}{-}{\nu_1,\nu_1 + 1/2,\dots, \nu_M, \nu_M + 1/2,  j }{\frac{\lambda^2}{4^M}}.
\end{equation}
Hence with $g^{(M)}_j(\lambda)$ so defined, the eigenvalue PDF for the product matrix~\eqref{f3} with initial matrix $B_0$ a Hermitian anti-symmteric matrix is
\begin{equation}\label{joint-dist-even}
\frac1{Z_m}\Delta_m^\even (\lambda)\det[g_j^{(M)}(\lambda_i)]_{i,j=1}^m
\end{equation}
with $0\leq\lambda_1\leq\cdots\leq\lambda_m$. This structure looks very similar to that for the Hermitised product studied in~\cite{FIL17}, with the important difference that the eigenvalue PDF only depends on the square of the eigenvalues $\{\lambda_i\}$. This latter fact is closely related to the reflection symmetry of the spectrum about the origin. We exploit this symmetry by making a change of variables $x_i=\lambda_i^2/4^M$ ($i=1,\dots,m$) in~\eqref{joint-dist-even}, which yields
\begin{equation}\label{AIK-reappears}
\frac1{Z_{m,M}}\Delta_m (x)
\det\Big[\MeijerG{2M+1}{0}{0}{2M+1}{-}{\nu_1-\tfrac12,\nu_1,\ldots,\nu_M-\tfrac12,\nu_M,j-\tfrac12}{x}\Big]_{i,j=1}^m,
\end{equation}
where $Z_{m,M}$ is a modified normalisation constant.

Like Corollary \ref{cor1even}, we can make the crucial observation that the functional form of~\eqref{AIK-reappears} is exactly the same as the model studied by Akemann et al.~\cite{AIK13} for the eigenvalues for the product ensemble
\begin{equation}\label{product-odd}
G_{2M+1}^\dagger\cdots G_1^\dagger G_1\cdots G_{2M+1}.
\end{equation}
Here, each $G_i$ is an $m\times m$ complex Gaussian matrix with PDF proportional to
\begin{equation}\label{4.39}
\det (G_i^\dagger G_i)^{\nu_i-1/2} e^{-\Tr G_i^\dagger G_i}
\qquad \text{and} \qquad
\det (G_i^\dagger G_i)^{\nu_i} e^{-\Tr G_i^\dagger G_i}
\end{equation}
for $i=2j-1$ ($j=1,\ldots,m+1$) with $\nu_{2M+1}=0$ and $i=2j$ ($j=1,\ldots,m$), respectively.
We note that the product~\eqref{product-odd} consists of an odd number of factors, while the product~\eqref{product-even} consists of an even number of factors. Nonetheless, the conclusion is the same. We can read off the form of the correlation kernel as well as its scaling behaviour from the preexisting literature.

\section{Special cases for odd dimensions}
\label{sec5}

\subsection{Rank two perturbation}

The analogue of (\ref{3.13}) in the odd dimensional case is
\begin{equation}\label{3.13odd}
\operatorname{diag}\Big ((a_1,\ldots,a_n)\otimes
\begin{bmatrix}
0 & i \\
-i & 0
\end{bmatrix},0 \Big )
+ X_{(2n+1)\times 2}
\begin{bmatrix}
0 & ib_1 \\
-ib_1 & 0
\end{bmatrix}(X^T)_{2\times (2n+1)}.
\end{equation}
We can deduce the corresponding eigenvalue PDF by specialising Theorem
\ref{prop4odd}, analogous to the derivation of Corollary \ref{cor4.1} from Theorem \ref{prop4}.

\begin{corollary}
In the case $t=0$, $l=2$, the random matrix \eqref{3.4odd} has the same eigenvalue PDF as \eqref{3.13odd}. This eigenvalue PDF is given by \eqref{3.5odd} with
\begin{equation}\label{3.14odd}
g_i(\lambda)=\frac{1}{2b_1}\left(e^{-(|\lambda-a_i|/b_1}-e^{-\lvert\lambda+a_i\rvert/b_1}\right).
\end{equation}
\end{corollary}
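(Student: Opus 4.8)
The plan is to follow the derivation of Corollary \ref{cor4.1} from Theorem \ref{prop4}, now specialising Theorem \ref{prop4odd} instead. The statement has two parts: the reduction of the eigenvalue problem for \eqref{3.4odd} to that of \eqref{3.13odd}, and the explicit evaluation of the functions $g_i$ appearing in \eqref{3.5odd}.

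For the first part, I would invoke orthogonal invariance exactly as in the discussion preceding \eqref{3.13}. Setting $t=0$ in \eqref{3.4odd} leaves $M=\Omega A\Omega^T+XBX^T$. Writing $M=\Omega\big(A+(\Omega^T X)B(\Omega^T X)^T\big)\Omega^T$ and noting that $\Omega^T X$ has the same distribution as $X$ (since $X$ is a standard Gaussian and $\Omega$ is orthogonal), the eigenvalues of $M$ have the same law as those of $A+XBX^T$. Placing the fixed matrices in their canonical block forms --- $A=\operatorname{diag}((a_1,\dots,a_n)\otimes\big[\begin{smallmatrix}0 & i\\ -i & 0\end{smallmatrix}\big],0)$ in the odd dimension $2n+1$, and, with $l=2$, $B=\big[\begin{smallmatrix}0 & ib_1\\ -ib_1 & 0\end{smallmatrix}\big]$ --- identifies this with \eqref{3.13odd}.

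For the second part, I would specialise the kernel \eqref{3.6odd}. With $t=0$ and $B$ the $2\times 2$ anti-symmetric matrix having positive eigenvalue $b_1$, one has $\det(\mathbb{I}+icB)=1+b_1^2c^2$, so that
\begin{equation*}
g_i(\lambda)=\frac{2}{\pi}\int_0^\infty\frac{\sin(a_ic)\,\sin(\lambda c)}{1+b_1^2c^2}\,dc.
\end{equation*}
Using the product-to-sum identity $\sin(a_ic)\sin(\lambda c)=\tfrac12[\cos((\lambda-a_i)c)-\cos((\lambda+a_i)c)]$ together with the standard cosine transform $\int_0^\infty\frac{\cos(\omega c)}{1+b_1^2c^2}\,dc=\frac{\pi}{2b_1}e^{-|\omega|/b_1}$ then yields \eqref{3.14odd}.

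The computation is essentially routine, being the two-line analogue of Corollary \ref{cor4.1}, so there is no genuine obstacle. The only point worth flagging is the sign structure: the minus sign in \eqref{3.14odd}, in contrast to the plus sign in the even case \eqref{3.14}, is inherited directly from the sine kernel of \eqref{3.6odd} in place of the cosine kernel of \eqref{3.6} --- concretely, from the minus sign in the product-to-sum expansion of $\sin\cdot\sin$ as opposed to $\cos\cdot\cos$.
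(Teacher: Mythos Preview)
Your proposal is correct and follows essentially the same approach as the paper: specialise \eqref{3.6odd} with $t=0$ and $\det(\mathbb{I}+icB)=1+b_1^2c^2$, then evaluate the resulting integral. The paper's proof is terser (it simply writes down the integral and says ``evaluating this integral gives \eqref{3.14odd}''), whereas you spell out the product-to-sum step and the standard cosine transform, and you also make explicit the orthogonal-invariance reduction to \eqref{3.13odd} that the paper leaves implicit.
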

\begin{proof}
According to \eqref{3.6odd}, in the case $t=0$, and with $B$ given as implied by (\ref{3.13odd})
\begin{equation*}
g_i(\lambda)=\frac{2}{\pi}\int_0^\infty \frac{\sin (a_ic) \sin (\lambda c)}{1+b_1^2c^2}\,dc.
\end{equation*}
Evaluating this integral gives \eqref{3.14odd}.
\end{proof}

\begin{remark} With $g_k$ as given by  \eqref{3.6odd}, and with $b_1 = 1$ for convenience,
a result of  Defosseux \cite{De10}  gives
\begin{equation}\label{X0odd}
\det[g_i(\lambda_j)]_{i,j=1}^n= \\
\frac{1}{2}\int
\chi({z\prec\lambda})\chi({z\prec a}) e^{-\sum_{j=1}^n(a_j+\lambda_j - 2 z_j)}\,dz_1 ...dz_{n};
\end{equation}
cf.~(\ref{X0}). As with (\ref{X0}), this is seen to vanish unless $a \prec \lambda$ or $\lambda \prec a$.
\end{remark}

\subsection{Hermitised products}
We now consider the matrix~\eqref{3.4odd} with $t=0$ and $A=0$. As in the derivation of \eqref{3.15}, by taking $a_i \to 0$, it is readily seen that the PDF in Theorem \ref{prop4odd} reduces to
\begin{equation}\label{3.15odd}
\prod_{l=1}^{n-1}\frac{1}{(2l+1)!}{\Delta}_n^\odd (\lambda)
\det\bigg[\frac{2}{\pi}\int_0^\infty \frac{(-1)^{k-1}c^{2k-1} \sin (\lambda_j c)}{\det(\mathbb{I}+icB)}\,dc\bigg]_{j,k=1}^n.
\end{equation}
Since $\det(\mathbb{I}+icB)=\prod_{l=1}^m (1+c^2b_l^2)$ for $l=2m+1$, we again assume that $m\geq n$ to ensure that all the integrals in \eqref{3.15odd} are well-defined. We then have the following odd dimensional analogue of Proposition \ref{prop5}.

\begin{proposition}\label{prop5odd}
In the case $l=2n+1$, the eigenvalue PDF \eqref{3.15odd} corresponding to $t=0$, $A=0$ in Theorem \ref{prop4odd} and thus of the random matrix $XBX^T$ reduces to
\begin{equation}\label{3.16odd}
\prod_{l=1}^{n-1}\frac{1}{(2l+1)!} \prod_{l=1}^{n} \frac{\lambda_l}{b_l}
\frac{{\Delta}_n^\even (\lambda)}{{\Delta}_n^\odd (b)}\det[e^{-\lambda_j/b_k}]_{j,k=1}^n.
\end{equation}
\end{proposition}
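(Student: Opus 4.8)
The plan is to follow the strategy used to establish Proposition \ref{prop5} in the even-dimensional case, adapting each step to the present odd-dimensional setting. The first task is to evaluate the integral appearing in the $(j,k)$ entry of the determinant in \eqref{3.15odd}. Since $\det(\mathbb{I}+icB)=\prod_{l=1}^m(1+c^2b_l^2)$ and the integrand $c^{2k-1}\sin(\lambda_j c)/\det(\mathbb{I}+icB)$ is even in $c$, I would first rewrite $\tfrac{2}{\pi}\int_0^\infty$ as $\tfrac1\pi\int_{-\infty}^\infty$, then use $\sin(\lambda_j c)=\operatorname{Im} e^{i\lambda_j c}$ and close the contour in the upper half plane (legitimate because $\lambda_j>0$ forces exponential decay, and since $k\le n\le m$ the numerator degree $2k-1$ is below the denominator degree $2m$). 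Summing the residues at the simple poles $c=i/b_l$ ($l=1,\dots,m$) should yield the closed form
\begin{equation*}
\frac{2}{\pi}\int_0^\infty \frac{(-1)^{k-1}c^{2k-1}\sin(\lambda_j c)}{\det(\mathbb{I}+icB)}\,dc
=\sum_{l=1}^m\frac{e^{-\lambda_j/b_l}\,b_l^{2(m-k-1)}}{\prod_{l'\neq l}(b_l^2-b_{l'}^2)},
\end{equation*}
the odd analogue of the residue evaluation carried out in the proof of Proposition \ref{prop5}.

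With this in hand, I would specialise to $l=2n+1$, i.e.\ $m=n$, so that the matrix in \eqref{3.15odd} is the $n\times n$ product $VU$ with $V_{jl}=e^{-\lambda_j/b_l}$ and $U_{lk}=b_l^{2(n-k-1)}/\prod_{l'\neq l}(b_l^2-b_{l'}^2)$. Then $\det[\,\cdot\,]=\det V\,\det U$, where $\det V=\det[e^{-\lambda_j/b_k}]_{j,k=1}^n$ is exactly the determinant that survives in \eqref{3.16odd}. (Equivalently one may argue, as in the even case, by anti-symmetrising the multilinear expansion, the contributions with a repeated summation index vanishing.) To evaluate $\det U$, I would pull the factor $1/\prod_{l'\neq l}(b_l^2-b_{l'}^2)$ out of each row, use $\prod_{l=1}^n\prod_{l'\neq l}(b_l^2-b_{l'}^2)=(-1)^{\binom n2}\Delta_n^\even(b)^2$, and recognise $\det[b_l^{2(n-k-1)}]_{l,k=1}^n=\prod_l b_l^{-2}\,\det[(b_l^2)^{n-k}]_{l,k=1}^n=(-1)^{\binom n2}\Delta_n^\even(b)\prod_l b_l^{-2}$. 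These combine to give $\det U=1/(\Delta_n^\even(b)\prod_l b_l^2)$.

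Finally I would assemble the pieces: inserting $\det[\,\cdot\,]=\det[e^{-\lambda_j/b_k}]/(\Delta_n^\even(b)\prod_l b_l^2)$ into \eqref{3.15odd}, and using $\Delta_n^\odd(\lambda)=\prod_l\lambda_l\,\Delta_n^\even(\lambda)$ together with $\Delta_n^\odd(b)=\prod_l b_l\,\Delta_n^\even(b)$, reorganises the prefactor into $\prod_l(\lambda_l/b_l)\,\Delta_n^\even(\lambda)/\Delta_n^\odd(b)$, which is precisely \eqref{3.16odd}. The argument is structurally parallel to the even case, and the only genuinely new bookkeeping is the power $b_l^{2(m-k-1)}$, which sits one unit lower than its even-case counterpart; this shift is exactly what produces the extra factor $\prod_l(\lambda_l/b_l)$ distinguishing \eqref{3.16odd} from \eqref{3.16}. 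I expect the main obstacle to be the residue evaluation together with the careful tracking of this shift in powers---in particular the negative power $b_l^{-2}$ that arises at $k=n$---since a sign or index slip there is what would spoil the final cancellation against $\Delta_n^\even(b)$.
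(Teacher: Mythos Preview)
Your proposal is correct and follows essentially the same route as the paper: the residue evaluation of the integral is identical, and your matrix-product factorisation $\det(VU)=\det V\det U$ is equivalent to the paper's anti-symmetrisation argument (which you yourself note as an alternative). The bookkeeping of the shifted power $b_l^{2(m-k-1)}$ and the resulting extra factor $\prod_l(\lambda_l/b_l)$ is handled correctly.
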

\begin{proof}
For the integral in \eqref{3.15odd}, we have, with the aid of the residue theorem,
\begin{align}
&\frac{2}{\pi}\int_0^\infty \frac{(-1)^{k-1}c^{2k-1} \sin (\lambda_j c)}{\det(\mathbb{I}+icB)}\,dc
=\frac{1}{\pi}\int_{-\infty}^\infty \frac{(-1)^{k-1}c^{2k-1} \sin (\lambda_j c)}{\det(\mathbb{I}+icB)}\,dc
\nonumber
\\
& =-\frac{i}{\pi}\int_{-\infty}^\infty \frac{(-1)^{k-1}c^{2k-1} e^{i \lambda_j c}}{\prod_{l=1}^m (1+c^2b_l^2)}\,dc
= \sum_{\theta=1}^m \frac{e^{-\lambda_j/b_\theta}b_\theta^{2(m-k-1)}}{\prod_{l=1,l\neq \theta}^m(b_\theta^2-b_l^2)}.
\end{align}

Inserting the above formula into \eqref{3.15odd} shows that in the special case $l=2n+1$, the determinant therein can be written
\begin{align*}
& \frac{1}{({\Delta}_n^\odd (b))^2}\sum_{k_1,...,k_n=1}^n \prod_{j=1}^n e^{-\lambda_j/b_{k_j}}\det[b_{k_j}^{2(i-1)}]_{i,j=1}^n
\nonumber
\\
& = \frac{\det[e^{-\lambda_j/b_k}]_{j,k=1}^n \det[b_{k_j}^{2(i-1)}]_{i,j=1}^n}{\prod_{l=1}^nb_l{\Delta}_n^\odd (b){\Delta}_n^\even(b)}=\frac{\det[e^{-\lambda_j/b_k}]_{j,k=1}^n}{\prod_{l=1}^nb_l{\Delta}_n^\odd (b)},
\end{align*}
following the same strategy in the derivation of the even dimensional case. This, together with \eqref{3.15odd}, gives us
\eqref{3.16odd}.
\end{proof}

Hence, by considering the limiting case such that some of $\lambda_i$ and $b_i$ in \eqref{3.16odd} tend to zero, the following odd dimensional analogue of Corollary \ref{cor4} is immediate.

\begin{corollary}\label{cor4odd} (Defosseux \cite{De10}).
Let $X$ be a $(2n+1)\times (2m+1)$ ($n\geq m$) standard real Gaussian matrix. With $b_j>0$ ($j=1,\dots,m$), let
\begin{equation}\label{Bp}
B=\operatorname{diag}\Big(
\big[\begin{smallmatrix} 0 & ib_1 \\ -ib_1 & 0 \end{smallmatrix}\big],
\ldots,
\big[\begin{smallmatrix} 0 & ib_m \\ -ib_m & 0 \end{smallmatrix}\big],0 \Big).
\end{equation}
The eigenvalue PDF of $XBX^T$, or equivalently of $\tilde{X} \tilde{B} \tilde{X}^T$, where
$\tilde{X}$ is a $(2n+1) \times (2n+1)$ standard Gaussian matrix and $\tilde{B}$ is the 
block diagonal matrix
\[
\tilde{B} =
\operatorname{diag}\Big(
\big[\begin{smallmatrix} 0 & ib_1 \\ -ib_1 & 0 \end{smallmatrix}\big],
\ldots,
\big[\begin{smallmatrix} 0 & ib_m \\ -ib_m & 0 \end{smallmatrix}\big],
\underbrace{%
\big[\begin{smallmatrix} 0 & 0 \\ 0 & 0\vphantom{b_m} \end{smallmatrix}\big]
\ldots,
\big[\begin{smallmatrix} 0 & 0 \\ 0 & 0\vphantom{b_m} \end{smallmatrix}\big]%
}_{n-m\textup{ times}},0
\Big),
\]
is equal to
\begin{equation}\label{3.18odd}
\prod_{l=n-m}^{n-1}\frac{1}{(2l+1)!}\frac{{\Delta}_m^\even (\lambda)}{{\Delta}_m^\odd (b)} \prod_{l=1}^{m}(\lambda_l/b_l)^{2(n-m)+1} \det[e^{-\lambda_j/b_k}]_{j,k=1}^{m}.
\end{equation}
\end{corollary}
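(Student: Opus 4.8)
The plan is to transcribe the derivation of Corollary~\ref{cor4} from Proposition~\ref{prop5}, now starting from the odd-dimensional formula \eqref{3.16odd} of Proposition~\ref{prop5odd} and sending the parameters $b_l$ to zero one at a time. Thus I would take \eqref{3.16odd} with $n$ positive parameters $b_1,\dots,b_n$ and ordered eigenvalues $0<\lambda_1<\cdots<\lambda_n$, and study the limit $b_1\to 0^+$. Because of the ordering, the first column of $\det[e^{-\lambda_j/b_k}]_{j,k=1}^n$ is dominated by its top entry $e^{-\lambda_1/b_1}$, so to leading order this determinant factorises as $e^{-\lambda_1/b_1}\det[e^{-\lambda_j/b_k}]_{j,k=2}^n$, with a nonvanishing contribution only when $\lambda_1\to 0^+$ as well.

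Next I would record the asymptotics of the remaining factors. Exactly as in the even case, as $b_1,\lambda_1\to 0^+$ one has $\Delta^\odd_n(b)\sim b_1\prod_{k=2}^n b_k^2\,\Delta^\odd_{n-1}(\{b_j\}_{j=2}^n)$ and $\Delta^\even_n(\lambda)\sim\prod_{k=2}^n\lambda_k^2\,\Delta^\even_{n-1}(\{\lambda_j\}_{j=2}^n)$, whence
\[
\frac{\Delta^\even_n(\lambda)}{\Delta^\odd_n(b)}\longrightarrow\frac{1}{b_1}\prod_{l=2}^n\frac{\lambda_l^2}{b_l^2}\,\frac{\Delta^\even_{n-1}(\{\lambda_j\}_{j=2}^n)}{\Delta^\odd_{n-1}(\{b_j\}_{j=2}^n)}.
\]
The one structural difference from the even case is the extra factor $\prod_{l=1}^n(\lambda_l/b_l)$ appearing in \eqref{3.16odd}. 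Splitting off its $l=1$ piece and combining with the $1/b_1$ from the Vandermonde ratio and the $e^{-\lambda_1/b_1}$ from the determinant produces the combination $\frac{\lambda_1}{b_1^2}e^{-\lambda_1/b_1}$, which is the $\mathrm{Gamma}(2)$ density of scale $b_1$; it integrates to one and concentrates at the origin, hence converges to $\delta(\lambda_1)$ as $b_1\to 0^+$, replacing the convergence $\frac1{b_1}e^{-\lambda_1/b_1}\to\delta(\lambda_1)$ used for Corollary~\ref{cor4}. Integrating out $\lambda_1$ and relabelling $\{\lambda_j\}_{j=2}^n\mapsto\{\lambda_j\}_{j=1}^{n-1}$, $\{b_j\}_{j=2}^n\mapsto\{b_j\}_{j=1}^{n-1}$ leaves the surviving factor $\prod_{l=2}^n(\lambda_l/b_l)^{3}$, i.e. $\prod_{l=1}^{n-1}(\lambda_l/b_l)^{2(n-m)+1}$ with $m=n-1$, matching the exponent in \eqref{3.18odd}.

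This gives \eqref{3.18odd} in the case $m=n-1$, and iterating the same one-parameter limit $n-m$ times yields the general statement. At the $(r+1)$-th reduction one meets, after splitting off the $\lambda_1,b_1$ part, the density $\frac{\lambda_1^{2r+1}}{b_1^{2r+2}}e^{-\lambda_1/b_1}$, whose integral over $\lambda_1$ equals $(2r+1)!$; this normalisation constant is exactly what shifts the product $\prod_{l}\frac1{(2l+1)!}$ from lower index $r$ to $r+1$, so that after $n-m$ steps one is left with $\prod_{l=n-m}^{n-1}\frac1{(2l+1)!}$, together with the accumulated powers $\prod_{l=1}^m(\lambda_l/b_l)^{2(n-m)+1}$ and the ratio $\Delta^\even_m(\lambda)/\Delta^\odd_m(b)$, precisely as in \eqref{3.18odd}.

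The step I expect to require the most care is the justification, at each stage, that the determinant may be replaced by its top-left entry times the complementary minor---that is, that every other term in the cofactor expansion is genuinely subdominant as $b_1\to 0^+$ under the ordering $\lambda_1<\cdots<\lambda_n$---together with the tracking of the Gamma normalisations $(2r+1)!$ so that the explicit constant $\prod_{l=n-m}^{n-1}\frac1{(2l+1)!}$ emerges exactly. Both points are routine, the remainder being a verbatim adaptation of the even-dimensional argument.
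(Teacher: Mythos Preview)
Your proposal is correct and is precisely the approach the paper intends: the paper's proof is a single sentence stating that \eqref{3.18odd} follows from \eqref{3.16odd} by letting some of the $b_i$ (and correspondingly $\lambda_i$) tend to zero, in direct analogy with the derivation of Corollary~\ref{cor4} from Proposition~\ref{prop5}. You have simply spelled out the details of that limiting procedure, including the one new feature---the extra factor $\prod_l(\lambda_l/b_l)$ in \eqref{3.16odd}, which turns the approximate delta sequence into a Gamma-type density and accounts for the shift of the exponent from $2(n-m)$ to $2(n-m)+1$ and the corresponding change in the factorial product.
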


Finally, by assuming that the eigenvalue PDF of $B =: B_0$ in the above corollary is of the
polynomial ensemble form
\begin{equation}\label{f1odd}
\frac{1}{Z_m} \Delta_m^\even (b_j )
\det [ g_{j-1}(b_k) ]_{j,k=1}^m,
\end{equation}
we can extend the matrix product $X B X^T$ to involve an arbitrary number of Gaussian matrices, as in the even dimensional case.


\begin{corollary}
Let us consider a sequence of matrix products of the form
\begin{equation}\label{f3odd}
B_M = X_M \cdots X_1 B_0 X_1^T \cdots X_M^T,
\end{equation}
where $B_0$ is as stated in Corollary \ref{cor4odd} with $\{b_j\}_{j=1}^m$ distributed according to \eqref{f1odd}, and each $X_j$ $(j=1,2,\ldots,M)$ is a $(2(m+\nu_j)+1) \times (2(m+\nu_{j-1})+1)$ standard real Gaussian matrix with $\nu_j\geq \nu_{j-1}$ and $\nu_0=0$. Then, the PDF for the eigenvalues of $B_M$ is also of the polynomial ensemble type and is given by
\begin{equation}\label{f4odd}
\frac{1}{Z_m} \prod_{j=1}^M\prod_{l=\nu_j}^{m+\nu_j-1}  \frac{1}{(2l+1)!}\Delta^\even_m (\lambda)
\det \Big [ g_{j-1}^{(M)}(\lambda_k) \Big ]_{j,k=1}^m,
\end{equation}
where $g_j^{(M)}$ is defined recursively according to
\begin{equation}\label{f5odd}
g_j^{(s)}(\lambda) = \int_0^\infty e^{-b} b^{2 \nu_s } g_j^{(s-1)} \Big ( \frac{\lambda}{b} \Big ) \, d b
\end{equation}
with $g_j^{(0)}(\lambda) := g_j(\lambda)$.
\end{corollary}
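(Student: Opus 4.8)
The plan is to mirror, step for step, the inductive argument used for the even-dimensional counterpart (the passage from \eqref{f1} to \eqref{f4}), replacing the single-step transfer formula \eqref{3.18} by its odd-dimensional analogue \eqref{3.18odd}. First I would settle the base case $M=1$. Conditioning on the eigenvalues $\{b_j\}_{j=1}^m$ of $B_0$, the eigenvalue PDF of $B_1 = X_1 B_0 X_1^T$ is exactly \eqref{3.18odd} with $n = m+\nu_1$, hence $n-m=\nu_1$; multiplying by the input density \eqref{f1odd} and integrating over the ordered $\{b_j\}$ gives the marginal law of the eigenvalues $\{\lambda_j\}$ of $B_1$. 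The key simplification is that the Vandermonde ratio in \eqref{3.18odd} collapses: since $\Delta^\odd_m(b) = \prod_j b_j\,\Delta^\even_m(b)$, one has $\Delta^\even_m(b)/\Delta^\odd_m(b) = \prod_j b_j^{-1}$, which combines with the factor $\prod_l(\lambda_l/b_l)^{2\nu_1+1}$ to leave a $b$-integrand whose only $b$-dependence is $\prod_l b_l^{-(2\nu_1+2)}$ together with the two determinants $\det[e^{-\lambda_j/b_k}]$ and $\det[g_{j-1}(b_k)]$.

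Because this integrand is symmetric in $\{b_j\}$, I would replace the ordered integral by $1/m!$ times the integral over $(0,\infty)^m$, absorb $\prod_k b_k^{-(2\nu_1+2)}$ column-wise into the second determinant, and invoke Andreief's integration formula (as in \cite[Eq.~(5.17c)]{Fo10}), producing $\det[\int_0^\infty e^{-\lambda_k/b}\,b^{-(2\nu_1+2)} g_{j-1}(b)\,db]_{j,k=1}^m$. The change of variables $b \mapsto \lambda_k/u$ then turns each entry into $\lambda_k^{-(2\nu_1+1)} g_{j-1}^{(1)}(\lambda_k)$, where $g_{j-1}^{(1)}$ is precisely \eqref{f5odd} with $s=1$. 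Extracting the factor $\prod_k \lambda_k^{-(2\nu_1+1)}$ from the determinant, it cancels the prefactor $\prod_l \lambda_l^{2\nu_1+1}$ carried over from \eqref{3.18odd}; the surviving constant $\prod_{l=\nu_1}^{m+\nu_1-1}(2l+1)!^{-1}$ together with the unchanged normalization $1/Z_m$ reproduces \eqref{f4odd} for $M=1$.

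For the inductive step I would assume the eigenvalue PDF of $B_{s-1}$ has the form \eqref{f4odd} with $M$ replaced by $s-1$ and repeat the calculation above verbatim, with $g_{j-1}$ replaced by $g_{j-1}^{(s-1)}$ and $\nu_1$ by $\nu_s$; each transfer multiplies the accumulated constant by $\prod_{l=\nu_s}^{m+\nu_s-1}(2l+1)!^{-1}$ and advances the recursion \eqref{f5odd} by one index. The one point that needs care — and the main obstacle — is that $B_{s-1}$ genuinely lives in dimension $2(m+\nu_{s-1})+1$ with $\nu_{s-1}$ vanishing eigenvalue-pairs, whereas \eqref{3.18odd} is stated for an input of size $2m+1$. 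This is resolved by the equivalence already built into Corollary \ref{cor4odd}: the zero blocks of $B_{s-1}$ merely project out the corresponding columns of $X_s$, so $X_s B_{s-1} X_s^T$ has the same eigenvalue law as $\hat X_s \hat B_{s-1} \hat X_s^T$, where $\hat B_{s-1}$ is of size $2m+1$ carrying only the nonzero eigenvalues $\{\lambda_j\}$ and $\hat X_s$ is a $(2(m+\nu_s)+1)\times(2m+1)$ Gaussian matrix, returning us to the standard hypothesis of \eqref{3.18odd} with $n-m=\nu_s$. With this reduction in place the induction closes, and the formulas \eqref{f4odd}, \eqref{f5odd} follow.
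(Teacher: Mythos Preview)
Your proposal is correct and follows exactly the approach the paper intends: the paper states this corollary without proof, simply referring back to the even-dimensional argument (``as in the even dimensional case''), which in turn is only sketched in the text preceding \eqref{f2}--\eqref{f5} as ``Andreief's integration formula'' plus iteration. Your write-up supplies the details the paper omits---the cancellation $\Delta_m^\even(b)/\Delta_m^\odd(b)=\prod_j b_j^{-1}$, the change of variables $b\mapsto\lambda_k/u$ that produces the exponent $2\nu_s$ in \eqref{f5odd}, and the dimensional reduction for the inductive step via the $\tilde X\tilde B\tilde X^T$ equivalence in Corollary~\ref{cor4odd}---all of which are straightforward and consistent with the paper's implicit argument.
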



\subsection{Elementary anti-symmetric matrix for $B_0$}
Suppose $b_1 = \cdots = b_m = 1$ in (\ref{Bp}), that is,
\begin{equation}\label{def:B0odd}
B_0=\Big ( \Big ( \mathbb I_m \otimes \begin{bmatrix} 0 & i \\ -i & 0 \end{bmatrix} \Big )
\oplus [0] \Big ) .
\end{equation}
By an appropriate limiting procedure, formula
(\ref{3.18odd}) reduces to
$$
{1 \over 2^{m(m-1)/2}}
\prod_{l=1}^{m-1} {1 \over l!} \prod_{l=n-m}^{n-1} {1 \over (2l+1)!} \prod_{l=1}^m
\lambda_l^{2(n-m)+1} e^{- \lambda_l}
\prod_{1 \le j < k \le m} ( \lambda_k - \lambda_j) (\lambda_k^2 - \lambda_j^2).
$$
This is functionally identical to (\ref{3.19}), except that the exponent in the Laguerre weight has
been shifted up by 1. If we now consider the product (\ref{f3odd}) with $B_0$ given by \eqref{def:B0odd}, it follows that $g_{j-1}^{(1)}(\lambda)$ is given by (\ref{f6}) with the exponent increased by 1. Repeating the calculations of \S \ref{sec:ele}, using now the recursion (\ref{f5odd}), we arrive at the following result, which is a minor
variant of the previous findings.

\begin{proposition}\label{prop5.6}
Consider the product (\ref{f3odd}) with $B_0$ given in \eqref{def:B0odd}. With $\{ \lambda_j \}_{j=1}^m$ denoting the positive eigenvalues, the variables
$x_j = \lambda_j^2/2^{2M}$ are distributed as for the eigenvalues of the product ensemble
(\ref{product-even}), where $\{G_i\}$ are $m \times m$ complex random matrices with PDF
proportional to (\ref{4.33}), but with $\nu_{i} \mapsto \nu_{i} + \frac{1}{2}$.
\end{proposition}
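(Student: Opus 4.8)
The plan is to recognise that the present odd-dimensional problem reduces, at the level of its biorthogonal data, to the even-dimensional problem of \S\ref{sec:ele} under the single uniform substitution $\nu_i \mapsto \nu_i + \tfrac12$. First I would record the two inputs already isolated in the text preceding the statement: the initial function is \eqref{f6} with its exponent raised by one, namely $g_{j-1}^{(1)}(\lambda) = \lambda^{2\nu_1 + j} e^{-\lambda}$, and the recurrence is \eqref{f5odd}, whose kernel is $e^{-b} b^{2\nu_s}$ in place of the $e^{-b} b^{2\nu_s - 1}$ of \eqref{f5}. The key observation is that both modifications are induced by $\nu_i \mapsto \nu_i + \tfrac12$: this sends $\lambda^{2\nu_1 + j - 1} e^{-\lambda}$ to $\lambda^{2\nu_1 + j} e^{-\lambda}$ and $b^{2\nu_s - 1}$ to $b^{2\nu_s}$. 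Consequently the functions $g_j^{(M)}$ of the present case are exactly those of \S\ref{sec:ele} after this shift.

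Next I would propagate the shift through the chain of \S\ref{sec:ele}. Re-running the Meijer G convolution of Lemma \ref{lemma-g-ele} (equivalently, observing that each step of \eqref{f5odd} multiplies the Mellin transform by $\Gamma(z + 2\nu_s + 1)$ rather than $\Gamma(z + 2\nu_s)$) yields $g_j^{(M)}$ as a Meijer G-function with lower parameters $2\nu_M + 1, \dots, 2\nu_2 + 1, 2\nu_1 + j + 1$, i.e.\ \eqref{f7} with $\nu_i \mapsto \nu_i + \tfrac12$. Since the bimoments \eqref{bimoment-ele}, the biorthogonal polynomial \eqref{polynomial1-sum} with its hypergeometric and Meijer G forms \eqref{polynomial1-meijer}, and the biorthogonal function \eqref{polynomial2} all depend analytically (indeed linearly in the parameters) on the $\nu_i$, every one of these identities survives the replacement $\nu_i \mapsto \nu_i + \tfrac12$, and the biorthogonality \eqref{def:pq} persists. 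Thus the full biorthogonal system governing the eigenvalues of \eqref{f3odd} with $B_0$ as in \eqref{def:B0odd} is that of Corollary \ref{cor1even} after the shift.

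Finally I would invoke the identification carried out in the proof of Corollary \ref{cor1even}, where the real-side system \eqref{polynomial1-meijer}--\eqref{polynomial2} was matched, under $x = \lambda^2/2^{2M}$, with the Akemann et al.\ system \eqref{PnMeijerG}--\eqref{QkMeijerG} for the complex product \eqref{product-even} with weights \eqref{4.33}. Performing the same matching with the shifted parameters turns the exponents $\nu_{2i} - \tfrac12$ and $\nu_{2i-1}$ of \eqref{4.33} into $\nu_{2i}$ and $\nu_{2i-1} + \tfrac12$, which is precisely \eqref{4.33} after $\nu_i \mapsto \nu_i + \tfrac12$. As the Vandermonde prefactor and normalisation are fixed by the polynomial-ensemble form, agreement of the biorthogonal functions already forces equality in distribution of $\{x_j = \lambda_j^2/2^{2M}\}$ with the eigenvalues of the stated complex product.

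The one step demanding care is the legitimacy of the half-integer shift itself: I would check that the formulas of \S\ref{sec:ele}, derived there for integer $\nu_i$ from genuine matrix products, extend by analyticity in $\nu_i$ across the relevant range, and that the integrals defining $g_j^{(M)}$ and the bimoments still converge after the shift. This is immediate here because $\nu_i \mapsto \nu_i + \tfrac12$ only raises the power-law exponents at the origin, but it is the point that upgrades ``repeat the calculation'' to a rigorous argument.
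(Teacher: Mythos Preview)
Your proposal is correct and follows exactly the approach of the paper: recognise that both the initial datum $g_{j-1}^{(1)}$ and the recursion kernel in the odd case are obtained from their even counterparts \eqref{f6} and \eqref{f5} by the uniform shift $\nu_i\mapsto\nu_i+\tfrac12$, then propagate this shift through the calculations of \S\ref{sec:ele} and the identification with the complex product \eqref{product-even}. If anything, you have spelled out the mechanism (including the Mellin/Meijer G justification and the convergence check) more carefully than the paper, which simply says ``repeating the calculations of \S\ref{sec:ele}, using now the recursion \eqref{f5odd}''.
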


We remark that product matrices in Proposition \ref{prop5.6} can be written $G^\dagger \Sigma
G$, where $G$ has integer parameter $\nu_{2M}$ and $\Sigma = G_{2M-1}^\dagger \cdots G_1^\dagger
G_1 \cdots G_{2M-1}$ plays the role of a (random) covariance matrix.

\subsection{Hermitian Gaussian anti-symmetric matrix for $B_0$}
We know from \S \ref{S4.4} that for the even dimensional case, choosing $B_0$ in (\ref{f3}) as an
Hermitian Gaussian anti-symmetric matrix  gives an eigenvalue PDF which can be
related to the eigenvalue PDF for a certain product of complex Gaussian matrices. The same general feature carries over to the odd dimensional case.

The first point to note is that in the odd-dimensional case, with $B_0 = i A$ and $A$ a
$(2m+1) \times (2m+1)$ real anti-symmetric standard Gaussian random matrix, the positive
eigenvalue PDF of $B_0$ is given by the functional (\ref{f1}) with $g_j(b) = b^{2(j-1)+1} e^{-b^2}$.
This is to be substituted for $g_j^{(0)}(b)$ in (\ref{f5odd}), which then is to be computed recursively.
Comparison of (\ref{f5odd}) with (\ref{f5}), the latter specified by choosing $g_j^{(0)}(b)$ equal to the
value of $g_j(b)$ in (\ref{f8}), we see that $g_j^{(s)}(\lambda)$ as given by (\ref{f5odd}) is equal to
$\lambda$ times $g_j^{(s)}(\lambda)$ as given by (\ref{f5}). Thus, after the change of variables
$x_i = \lambda_i^2/4^M$, the joint eigenvalue PDF is given by (\ref{AIK-reappears}) with all the
indices in the bottom row of $G_{0,2M+1}^{2M+1,0}$ increased by $\frac{1}{2}$. Comparison
now with the results in \cite{AIK13}, a companion to the results of Proposition \ref{prop5.6}
is obtained.

\begin{proposition}\label{prop5.7}
Consider the product (\ref{f3odd}) with $B_0 = i A$ and $A$ a
$(2m+1) \times (2m+1)$ real anti-symmetric standard Gaussian random matrix.
With $\{ \lambda_j \}_{j=1}^m$ denoting the positive eigenvalues, the variables
$x_j = \lambda_j^2/2^{2M}$ are distributed as for the eigenvalues of the product ensemble
(\ref{product-odd}), where $\{G_i\}$ are $m \times m$ complex random matrices with PDF
proportional to (\ref{4.39}), but with $\nu_{i} \mapsto \nu_{i} + \frac{1}{2}$.
\end{proposition}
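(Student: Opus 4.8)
The plan is to run the argument of \S\ref{S4.4} essentially verbatim, tracking only the two modifications forced by the odd dimension. First I would record the initial datum: for $B_0 = iA$ with $A$ a $(2m+1)\times(2m+1)$ real anti-symmetric standard Gaussian, the positive eigenvalue PDF is of the polynomial ensemble form \eqref{f1} with seed $g_j(b) = b^{2(j-1)+1}e^{-b^2}$ (the reference \cite{FN09} that supplied \eqref{f8} also supplies this). Thus the odd-case seed is exactly $b$ times the even-case seed $b^{2(j-1)}e^{-b^2}$ of \eqref{f8}; moreover the odd recursion \eqref{f5odd} differs from the even recursion \eqref{f5} only in carrying the kernel weight $b^{2\nu_s}$ in place of $b^{2\nu_s-1}$, again a single extra power of $b$.

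The key lemma I would establish is the identity $g_j^{(s)}(\lambda) = \lambda\,\tilde g_j^{(s)}(\lambda)$ for all $s$, where $g_j^{(s)}$ denotes the odd-case iterate produced by \eqref{f5odd} and $\tilde g_j^{(s)}$ the even-case iterate of \eqref{f5} already solved in \S\ref{S4.4}. The base case $s=0$ is the observation of the previous paragraph. For the inductive step I would substitute the hypothesis into \eqref{f5odd} and note that the extra power of $b$ in the kernel combines with the argument $\lambda/b$ to pull out a factor of $\lambda$:
\[
\int_0^\infty e^{-b}\,b^{2\nu_s}\,\frac{\lambda}{b}\,\tilde g_j^{(s-1)}(\lambda/b)\,db
= \lambda\int_0^\infty e^{-b}\,b^{2\nu_s-1}\,\tilde g_j^{(s-1)}(\lambda/b)\,db = \lambda\,\tilde g_j^{(s)}(\lambda).
\]
Invoking the explicit even-case solution \eqref{f9} then gives $g_j^{(M)}(\lambda)$ equal to $\lambda$ times the right-hand side of \eqref{f9}, i.e.\ a Meijer G-function in $\lambda^2/4^M$ with lower parameters $\nu_1,\nu_1+\tfrac12,\dots,\nu_M,\nu_M+\tfrac12,j$, multiplied by the overall factor $\lambda$.

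Next I would insert this into the eigenvalue PDF \eqref{f4odd} and carry out the change of variables $x_i=\lambda_i^2/2^{2M}$. This is the one genuinely delicate point, and it is precisely where the odd case parts company with the even case. Under the substitution, $\Delta_m^\even(\lambda)=\prod_{j<k}(\lambda_k^2-\lambda_j^2)$ becomes a constant times $\Delta_m(x)$, while the Jacobian contributes $\prod_i x_i^{-1/2}$. In \S\ref{S4.4} this Jacobian factor was absorbed into the Meijer G-function through the index-shift identity (multiplication by $z^{c}$ raises every lower parameter by $c$), which is what produced the downward $\tfrac12$-shift of parameters visible in \eqref{AIK-reappears}. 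In the present case, however, the extra factor $\lambda_i$ in each row of the determinant furnishes $\prod_i\lambda_i\propto\prod_i x_i^{1/2}$, which exactly cancels the Jacobian factor. Consequently \emph{no} index shift occurs, and the lower parameters of the Meijer G-function are retained as $\nu_1,\nu_1+\tfrac12,\dots,\nu_M,\nu_M+\tfrac12,j$ — precisely the parameters of \eqref{AIK-reappears} raised by $\tfrac12$.

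Finally I would match the resulting PDF against the explicit product-ensemble result of Akemann et al.\ \cite{AIK13}. Since the biorthogonal Meijer G-parameters now read $\{\nu_\ell,\nu_\ell+\tfrac12\}_{\ell=1}^M$ together with $j$, in place of the $\{\nu_\ell-\tfrac12,\nu_\ell\}$ and $j-\tfrac12$ of the even case, comparison with \eqref{4.39} identifies the law with that of the eigenvalues of the product ensemble \eqref{product-odd} under the substitution $\nu_i\mapsto\nu_i+\tfrac12$, which is exactly the claim. I expect the only real obstacle to be the careful bookkeeping in the change-of-variables step — confirming that the $\prod_i\lambda_i$ supplied by the determinant and the $\prod_i x_i^{-1/2}$ supplied by the Jacobian cancel, so that the even-case $\tfrac12$-shift is \emph{suppressed} rather than applied or doubled; the inductive identity and the final identification are then routine transcriptions of \S\ref{S4.4}.
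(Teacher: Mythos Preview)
Your proposal is correct and follows essentially the same route as the paper: the paper likewise observes that the odd-case seed is $b$ times the even-case seed and that the odd recursion \eqref{f5odd} carries one extra power of $b$, concludes (without writing out the induction you made explicit) that the odd iterate equals $\lambda$ times the even iterate of \S\ref{S4.4}, and then notes that after the substitution $x_i=\lambda_i^2/4^M$ the extra factor of $\lambda$ exactly offsets the Jacobian so that the resulting PDF is \eqref{AIK-reappears} with every lower Meijer $G$ index raised by $\tfrac12$, whence the identification with \eqref{product-odd} under $\nu_i\mapsto\nu_i+\tfrac12$.
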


\section*{Acknowledgements}
We acknowledge support by the Australian Research Council through grant DP170102028 (PJF), the ARC Centre of Excellence for Mathematical and Statistical Frontiers (PJF, JRI), by
 the Natural Science Foundation of China \# 11771417, the Youth Innovation Promotion Association CAS  \#2017491, the Fundamental Research Funds for the Central Universities  \#WK0010450002 and Anhui Provincial Natural Science Foundation \#1708085QA03 (DZL), and by the National Natural Science Foundation of China \#11501120, the Program for Professor of Special Appointment (Eastern Scholar) at Shanghai Institutions of Higher Learning, Grant \#EZH1411513 from Fudan University (LZ).
Mario Kieburg is to be thanked for discussions which motivated \S \ref{sect3.5}.


\providecommand{\bysame}{\leavevmode\hbox to3em{\hrulefill}\thinspace}
\providecommand{\MR}{\relax\ifhmode\unskip\space\fi MR }
\providecommand{\MRhref}[2]{%
  \href{http://www.ams.org/mathscinet-getitem?mr=#1}{#2}
}
\providecommand{\href}[2]{#2}

\end{document}